\theoremstyle{plain}
\newtheorem{theorem}{Theorem}[section]
\newtheorem{lemma}{Lemma}[section]
\newtheorem{proposition}{Proposition}
\theoremstyle{definition}
\newtheorem{definition}{Definition}
\newtheorem{assumption}{Assumption}
\theoremstyle{remark}
\let\vec=\mathbf
\newcommand{\vct}[1]{\bm{#1}}
\newcommand{\mtx}[1]{\bm{#1}}
\renewcommand{\vec}[1]{{\boldsymbol{#1}}}
\newcommand{\rank}{\operatorname{rank}}
\newcommand{\Diag}{\operatorname{Diag}}
\renewcommand{\P}{\operatorname{\mathbb{P}}}
\newcommand{\E}{\operatorname{\mathbb{E}}}
\newcommand{\var}{\textrm{var}}
\newcommand{\Var}{\textrm{Var}}
\newcommand{\diag}{\text{diag}}
\newcommand{\N}{\mathcal{N}}
\newcommand{\widesim}[2][1.5]{
  \mathrel{\overset{#2}{\scalebox{#1}[1]{$\sim$}}}
}
\DeclareFontFamily{U}{mathx}{\hyphenchar\font45}
\DeclareFontShape{U}{mathx}{m}{n}{
      <5> <6> <7> <8> <9> <10>
      <10.95> <12> <14.4> <17.28> <20.74> <24.88>
      mathx10
      }{}
\DeclareSymbolFont{mathx}{U}{mathx}{m}{n}
\DeclareMathAccent{\widecheck}{0}{mathx}{"71}
\DeclareMathAccent{\wideparen}{0}{mathx}{"75}
\definecolor{xl}{RGB}{200,50,120}
\begin{document}

\title{Selecting the Number of Communities for Weighted Degree-Corrected Stochastic Block Models}
\author{Yucheng Liu}
\author{Xiaodong Li}
\affil{Department of Statistics, University of California, Davis,  Davis, CA 95616, USA}

\date{}       

\maketitle

\begin{abstract}
We investigate how to select the number of communities for weighted networks without a full likelihood modeling. First, we propose a novel weighted degree-corrected stochastic block model (DCSBM), where the mean adjacency matrix is modeled in the same way as in the standard DCSBM, while the variance profile matrix is assumed to be related to the mean adjacency matrix through a given variance function. Our method of selecting the number of communities is based on a sequential testing framework. In each step, the weighted DCSBM is fitted via some spectral clustering method. A key component of our method is matrix scaling on the estimated variance profile matrix. The resulting scaling factors can be used to normalize the adjacency matrix, from which the test statistic is then obtained. Under mild conditions on the weighted DCSBM, our proposed procedure is shown to be consistent in estimating the true number of communities. Numerical experiments on both simulated and real-world network data demonstrate the desirable empirical properties of our method.
\end{abstract}

\smallskip
\noindent \textbf{Keywords:}  rank selection, weighted networks, DCSBM, sequential testing, variance profile, matrix scaling

\section{INTRODUCTION}
\label{sec:setup}

In network analysis, community detection is regarding how to partition the vertices of a graph into clusters with similar connection patterns. This problem has numerous applications in a variety of fields, including applied physics, sociology, economics, and biology; see the survey paper \citet{fortunato2010community}. An important problem in community detection is selecting the number of communities. To answer this question, various methods have been proposed for unweighted networks, such as sequential testing \citep[]{lei2016goodness, jin2022optimal, han2023universal}, spectral thresholding \citep{le2015estimating}, and model comparison based methods \citep[]{daudin2008mixture, wang2017likelihood, saldana2017many, chen2018network, li2020network, hu2020corrected, ma2021determining}, etc. In particular, sequential testing methods are usually based on goodness-of-fit testing methods \citep[e.g.,][]{gao2017testing, jin2018network, jin2021optimal, hu2021using}.

Most existing methods for selecting the number of communities are only applicable to unweighted networks. In contrast, the problem of determining the number of communities for weighted networks is little studied. In fact, weighted networks are common in practice and often reveal more refined community structures; see, for example, \citet{newman2004finding}, \citet{ball2011efficient} and \citet{newman2016estimating}. A challenge for rank selection in weighted networks lies in the modeling of the networks, since likelihood based methods could be very restrictive for this problem.

In this paper, we propose a generic model for weighted networks without modeling the likelihood. To be specific, we only model the mean structure in the form of the degree-corrected stochastic block model (DCSBM) as in \citet{karrer2011stochastic}, and the entrywise variances as functions of the corresponding means. 
Due to its connection to the standard DCSBM, our proposed model is referred to as the weighted DCSBM.
To see why the variance profile matrix is helpful for selecting the number of communities, note that the adjacency matrix can be decomposed into $\mtx{A} = \mtx{M} + \left(\mtx{A} -\mtx{M}\right)$, where $\mtx{M}$ is the mean adjacency matrix. If we denote $K$ as the number of communities, then $\mtx{M}$ is of rank $K$, and $\mtx{A}$ can be viewed as a perturbation of a rank-$K$ matrix. Denote the nonzero eigenvalues of $\mtx{M}$ as $|\lambda_1(\mtx{M})| \geq \cdots \geq |\lambda_K(\mtx{M})|>0$, and all eigenvalues of $\mtx{A}$ as $|\lambda_1(\mtx{A})| \geq \cdots \geq |\lambda_n(\mtx{A})|$. Assume that we know an explicit and tight upper bound $\|\mtx{A} - \mtx{M}\| \leq \tau$. Then, we have $|\lambda_{K+1}(\mtx{A})| \leq \|\mtx{A} - \mtx{M}\| \leq \tau$. Furthermore, if we assume the signal-to-noise ratio of the weighted DCSBM is large enough, which is $|\lambda_K(\mtx{M})| \gg \tau$, then $|\lambda_{K}(\mtx{A})| \geq |\lambda_{K}(\mtx{M})| - \|\mtx{A} - \mtx{M}\| \gg \tau$. Then in step $m =1, 2, ...$ of a sequential testing procedure, we can choose the test statistic as $T_{n, m} = |\lambda_{m+1}(\mtx{A})|$, which satisfies $T_{n, m} \gg \tau$ for $m \leq K-1$, and $T_{n, m} \leq \tau$ for $m=K$. Consequently, we can use $T_{n, m} \leq \tau$ as the stopping rule to select the number of communities.

However, such an explicit and tight bound $\|\mtx{A} - \mtx{M}\| \leq \tau$ is generally not available. One natural idea is to consider a normalized version of the noise matrix, which allows us to derive a tight and explicit upper bound for its spectral norm. Here we borrow an idea from \citet{landa2022biwhitening}, which studies how to reveal the rank of a Poisson data matrix by spectral truncation with the assistance of matrix scaling on the variance profile matrix. Returning to our problem, if the variance profile matrix $\mtx{V}$ is known, its entries are $V_{ij}= \var(A_{ij}) = \var(A_{ij} - M_{ij})$. It is known that there exists a diagonal matrix $\mtx{\Psi}  = \Diag(\psi_1, \ldots, \psi_n)$, such that $\mtx{\Psi} \mtx{V}\mtx{\Psi}$ is doubly stochastic, i.e., every row sum of $\mtx{\Psi} \mtx{V}\mtx{\Psi}$ is $1$ \citep[e.g.,][]{knight2014symmetry}. With this diagonal scaling matrix $\mtx{\Psi}$, recent results \citep[e.g.,][]{latala2018dimension} in random matrix theory guarantee that $\left\|\mtx{\Psi}^{\frac{1}{2}}(\mtx{A} - \mtx{M})\mtx{\Psi}^{\frac{1}{2}} \right\| \leq 2+\epsilon$ for some small positive constant $\epsilon$.

Based on the above heuristic, we propose the following sequential testing procedure. For each $m=1, 2, 3, \ldots$, we first group the nodes into $m$ distinct communities using some spectral clustering method, e.g., SCORE proposed in \citet{jin2015fast} or the regularized spectral clustering (RSC) procedure proposed in \citet{amini2013pseudo}. With the estimated groups, we obtain the estimated mean adjacency matrix $\widehat{\mtx{M}}^{(m)}$ by fitting the DCSBM, and further derive the estimated variance profile matrix $\widehat{\mtx{V}}^{(m)}$ using the variance-mean relationship. Next, we find a diagonal scaling matrix $\widehat{\mtx{\Psi}}^{(m)}$ such that $\widehat{\mtx{\Psi}}^{(m)} \widehat{\mtx{V}}^{(m)} \widehat{\mtx{\Psi}}^{(m)}$ is doubly stochastic. The test statistic is defined as $T_{n, m} = \left|\lambda_{m+1} \left(\left(\widehat{\mtx{\Psi}}^{(m)} \right)^{\frac{1}{2}} \mtx{A} \left(\widehat{\mtx{\Psi}}^{(m)} \right)^{\frac{1}{2}}\right)\right|$. For each $m$, we stop the iterative procedure if $T_{n, m} < 2+\epsilon$, and then report the estimated number of communities as $\widehat{K}=m$.

\subsection{Related Work}
The stochastic block model (SBM) by \citet{holland1983stochastic} and its variants, such as the degree-corrected stochastic block model (DCSBM) by \citet{karrer2011stochastic}, 
have been widely used to model community structures in networks. Many community detection methods have been proposed in the literature, with examples including modularity and likelihood based methods \citep[e.g.,][]{newman2004finding, bickel2009nonparametric, zhao2012consistency, amini2013pseudo, bickel2013asymptotic, le2016optimization, zhang2016minimax, zhang2020theoretical}, spectral methods \citep[e.g.,][]{rohe2011spectral, amini2013pseudo, jin2015fast, lei2015consistency, joseph2016impact, gulikers2017spectral, gao2018community, abbe2020entrywise, zhang2020detecting, deng2021strong, li2022hierarchical, lei2020consistency}, convex optimization methods \citep[e.g.,][]{cai2015robust, Vershynin14, abbe2015exact, chen2018convexified, li2021convex}, and many other methods such as Bayesian approaches.

Matrix scaling has been used in \citet{landa2022biwhitening} to reveal the rank of a Poisson data matrix, where the variance profile matrix is the same as the mean adjacency matrix and thus can be estimated by the observed count data matrix. In contrast, our work is not restricted to the Poisson data matrix. In fact, we have a parametric model for the mean structure in our weighted DCSBM, whereas there is no such parametric mean modeling in \citet{landa2022biwhitening}.

Sequential testing based on stepwise SBM/DCSBM fitting has been commonly used in the literature of rank selection for standard binary networks \citep[e.g.,][]{lei2016goodness, wang2017likelihood, hu2020corrected, ma2021determining, hu2021using, jin2022optimal}. However, these methods either rely on the binary structure of the unweighted network or on modeling the complete likelihood of the random network, and thus cannot be straightforwardly extended to our setup.

To address some technical challenges in the underfitting case $m<K$, we generalize the Nonsplitting Property established in \citet{jin2022optimal} for the spectral clustering approach SCORE proposed in \citet{jin2015fast} to the weighted DCSBM. This is the primary reason why our main results are established using SCORE for spectral clustering, although extensions to other spectral methods may be possible, which we leave for future investigation.

\subsection{Organization of the Paper}
This paper is organized as follows. In Section \ref{sec:DCSBM}, we extend the standard DCSBM to model the mean structure of weighted networks, and link the mean adjacency matrix and the variance profile matrix with a variance function. Section \ref{sec:methodology} introduces a stepwise DCSBM fitting procedure with a spectral test statistic based on variance profile matrix scaling. The main results on the consistency of our procedure are presented in Section \ref{sec:theory} under certain assumptions on the weighted DCSBM. In Section \ref{sec:experiments}, we demonstrate the empirical properties of our proposed procedure with both synthetic and real-world weighted networks. The proofs of our main results are given in Appendix \ref{sec:proof_main}. Preliminary results and proofs of the technical lemmas are also included in the supplementary material.

\section{WEIGHTED DEGREE-CORRECTED STOCHASTIC BLOCK MODELS}
\label{sec:DCSBM}
Our first task is to extend the standard DCSBM proposed in \citet{karrer2011stochastic} to a weighted DCSBM which accommodates networks with nonnegative weights. The edges in our weighted DCSBM are assumed to be independent. Rather than specifying the exact likelihood of the random graph with weighted edges as in the standard DCSBM, the weighted DCSBM relies on specifying the first two moments, namely, the mean adjacency matrix and the variance profile matrix. 

We begin by introducing some notation for the network. Let $\mtx{A}$ be the adjacency matrix of the weighted network with $n$ nodes, which belong to $K$ separate communities. Denote $\mathcal{N}_1, \ldots, \mathcal{N}_K$ as the underlying communities, with respective cardinalities $n_1, \ldots, n_K$, and thereby $n = n_1 + \cdots + n_K$. Let $\phi:[n]\rightarrow[K]$ be the community membership function of node, such that $\phi(i) = k$ if and only if node $i$ belongs to community $\mathcal{N}_k$. We also use the vector $\vct{\pi}_i \in \mathbb{R}^K$ to indicate the community belonging of node $i$ by letting $\pi_{i}(k)=1$ if $i \in \mathcal{N}_k$ and $\pi_{i}(k)=0$ otherwise. Denote $\mtx{\Pi} = [\vct{\pi}_1, \ldots, \vct{\pi}_n]^\top \in \mathbb{R}^{n \times K}$ as the community membership matrix.

The mean adjacency matrix is modeled similarly as in the standard DCSBM. Note that the network is allowed to have self-loops for simplicity of analysis without loss of generality. Specifically, let $\mtx{B}$ be a $K \times K$ symmetric community connectivity matrix with positive entries and diagonal entries $B_{kk}=1$ for $k=1,\ldots, K$
to ensure identifiability. Also, let $\theta_1, \ldots, \theta_n >0$ be the heterogeneity parameters. Denote $\vct{\theta} = (\theta_1, \ldots, \theta_n)^\top$ and  $\mtx{\Theta}=\diag (\theta_1, \ldots, \theta_n)$. 
Then, the entries of the expected mean matrix are parameterized as
\begin{equation}
\label{eq:DCSBM}
\E[A_{ij}] \coloneqq M_{ij} \coloneqq \theta_i \theta_j B_{\phi(i) \phi(j)} = \theta_i \theta_j \vct{\pi}_i^\top \mtx{B} \vct{\pi}_j, 
\end{equation}
for $1\leq i \leq j \leq n.$
In matrix form, the mean adjacency matrix can be represented as
\begin{equation}
\label{eq:DCSBM_mean}
\mtx{M} = \mtx{\Theta} \mtx{\Pi} \mtx{B} \mtx{\Pi}^\top \mtx{\Theta}.
\end{equation}

In modeling the second order moments, we denote $V_{ij}= \var(A_{ij})$ and assume that it is a function of the corresponding mean by some variance function $\nu(\cdot)$, i.e., $V_{ij} = \nu(M_{ij})$. By denoting the variance profile matrix $\mtx{V} = [V_{ij}]_{i, j = 1}^n$, we can also represent the variance-mean relationship as $\mtx{V} = \nu (\mtx{M})$, where $\nu$ is viewed as an entrywise operation.

The variance function $\nu(\cdot)$ is known for some common distributions, such as $\nu(\mu) = \mu (1 - \mu)$ for the Bernoulli case and $\nu(\mu) = \mu$ for the Poisson case. In the present work, we assume $\nu$ is known. A potential direction for future study is how to estimate $\nu$ in a data-dependent manner, but we leave this for future investigation.

\section{METHODOLOGY}
\label{sec:methodology}
In this section, we will outline the details of our stepwise procedure for estimating the number of communities in weighted networks, referred to as Stepwise Variance Profile Scaling (SVPS). SVPS is a stepwise method that relies on fitting the weighted DSCBM with an increasing sequence of candidate numbers of communities. In other words, for each positive integer $m = 1, 2, \ldots$, we aim to test $H_0: K=m$, where $K$ is the true rank of the weighted DCSBM.

\subsection{Stepwise Weighted DCSBM Fitting}
\label{sec:dcsbm_fitting}

With each hypothetical number of communities $m$, we apply a standard community detection method, such as SCORE \citep{jin2015fast} or RSC \citep{amini2013pseudo, joseph2016impact}, to obtain $m$ estimated communities $\widehat{\mathcal{N}}_1^{(m)}, \ldots, \widehat{\mathcal{N}}_m^{(m)}$. Subsequently, the DCSBM parameters $\vct{\theta}$, $\mtx{B}$ and $\mtx{W}$ can be directly estimated in a plug-in manner.

The estimates of $\vct{\theta}$ and $\mtx{B}$ are the same as in the standard DCSBM, and here we give a review of their derivations in \citet{jin2022optimal}. Let us first see how to represent $\vct{\theta}$ and $\mtx{B}$ with the mean adjacency matrix $\mtx{M}$, the true communities $\mathcal{N}_1, \ldots, \mathcal{N}_K$, and the expected degrees. Decompose $\vct{\theta}$ as $\vct{\theta} = \vct{\theta}_1 + \cdots + \vct{\theta}_K$, where $\vct{\theta}_k \in \mathbb{R}^n$ for $k=1, \ldots, K$ such that $\vct{\theta}_k(i) = \theta_i$ if $i \in \mathcal{N}_k$ and $\vct{\theta}_k(i) = 0$ otherwise. We can similarly decompose the $n$-dimensional all-one vector into $\vct{1}_n = \vct{1}_1 + \cdots + \vct{1}_K$ such that $\vct{1}_k(j) = 1$ if $j \in \mathcal{N}_k$ and $\vct{1}_k(j) = 0$ otherwise. 
It is easy to verify that for $1 \leq k, l \leq K$, $\vct{1}_k^\top \mtx{M} \vct{1}_l = B_{kl} \|\vct{\theta}_k\|_1 \|\vct{\theta}_l\|_1$. By the assumption $B_{kk} = 1$ for $k=1, \ldots, K$, the above equality implies $\|\vct{\theta}_k\|_1 = \sqrt{\vct{1}_k^\top \mtx{M} \vct{1}_k}$,
which further gives
\begin{equation}
\label{eq:B_kl}
B_{kl} = \frac{\vct{1}_k^\top \mtx{M} \vct{1}_l}{\|\vct{\theta}_k\|_1 \|\vct{\theta}_l\|_1} = \frac{\vct{1}_k^\top \mtx{M} \vct{1}_l}{\sqrt{\vct{1}_k^\top \mtx{M} \vct{1}_k}\sqrt{\vct{1}_l^\top \mtx{M} \vct{1}_l}}.
\end{equation}

Denote the degree of node $i \in \mathcal{N}_k$ as $d_i = \sum_{j=1}^n A_{ij}$. Its expectation, referred to as the population degree, is given by
\begin{align*}
d_i^* & \coloneqq \E[d_i] = \theta_i \left(B_{k1}\|\vct{\theta}_1\|_1 + B_{k2}\|\vct{\theta}_2\|_1 + \cdots + B_{kK}\|\vct{\theta}_K\|_1\right)
\\
&= \theta_i \left(\frac{\vct{1}_{k}^\top \mtx{M} \vct{1}_1}{\sqrt{\vct{1}_{k}^\top \mtx{M} \vct{1}_k}} + \cdots + \frac{\vct{1}_k^\top \mtx{M} \vct{1}_K}{\sqrt{\vct{1}_k^\top \mtx{M} \vct{1}_k}} \right)
= \theta_i \frac{\vct{1}_k^\top \mtx{M} \vct{1}_n}{\sqrt{\vct{1}_k^\top \mtx{M} \vct{1}_k}}.
\end{align*}
This implies that the degree-correction parameter $\theta_i$ can be expressed as
\begin{equation}
\label{eq:theta}
\theta_i = \frac{\sqrt{\vct{1}_k^\top \mtx{M} \vct{1}_k}}{\vct{1}_k^\top \mtx{M} \vct{1}_n} d_i^*, \quad i \in \mathcal{N}_k.
\end{equation}

With \eqref{eq:B_kl} and \eqref{eq:theta}, we obtain plug-in estimates of $\vct{\theta}$ and $\mtx{B}$ by replacing the true community partition $\mathcal{N}_1, \ldots, \mathcal{N}_K$ with the estimated partition $\widehat{\mathcal{N}}_1^{(m)}, \ldots, \widehat{\mathcal{N}}_m^{(m)}$, replacing $\mtx{M}$ with $\mtx{A}$, and replacing $d_i^*$ with $d_i$. In analogy to $\vct{1}_n = \vct{1}_1 + \cdots + \vct{1}_K$, we decompose the all-one vector as the sum of indicator vectors corresponding to the estimated communities:
$\vct{1}_n = \hat{\vct{1}}_1^{(m)} + \cdots + \hat{\vct{1}}_m^{(m)}$, where for each $j=1, \ldots, n$ and $k=1, \ldots, m$, $\hat{\vct{1}}_k^{(m)}(j)=1$ if $j \in \widehat{\mathcal{N}}_k^{(m)}$ and $\hat{\vct{1}}_k^{(m)}(j)=0$ otherwise. Then, the plug-in estimates are
\begin{equation}
\label{eq:theta_hat}
\hat{\theta}^{(m)}_i \coloneqq \frac{\sqrt{(\hat{\vct{1}}_k^{(m)})^\top \mtx{A}\hat{\vct{1}}_k^{(m)}}}{(\hat{\vct{1}}_k^{(m)})^\top \mtx{A} \vct{1}_n} d_i, \text{~}  k=1,\ldots, m \text{~and~} i \in \widehat{\mathcal{N}}_k^{(m)},
\end{equation}
and
\begin{align}
\label{eq:B_hat}
    & \quad \quad \widehat{B}_{kl}^{(m)} \coloneqq \frac{(\hat{\vct{1}}^{(m)}_k)^\top \mtx{A} \hat{\vct{1}}^{(m)}_l}{{\|\hat{\vct{\theta}}^{(m)}_k\|_1 \|\hat{\vct{\theta}}^{(m)}_l\|_1}} \nonumber \\
    = & \frac{(\hat{\vct{1}}^{(m)}_k)^\top \mtx{A} \hat{\vct{1}}^{(m)}_l}{\sqrt{(\hat{\vct{1}}^{(m)}_k)^\top \mtx{A} \hat{\vct{1}}^{(m)}_k}\sqrt{(\hat{\vct{1}}^{(m)}_l)^\top \mtx{A} \hat{\vct{1}}^{(m)}_l}}, \text{~}  1\leq k, l \leq m.
\end{align}
Then we have the estimated mean matrix
\begin{equation}
\label{eq:M_hat}
\widehat{M}_{ij}^{(m)} \coloneqq \hat{\theta}_i^{(m)}\hat{\theta}_j^{(m)} \widehat{B}_{kl}^{(m)} = \frac{(\hat{\vct{1}}^{(m)}_k)^\top \mtx{A} \hat{\vct{1}}^{(m)}_l}{(\hat{\vct{1}}_k^{(m)})^\top \mtx{A} \vct{1}_n  \cdot  (\hat{\vct{1}}_l^{(m)})^\top \mtx{A} \vct{1}_n } d_i d_j,
\end{equation}
for any $1 \leq i, j \leq n$ with $i \in  \widehat{\mathcal{N}}_k^{(m)}$ and $j \in  \widehat{\mathcal{N}}_l^{(m)}$. 
By denoting the community membership matrix corresponding to $\widehat{\mathcal{N}}_1^{(m)}, \ldots, \widehat{\mathcal{N}}_m^{(m)}$ as $\widehat{\mtx{\Pi}}^{(m)} \in \mathbb{R}^{n \times m}$, the matrix form of mean estimation is (omitting the superscripts)
\begin{equation}
\label{eq:M_hat_matrix}
\widehat{\mtx{M}}^{(m)} = \widehat{\mtx{\Theta}} \widehat{\mtx{\Pi}} \widehat{\mtx{B}} \widehat{\mtx{\Pi}}^\top \widehat{\mtx{\Theta}}.
\end{equation}

With the estimated mean matrix $\widehat{\mtx{M}}^{(m)}$, we have the residual matrix $\mtx{R}^{(m)} = \mtx{A} - \widehat{\mtx{M}}^{(m)}$ and the estimated variance profile matrix $\widehat{\mtx{V}}^{(m)} = \nu (\widehat{\mtx{M}}^{(m)})$. In the case where the variance function is unknown, we substitute $\nu(\cdot)$ with the estimated variance function $\widehat{\nu}(\cdot)$.

\subsection{Variance Profile Scaling and Spectral Statistic}
\label{sec:matrix_scaling}
In Section \ref{sec:setup}, we have introduced how to define the spectral test statistic based on variance profile matrix scaling. Here, we briefly introduce it again. For the variance profile matrix $\mtx{V}$, assume $\mtx{\Psi}  = \Diag(\vct{\psi}) = \Diag(\psi_1, \ldots, \psi_n)$ is a diagonal matrix such that $\mtx{\Psi} \mtx{V}\mtx{\Psi}$ is doubly stochastic, i.e.,
\begin{equation}
\label{eq:pop_scaling}
\sum_{i=1}^n V_{ij} \psi_i \psi_j = 1, \quad \forall j=1,\ldots, n.
\end{equation}
The uniqueness and existence of such $\mtx{\Psi}$ and related algorithms are well-known in the literature; see, e.g., \citet{sinkhorn1967diagonal, knight2014symmetry}.

Analogously, in the $m$-th step of SVPS, we choose the scaling factors $\hat{\psi}_1^{(m)}, \ldots, \hat{\psi}_n^{(m)}$, such that $\widehat{\mtx{\Psi}}^{(m)} \widehat{\mtx{V}}^{(m)} \widehat{\mtx{\Psi}}^{(m)}$ is doubly stochastic,
where $\widehat{\mtx{\Psi}}^{(m)} = \Diag(\hat{\psi}_1^{(m)}, \ldots, \hat{\psi}_n^{(m)})$. In other words, 
\begin{equation}
\label{eq:sample_scaling}
\sum_{i=1}^n \widehat{V}_{ij} \hat{\psi}_i \hat{\psi}_j = 1, \quad \forall j=1,\ldots, n.
\end{equation}
Define the test statistic as 
\[
T_{n, m} = \left|\lambda_{m+1} \left(\left(\widehat{\mtx{\Psi}}^{(m)} \right)^{\frac{1}{2}} \mtx{A} \left(\widehat{\mtx{\Psi}}^{(m)} \right)^{\frac{1}{2}}\right)\right|.
\]
For $m=1, 2, \ldots$, we stop the iterative procedure if $T_{n, m} < 2+\epsilon$ for some prespecified small constant $\epsilon$, and then obtain the estimated number of communities as $\widehat{K}=m$.

\section{MAIN RESULTS}
\label{sec:theory}
In this section, we aim to establish the consistency of SVPS in selecting the number of communities under the weighted DCSBM. Our main results consist of two parts: (1) in the null case $m=K$, we will show that $T_{n, m} \leq 2 + o_p(1)$; (2) in the underfitting case $m<K$, we will show that $T_{n, m} \gg O_p(1)$. Before presenting the main results, we first introduce a sequence of conditions on the weighted DCSBM under which our consistency analysis is conducted.

\subsection{Assumptions}
Note that the conditions imposed on the weighted DCSBM are captured by a constant $c_0$.

\begin{assumption}
\label{ass:DCSBM}
Consider the weighted DCSBM described in Section \ref{sec:DCSBM}. Denote $\theta_{\max} = \max\{\theta_1, \ldots, \theta_n\}$ and $\theta_{\min} = \min\{\theta_1, \ldots, \theta_n\}$. Assume the following conditions hold:
\begin{itemize}
\item~[Fixed rank] The true number of communities $K$ is fixed.
\item~[Balancedness] 
\begin{equation}
\label{eq:balance_ass}
    \min_{1 \leq k \leq K} \frac{n_k}{n} \geq c_0 \quad  \text{and} \quad \frac{\theta_{\min}}{\theta_{\max}} \geq c_0. 
\end{equation}

\item~[Sparseness] 
\begin{equation}
\label{eq:sparseness}
\frac{1}{c_0} \geq \theta_{\max} \geq \theta_{\min} \geq \frac{\log^3 n }{\sqrt{n}}. 
\end{equation}

\item~[Community connectivity]
The $K \times K$ matrix $\mtx{B}$ is fixed, and its entries and eigenvalues satisfy
\begin{equation}
\label{eq:B_entry_ass}
\begin{cases}
B_{kk} =1 \text{~~~for~~} k=1, \ldots, K, 
\\
c_0 \leq B_{kl} \leq 1 \text{~~for~} 1\leq k, l \leq K,
\\
\lambda_1(\mtx{B}) > |\lambda_2(\mtx{B})| \geq \cdots \geq |\lambda_K(\mtx{B})| \geq c_0 > 0.
\end{cases}
\end{equation}

\item~[Variance-mean function] The function $\nu(\cdot)$ satisfies
\begin{equation}
\label{eq:variance_linearity}
c_0 \mu \leq \nu(\mu) \leq \mu/c_0
\quad \text{and} \quad
\text{$\nu(\cdot)$ is $1/c_0$-Lipschitz}.
\end{equation}

\item~[Bernstein condition]
For any $i \leq j$ and any integer $p \geq 2$, there holds
\begin{equation}
\label{eq:bernstein}
\E \left[|A_{ij} - M_{ij}|^p \right] \leq \left(\frac{p!}{2}\right) R(c_0)^{p-2} \nu(M_{ij}),
\end{equation}
where $R(c_0)$ is a constant only depending on $c_0$.

\end{itemize}
\end{assumption}

In fact, \eqref{eq:bernstein} is the standard Bernstein condition, which holds for various common distributions. As an example, the following result shows that the Poisson distribution satisfies this condition.
\begin{proposition}
\label{lem:poisson_moment_bd}
Let $X \sim \text{Poisson}(\lambda)$, where $\lambda \leq C(c_0)$ and $C(c_0)$ is a constant only depending on $c_0>0$. Then, for any integer $p \geq 2$, there holds
\[
\E \left[|X - \lambda|^p \right] \leq \left(\frac{p!}{2}\right) R(c_0)^{p-2} \lambda,
\]
where $R(c_0)$ is a constant only depending on $c_0$.
\end{proposition}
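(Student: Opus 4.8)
The plan is to reduce the claim to controlling the central absolute moments of $Y \coloneqq X - \lambda$ and to exploit the exact polynomial structure of Poisson central moments in $\lambda$. Writing the centered moment generating function $\E[e^{tY}] = \exp\!\big(\lambda(e^t - 1 - t)\big)$ and differentiating at $t=0$ (Fa\`a di Bruno, equivalently the Touchard recurrence $\mu_{p+1} = \lambda\sum_{j=0}^{p-1}\binom{p}{j}\mu_j$), I would first record the structural fact that the signed central moments $\mu_p \coloneqq \E[Y^p]$ are polynomials in $\lambda$ with nonnegative coefficients and, for $p\ge 2$, no constant term. Concretely $\mu_p = \sum_{\pi}\lambda^{b(\pi)}$, where the sum runs over all set partitions $\pi$ of $\{1,\dots,p\}$ into blocks of size at least $2$ and $b(\pi)$ is the number of blocks, so that $1\le b(\pi)\le p/2$. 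This representation is precisely what makes the linear-in-$\lambda$ behavior (matching $\var(X)=\lambda$) explicit.

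Next, for even $p$ I would bound the even moments directly. Since $\lambda \le C(c_0)$ and every surviving partition has at least one block, I factor out a single $\lambda$: $\mu_p = \lambda\sum_\pi \lambda^{b(\pi)-1} \le \lambda\,\max(1,C(c_0))^{p/2}\,\#\{\text{such }\pi\}$. Bounding the number of partitions by the Bell number and using $B_p \le p!$ gives $\E[|Y|^p] = \mu_p \le \lambda\, p!\, R_1^{\,p}$ for even $p$, with $R_1 = R_1(c_0)$.

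For odd $p$ the signed moment no longer equals the absolute moment, so I would interpolate between the two neighboring even orders using Lyapunov's inequality (log-convexity of $q \mapsto \E[|Y|^q]$): $\E[|Y|^p] \le \E[|Y|^{p-1}]^{1/2}\,\E[|Y|^{p+1}]^{1/2}$. The key point is that this \emph{symmetric} interpolation retains a genuine $\lambda$ factor, since the two $\lambda^{1/2}$'s multiply back to $\lambda$, while $\big[(p-1)!(p+1)!\big]^{1/2}\le \sqrt{3/2}\,p!$ preserves the factorial growth. Combining the even and odd cases yields a single bound $\E[|Y|^p]\le \lambda\, p!\, R_3^{\,p}$ valid for all integers $p\ge 2$.

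Finally I would rewrite this into the stated form $\tfrac{p!}{2}R(c_0)^{p-2}\lambda$ by absorbing $R_3^{\,p}$ into a sufficiently large $R = R(c_0)$; the base case $p=2$ is immediate since $\mu_2 = \lambda$ holds with equality. I expect the main obstacle to be obtaining the correct linear dependence on $\lambda$: a naive argument (deducing moments from a sub-exponential MGF bound via $|Y|^p \le (p/t)^p e^{-p} e^{t|Y|}$ and optimizing in $t$) controls the factorial growth but produces a $\lambda$-free constant, which is far too weak for small $\lambda$. Circumventing this is exactly what forces the partition/polynomial structure for the even moments together with the symmetric log-convexity step for the odd ones; the remaining work is only constant bookkeeping in $c_0$.
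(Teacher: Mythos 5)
Your proof is correct, but it follows a genuinely different route from the paper's. The paper settles the proposition by citation: the Poisson distribution is discrete log-concave, and Lemma 7.5 together with Definition 1.2 of \citet{schudy2011bernstein} shows that any log-concave random variable is \emph{moment bounded}, i.e.\ $\E\left[|X-\lambda|^{i}\right] \le i\,L\,\E\left[|X-\lambda|^{i-1}\right]$ for all integers $i\ge 1$, with a parameter $L$ that is bounded by a constant $R(c_0)$ here since $\lambda \le C(c_0)$; iterating this ratio bound from order $p$ down to order $2$ and using $\Var(X)=\lambda$ yields $\E\left[|X-\lambda|^{p}\right] \le \frac{p!}{2}R(c_0)^{p-2}\lambda$ in one line, which is exactly why the statement has this precise form. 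You instead give a self-contained combinatorial argument: since the centered Poisson has first cumulant zero and all higher cumulants equal to $\lambda$, the moment--cumulant formula writes $\mu_p$ as a sum of $\lambda^{b(\pi)}$ over set partitions with all blocks of size at least $2$, which makes the crucial linear-in-$\lambda$ factor explicit for even $p$ (factor out one block, bound the remaining powers by $\max(1,C(c_0))^{p/2}$ and the partition count by $B_p\le p!$), while the Cauchy--Schwarz/Lyapunov step $\E[|Y|^p]\le \E[|Y|^{p-1}]^{1/2}\,\E[|Y|^{p+1}]^{1/2}$ handles odd $p$ and retains a full factor of $\lambda$ --- precisely the point where, as you correctly diagnose, a naive sub-exponential MGF argument would lose the small-$\lambda$ behavior. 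The individual steps all check out (the Touchard recurrence, $B_p\le p!$, $\left[(p-1)!\,(p+1)!\right]^{1/2}\le \sqrt{3/2}\,p!$, the equality case $\mu_2=\lambda$ at $p=2$, and the absorption of constants into $R(c_0)^{p-2}$ for $p\ge 3$). As for what each approach buys: yours is elementary and transparent, requires no external machinery, and would generalize to any weight distribution whose cumulants of order at least $2$ are uniformly comparable to the variance; the paper's is two lines and applies verbatim to every discrete log-concave distribution with bounded mean, a broader class that matches the paper's aim of modeling generic weighted networks without specifying a likelihood.
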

\begin{proof}
Since a Poisson random variable is known to be discrete log-concave, by noting that $\Var(X) = \lambda$, this lemma can be directly obtained from Lemma 7.5 and Definition 1.2 of \citet{schudy2011bernstein}.
\end{proof}

\subsection{Consistency}
In the theoretical results presented in this section, we use SCORE \citep{jin2015fast} for spectral clustering. Here, we briefly introduce its implementation. First, compute the $m$ leading singular vectors $\vct{u}_1, \ldots, \vct{u}_m$ of $\mtx{A}$ corresponding to the $m$ largest eigenvalues in magnitude; next, construct a $n \times (m-1)$ matrix of entrywise ratios $\mtx{R}^{(m)}$: $R^{(m)}(i, k) = u_{k+1}(i)/u_1(i)$ for $1\leq i \leq n$ and $0\leq k \leq m-1$; finally, the rows of the ratio matrix $\mtx{R}^{(m)}$ are clustered by the $k$-means algorithm, assuming there are $m$ clusters. In future work, it would be interesting to establish consistency results for other commonly used spectral clustering methods, e.g., the regularized spectral clustering (RSC) methods proposed and studied in \citet{amini2013pseudo, joseph2016impact}.

Next, we introduce the main results that guarantee the consistency of SVPS in selecting the number of communities $K$, provided that Assumption \ref{ass:DCSBM} holds. The following two theorems correspond to two parts of the main results: the null case ($m=K$) and the underfitting case ($m<K$).

\begin{theorem}[Null Case]
\label{thm:null}
If we implement the procedure in Section \ref{sec:methodology} with the candidate number of communities $m=K$ and SCORE as the spectral clustering method, then, for any fixed $c_0>0$ in Assumption \ref{ass:DCSBM}, as $n \rightarrow \infty$, we have $T_{n, m} \leq 2 + o_P(1)$.
\end{theorem}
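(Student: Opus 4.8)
The plan is to bound $T_{n,K}$ by the operator norm of a scaled noise matrix and then to show that norm is $2+o_P(1)$ by combining a sharp random matrix estimate with consistency of the matrix scaling. The key first step is a reduction that avoids the mean-estimation error entirely. Write $\mtx{X} = (\widehat{\mtx{\Psi}}^{(K)})^{1/2}\mtx{A}(\widehat{\mtx{\Psi}}^{(K)})^{1/2}$, which is symmetric, so its magnitude-ordered eigenvalues coincide with its singular values and $T_{n,K} = \sigma_{K+1}(\mtx{X})$. By the spectral-norm Eckart--Young--Mirsky theorem, $\sigma_{K+1}(\mtx{X}) = \min_{\rank(\mtx{Y})\leq K}\norm{\mtx{X}-\mtx{Y}}$. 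Rather than comparing against the estimated mean, I would take the \emph{population} comparator $\mtx{Y} = (\widehat{\mtx{\Psi}}^{(K)})^{1/2}\mtx{M}(\widehat{\mtx{\Psi}}^{(K)})^{1/2}$, which has rank $\leq \rank(\mtx{M}) = K$ since $\mtx{M}$ from \eqref{eq:DCSBM_mean} is rank $K$ and diagonal multiplication does not increase rank. This yields
\[
T_{n,K} \;\leq\; \norm{(\widehat{\mtx{\Psi}}^{(K)})^{1/2}(\mtx{A}-\mtx{M})(\widehat{\mtx{\Psi}}^{(K)})^{1/2}},
\]
so the estimation error of $\widehat{\mtx{M}}^{(K)}$ never enters the reduction; it matters only through its effect on the scaling $\widehat{\mtx{\Psi}}^{(K)}$.

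Next I would transfer the sample scaling to the population scaling $\mtx{\Psi}$ of \eqref{eq:pop_scaling}. Factoring out the diagonal ratio matrix $\mtx{D} = (\widehat{\mtx{\Psi}}^{(K)})^{1/2}\mtx{\Psi}^{-1/2} = \Diag\bigl((\hat{\psi}_i/\psi_i)^{1/2}\bigr)$ gives
\[
\norm{(\widehat{\mtx{\Psi}}^{(K)})^{1/2}(\mtx{A}-\mtx{M})(\widehat{\mtx{\Psi}}^{(K)})^{1/2}} \;\leq\; \Bigl(\max_{i}\tfrac{\hat{\psi}_i}{\psi_i}\Bigr)\,\norm{\mtx{\Psi}^{1/2}(\mtx{A}-\mtx{M})\mtx{\Psi}^{1/2}}.
\]
The entries of $\mtx{\Psi}^{1/2}(\mtx{A}-\mtx{M})\mtx{\Psi}^{1/2}$ are independent (up to symmetry), mean zero, with variances $\psi_i\psi_j V_{ij} = (\mtx{\Psi}\mtx{V}\mtx{\Psi})_{ij}$ whose row sums equal $1$ by \eqref{eq:pop_scaling}. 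This is precisely the doubly stochastic normalization under which the dimension-free bound of \citet{latala2018dimension} gives operator norm $2 + o(1)$, provided the sub-leading term vanishes. I would control that term using the Bernstein condition \eqref{eq:bernstein} together with the sparseness bound $\theta_{\min}\geq \log^3 n/\sqrt{n}$ in \eqref{eq:sparseness}, which is exactly what forces the largest scaled entry, and hence the correction, to vanish.

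It remains to show $\max_i \hat{\psi}_i/\psi_i = 1 + o_P(1)$. This rests on two inputs. First, at the correct order $m=K$, SCORE recovers the communities (up to label permutation, with at most a negligible misclassified set) under the balancedness, connectivity, and signal conditions of Assumption \ref{ass:DCSBM}; this makes the plug-in estimates \eqref{eq:theta_hat}--\eqref{eq:M_hat_matrix} concentrate, and standard Bernstein-type concentration of the block sums and degrees gives $\widehat{M}^{(K)}_{ij}/M_{ij} = 1 + o_P(1)$ uniformly, whence the two-sided and Lipschitz bounds in \eqref{eq:variance_linearity} yield $\widehat{V}^{(K)}_{ij}/V_{ij} = 1 + o_P(1)$ uniformly. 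Second, I would invoke a quantitative stability property of symmetric (Sinkhorn) matrix scaling: because $\mtx{V}$ and $\widehat{\mtx{V}}^{(K)}$ have uniformly comparable, bounded entries and differ by a vanishing relative error, the scaling factors solving \eqref{eq:sample_scaling} and \eqref{eq:pop_scaling} are Lipschitz-close in the logarithmic scale \citep[cf.][]{knight2014symmetry}, giving $\max_i|\hat{\psi}_i/\psi_i - 1| = o_P(1)$. Combining the three displays then delivers $T_{n,K}\leq (1+o_P(1))(2+o_P(1)) = 2 + o_P(1)$.

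The hard part will be making the last paragraph quantitative: establishing a stability estimate for matrix scaling under entrywise \emph{relative} perturbations of a sparse, heterogeneous variance profile, and matching it to the sharp constant $2$ coming from \citet{latala2018dimension}. One must verify that the single sparseness rate $\theta_{\min}\geq \log^3 n/\sqrt{n}$ simultaneously kills the correction term in the random matrix bound and controls the scaling perturbation, so that no constant strictly larger than $2$ leaks into the limit. The SCORE recovery at $m=K$ is a comparatively standard ingredient here; the more delicate Nonsplitting-type argument is reserved for the underfitting regime $m<K$.
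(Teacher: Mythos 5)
Your proposal is correct and follows essentially the same route as the paper's proof: the same rank-$K$ comparator $(\widehat{\mtx{\Psi}}^{(K)})^{1/2}\mtx{M}(\widehat{\mtx{\Psi}}^{(K)})^{1/2}$ (the paper invokes Weyl's inequality rather than Eckart--Young--Mirsky), the same factorization through the population scaling $\mtx{\Psi}$, the same Lata{\l}a-type spectral bound (made rigorous in Lemma \ref{lem:E_norm} by a truncation argument, which is the detail your ``largest scaled entry vanishes'' step actually requires, since Bernstein-condition entries such as Poisson are unbounded), and the same scaling-consistency chain via exact recovery at $m=K$, uniform relative consistency of $\widehat{\mtx{M}}^{(K)}$ and $\widehat{\mtx{V}}^{(K)}$, and Sinkhorn stability. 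The only substantive difference is that the stability estimate you flag as ``the hard part'' is not re-derived in the paper but imported from \citet{landa2022scaling} (Lemma \ref{lem:landa_9}), applied in Lemma \ref{lem:theta_B_hat} after establishing $\max_{i,j}\left|\widehat{V}_{ij}/V_{ij}-1\right| = o_P(1)$.
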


\begin{theorem}[Underfitting Case]
\label{thm:under-fitting}
If we implement the procedure in Section \ref{sec:methodology} with the candidate number of communities $m<K$ and SCORE as the spectral clustering method, then, for any fixed $c_0 > 0$ in Assumption \ref{ass:DCSBM}, as $n \rightarrow \infty$, we have $T_{n, m} \stackrel{P}{\longrightarrow} \infty$.
\end{theorem}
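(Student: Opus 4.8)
The plan is to study the scaled adjacency matrix $\widetilde{\mtx{A}} := (\widehat{\mtx{\Psi}}^{(m)})^{1/2}\mtx{A}(\widehat{\mtx{\Psi}}^{(m)})^{1/2}$ through the signal-plus-noise splitting $\widetilde{\mtx{A}} = (\widehat{\mtx{\Psi}}^{(m)})^{1/2}\mtx{M}(\widehat{\mtx{\Psi}}^{(m)})^{1/2} + (\widehat{\mtx{\Psi}}^{(m)})^{1/2}(\mtx{A}-\mtx{M})(\widehat{\mtx{\Psi}}^{(m)})^{1/2}$. All three matrices are symmetric, so their magnitude-ordered eigenvalues coincide with their singular values, and Weyl's inequality for singular values gives
\[
T_{n,m} = \sigma_{m+1}(\widetilde{\mtx{A}}) \;\ge\; \sigma_{m+1}\!\left((\widehat{\mtx{\Psi}}^{(m)})^{1/2}\mtx{M}(\widehat{\mtx{\Psi}}^{(m)})^{1/2}\right) \;-\; \norm{(\widehat{\mtx{\Psi}}^{(m)})^{1/2}(\mtx{A}-\mtx{M})(\widehat{\mtx{\Psi}}^{(m)})^{1/2}}.
\]
It therefore suffices to show the scaled-signal singular value diverges while the scaled-noise norm remains $O_P(1)$. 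Both bounds hinge on a single ingredient, a uniform two-sided control of the estimated scaling factors: with probability tending to one, $\widehat\psi_i^{(m)}\asymp(\sqrt{n}\,\theta_i)^{-1}$ for all $i$, matching the order of the population scaling $\psi_i$ attached to the true variance profile $\mtx{V}$. I will address this key estimate last, as it is the main obstacle.

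For the signal singular value, $\rank(\mtx{M})=K$ and $m<K$ force $m+1\le K$, so $\sigma_{m+1}$ of the rank-$K$ matrix $(\widehat{\mtx{\Psi}}^{(m)})^{1/2}\mtx{M}(\widehat{\mtx{\Psi}}^{(m)})^{1/2}$ is bounded below by its smallest nonzero singular value. Writing $\mtx{D}=(\widehat{\mtx{\Psi}}^{(m)})^{1/2}\mtx{\Theta}$, we have $(\widehat{\mtx{\Psi}}^{(m)})^{1/2}\mtx{M}(\widehat{\mtx{\Psi}}^{(m)})^{1/2}=\mtx{D}\mtx{\Pi}\mtx{B}\mtx{\Pi}^\top\mtx{D}$, whose nonzero eigenvalues coincide with those of $\mtx{G}^{1/2}\mtx{B}\mtx{G}^{1/2}$, where $\mtx{G}=\mtx{\Pi}^\top\mtx{D}^2\mtx{\Pi}$ is the $K\times K$ diagonal matrix with $G_{kk}=\sum_{i\in\N_k}\widehat\psi_i^{(m)}\theta_i^2$. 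The scaling estimate yields $G_{kk}\asymp\sum_{i\in\N_k}\theta_i/\sqrt{n}\gtrsim n_k\theta_{\min}/\sqrt{n}\gtrsim\sqrt{n}\,\theta_{\min}\ge\log^3 n$ by balancedness and sparseness, hence $\lambda_{\min}(\mtx{G})\gtrsim\log^3 n$. Since $\sigma_{\min}(\mtx{G}^{1/2}\mtx{B}\mtx{G}^{1/2})\ge\lambda_{\min}(\mtx{G})\,\sigma_{\min}(\mtx{B})$ and $\sigma_{\min}(\mtx{B})=|\lambda_K(\mtx{B})|\ge c_0$ by \eqref{eq:B_entry_ass}, the signal singular value is $\gtrsim c_0\log^3 n\to\infty$.

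For the noise term, the estimate $\widehat\psi_i^{(m)}\asymp\psi_i$ implies that the entrywise variance profile of the scaled noise obeys $\widehat\psi_i^{(m)}\widehat\psi_j^{(m)}V_{ij}\asymp\psi_i\psi_j V_{ij}$, whose row sums are $\asymp 1$ because $\mtx{\Psi}$ renders $\mtx{\Psi}\mtx{V}\mtx{\Psi}$ doubly stochastic. Combined with the Bernstein condition \eqref{eq:bernstein} and the lower bound $\theta_{\min}\ge\log^3 n/\sqrt{n}$ (which suppresses the logarithmic corrections), the dimension-free operator-norm bound of \citet{latala2018dimension} — exactly as used in the proof of Theorem \ref{thm:null} — gives $\norm{(\widehat{\mtx{\Psi}}^{(m)})^{1/2}(\mtx{A}-\mtx{M})(\widehat{\mtx{\Psi}}^{(m)})^{1/2}}=O_P(1)$. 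This step is insensitive to whether the clustering is correct; it uses only the order of the scaling factors.

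The main obstacle is the scaling estimate $\widehat\psi_i^{(m)}\asymp(\sqrt{n}\,\theta_i)^{-1}$ in the misspecified regime $m<K$. Here I would invoke the Nonsplitting Property for SCORE, generalized from \citet{jin2022optimal} to the weighted DCSBM, to show that the estimated partition $\widehat{\N}_1^{(m)},\ldots,\widehat{\N}_m^{(m)}$ is, up to a vanishing fraction of nodes, a coarsening of the true partition. This makes the plug-in quantities in \eqref{eq:theta_hat}--\eqref{eq:M_hat} sums over stable aggregates of true communities, so that $\widehat\theta_i^{(m)}\asymp\theta_i$ and hence $\widehat M_{ij}^{(m)}\asymp\theta_i\theta_j\asymp M_{ij}$ with high probability; the variance-mean bounds $c_0\mu\le\nu(\mu)\le\mu/c_0$ then give $\widehat V_{ij}^{(m)}\asymp V_{ij}$ entrywise. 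Finally, a stability argument for symmetric matrix scaling — using existence and uniqueness from \citet{sinkhorn1967diagonal,knight2014symmetry} together with the monotonicity of the scaling factors under entrywise domination of the matrix being scaled — transfers $\widehat V^{(m)}\asymp V$ to $\widehat\psi_i^{(m)}\asymp\psi_i$. I expect the delicate points to be controlling the estimated degree-correction parameters uniformly in $i$ (rather than only on average) and establishing quantitative stability of the Sinkhorn scaling under entrywise perturbation of $\mtx{V}$.
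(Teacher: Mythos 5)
Your proposal is correct in outline, and every step can be completed with the lemmas the paper develops, but your route differs from the paper's and one of your sub-steps needs repair. The paper does not apply Weyl's inequality to the scaled matrices at all: it uses the chain $T_{n,m}\ \ge\ \left|\lambda_{K}\left(\widehat{\mtx{\Psi}}^{1/2}\mtx{A}\widehat{\mtx{\Psi}}^{1/2}\right)\right|\ \ge\ \hat\psi_{\min}\,|\lambda_K(\mtx{A})|\ \ge\ \hat\psi_{\min}\left(|\lambda_K(\mtx{M})|-\|\mtx{A}-\mtx{M}\|\right)$, so all spectral work is done on the \emph{unscaled} matrices --- $|\lambda_K(\mtx{M})|\ge c_0\theta_{\min}^2 n$ (Lemma \ref{lem:spectral_lowerbound}) and $\|\mtx{A}-\mtx{M}\|=O_P(\theta_{\min}\sqrt n)$ (Lemma \ref{lem:noise_norm}) --- and the only fact needed about the estimated scaling is the one-sided bound $1/(\theta_{\min}\hat\psi_{\min}\sqrt n)=O_P(1)$ (Lemma \ref{lem:theta_hat_underfitting}). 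Your decomposition (Weyl on the scaled matrices, then the $\mtx{G}^{1/2}\mtx{B}\mtx{G}^{1/2}$ reduction for the signal and a scaled-noise bound) reaches the same divergence rate $\theta_{\min}\sqrt n\gtrsim\log^3 n$, but it is more expensive: it requires two-sided control $\hat\psi_i\asymp\psi_i$ of every scaling factor and an operator-norm bound for the $\widehat{\mtx{\Psi}}$-scaled noise, neither of which the paper's argument needs.

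Two caveats on your execution. First, the ``main obstacle'' you flag --- quantitative Sinkhorn stability, or ``monotonicity of the scaling factors under entrywise domination,'' to transfer $\widehat{\mtx{V}}\asymp\mtx{V}$ to $\hat\psi_i\asymp\psi_i$ --- is both dubious as stated (entrywise domination of the scaled matrix does not obviously imply any monotone comparison of Sinkhorn factors) and unnecessary. Under the NSP (Lemma \ref{thm:nsp}), the plug-in quantities are built from aggregates of true communities, and Lemma \ref{lem:degree_concentration} gives $\widehat{V}_{\max}\asymp\widehat{V}_{\min}\asymp\theta_{\min}^2$ with probability tending to one; the \emph{a priori} bound of Lemma \ref{lem:landa_2}, namely $\frac{1}{\sqrt n}\frac{\sqrt{\widehat V_{\min}}}{\widehat V_{\max}}\le\hat\psi_i\le\frac{1}{\sqrt n}\frac{\sqrt{\widehat V_{\max}}}{\widehat V_{\min}}$, then immediately yields $\hat\psi_i\asymp 1/(\theta_{\min}\sqrt n)\asymp\psi_i$ (using $\theta_i\asymp\theta_{\min}$ from balancedness and Lemma \ref{lem:psi_bound}). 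This is exactly the paper's Lemma \ref{lem:theta_hat_underfitting}; no perturbation theory for matrix scaling is needed. Second, for the noise term you cannot apply the Lata{\l}a-type bound (Lemma \ref{lem:E_spec_norm}) directly to $\widehat{\mtx{\Psi}}^{1/2}(\mtx{A}-\mtx{M})\widehat{\mtx{\Psi}}^{1/2}$: its entries are not independent, because $\widehat{\mtx{\Psi}}$ is a function of the whole of $\mtx{A}$. The correct implementation --- the one used in the null-case proof --- factors through the deterministic scaling, $\widehat{\mtx{\Psi}}^{1/2}(\mtx{A}-\mtx{M})\widehat{\mtx{\Psi}}^{1/2}=\left(\widehat{\mtx{\Psi}}^{1/2}\mtx{\Psi}^{-1/2}\right)\left(\mtx{\Psi}^{1/2}(\mtx{A}-\mtx{M})\mtx{\Psi}^{1/2}\right)\left(\mtx{\Psi}^{-1/2}\widehat{\mtx{\Psi}}^{1/2}\right)$, after which your estimate $\max_i \hat\psi_i/\psi_i=O_P(1)$ together with Lemma \ref{lem:E_norm} gives the desired $O_P(1)$. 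With these two repairs, your proof goes through.
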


Obviously, combining Theorems \ref{thm:null} and \ref{thm:under-fitting} shows the consistency of SVPS. In \eqref{eq:balance_ass}, the sparsity assumption $\theta_{\min} \geq \left. (\log^3 n)\middle/\sqrt{n} \right.$ might be suboptimal. In contrast, the corresponding assumption is typically $\theta_{\min} \geq C_0 \sqrt{ (\log n)/n }$ in the literature of unweighted network model selection \citep[e.g.,][]{jin2022optimal}. The extra logarithms for the weighted DCSBM arise from an application of the truncation technique in the proof, since existing spectral radius results for sparse and heterogeneous random matrices in the literature, e.g., \citet{latala2018dimension}, usually require the entries to be uniformly bounded. It is an interesting question whether the logarithm factors can be further improved, but we leave this for future investigation.

\section{NUMERICAL EXPERIMENTS}
\label{sec:experiments}

This section presents numerical experiments using both synthetic and real-world data to demonstrate the empirical behavior of SVPS. In the step of weighted DCSBM fitting in SVPS, we use either SCORE \citep{jin2015fast} or RSC \citep{amini2013pseudo, joseph2016impact}. If RSC is used, the regularization parameter is set to $0.25 \left(\bar{d}/n\right)$, where $\bar{d}$ is the average node degree of the network. In the step of sequential testing, the threshold for comparing $T_{n, m}$ with is $2.02$, $2.05$ or $2.10$. We will specify the exact choice in each experiment.

Throughout all our experiments, we select the variance function in SVPS as $\hat{\nu}(\mu) = \mu$. Note that this does not imply that the true variance function satisfies this relationship. In other words, model mismatching is allowed in our experiments. Therefore, the estimated variance profile matrix always satisfies $\widehat{\mtx{V}}^{(m)} = \widehat{\mtx{M}}^{(m)} =  \widehat{\mtx{\Theta}} \widehat{\mtx{\Pi}} \widehat{\mtx{B}} \widehat{\mtx{\Pi}}^\top \widehat{\mtx{\Theta}}$.

\subsection{Synthetic Networks}
\label{sec:experiment_synthetic}

\subsubsection{Weighted DCSBM Generation}

Now we discuss how to generate weighted networks based on certain weighted DCSBM for our simulations. We consider three types of distributions: Poisson, binomial, and negative binomial. In each case, the distribution of the weighted DCSBM is determined by its mean structure. 
In fact, to determine the mean structure, we only need to specify the $K \times K$ matrix $\mtx{B}$, the vector of degree-correction parameters $\vct{\theta}$, and the block sizes. Therefore, we follow the experiment setting in \citet{hu2020corrected} and specify the above parameters as follows:
\begin{itemize}[topsep=0pt]
\item For $\mtx{B}$, let
\[
B_{kl} = \rho \left(1 + r \times \mathbf{1}_{\{k = l\}} \right),
\]
where $\rho$ is a sparsity parameter and $r$ is a contrast parameter. We consider the following combinations of $(\rho, r)$: $(0.04, 4)$, $(0.06, 3)$ and $(0.12, 2)$ in our simulations. 

\item The block sizes are set according to the sequence $\vct{n}_{all} = (50, 100, 150, 50, 100, 150)$. For example, $(n_1, n_2) = (50, 100)$ when $K=2$, $(n_1, n_2, n_3) = (50, 100, 150)$ when $K=3$, etc. 

\item The degree-correction parameters are $\textit{i.i.d.}$ generated from the following distribution:
\begin{align*}
    \begin{cases}
    \text{Uniform}(0.6, 1.4),\quad &\text{with probability } 0.8; \\
    0.5, \quad &\text{with probability } 0.1; \\
    1.5, \quad &\text{with probability } 0.1.
    \end{cases}
\end{align*}
\end{itemize}

After generating $\mtx{M}$, we consider the following simulations to generate $\mtx{A}$:
\begin{itemize}[noitemsep, topsep=0pt]
    \item[$\blacksquare$] \textbf{Simulation 1: } $A_{ij} \widesim{ind} \text{Poisson}(M_{ij})$. 
    \item[$\blacksquare$] \textbf{Simulation 2: } $A_{ij} \widesim{ind} \text{Binomial}(5, M_{ij}/5)$.
    \item[$\blacksquare$]
    \textbf{Simulation 3: } $A_{ij} \widesim{ind} \text{NB}(5, M_{ij}/5)$, where NB$(n, p)$ is the negative binomial distribution with $n$ successful trials and the probability of success $1-p$ in each trial. 
\end{itemize}


\begin{figure*}[h!]
\begin{minipage}{1\textwidth}
\begin{subfigure}{.32\textwidth}
  \centering
  \includegraphics[width = 1.0\linewidth, height=0.7 \linewidth]{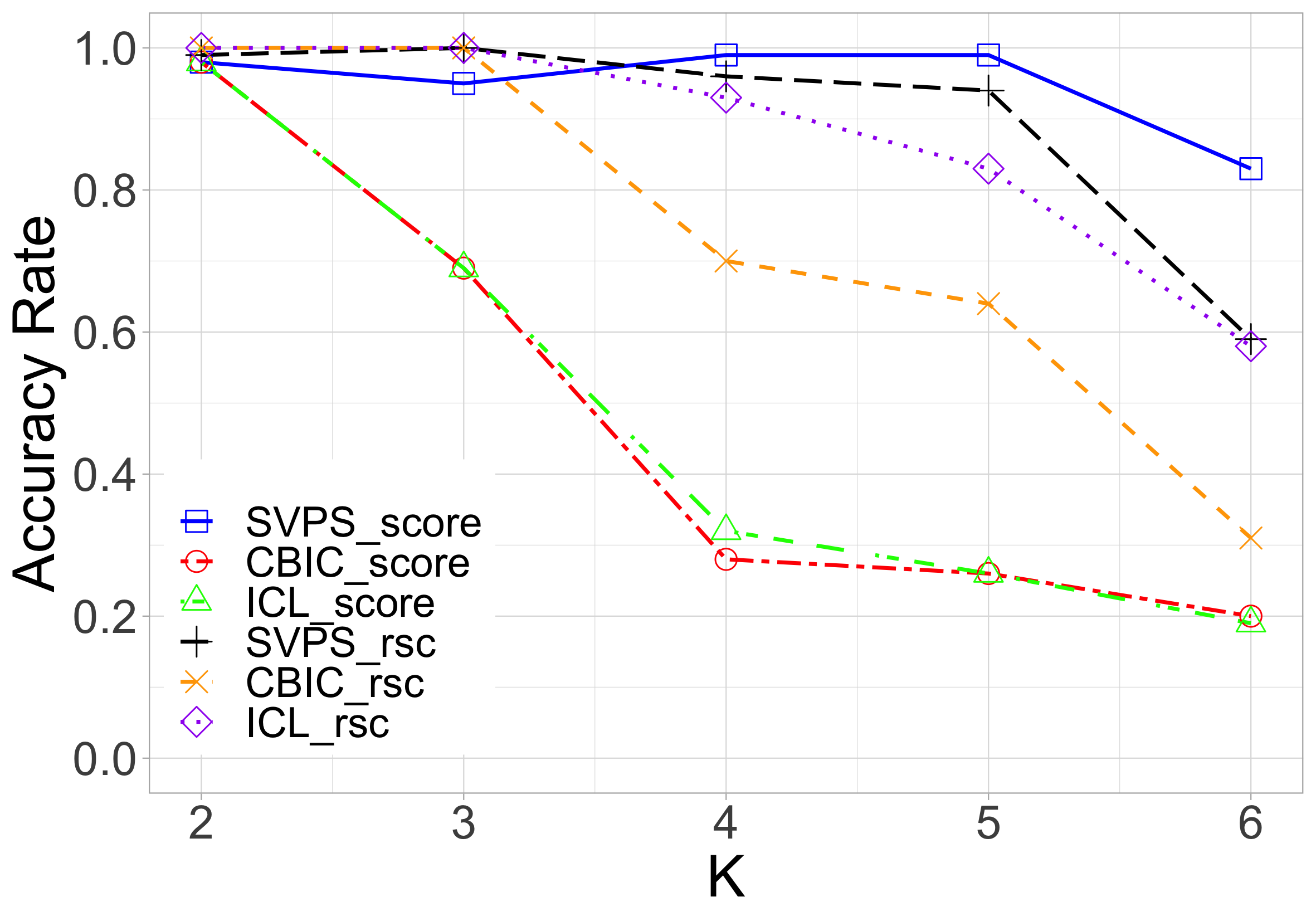}
  \caption{$\rho = 0.04$, $r=4$; $\epsilon = 0.1$}
\end{subfigure}%
\begin{subfigure}{.32\textwidth}
  \centering
  \includegraphics[width = 1.0\linewidth, height=0.7 \linewidth]{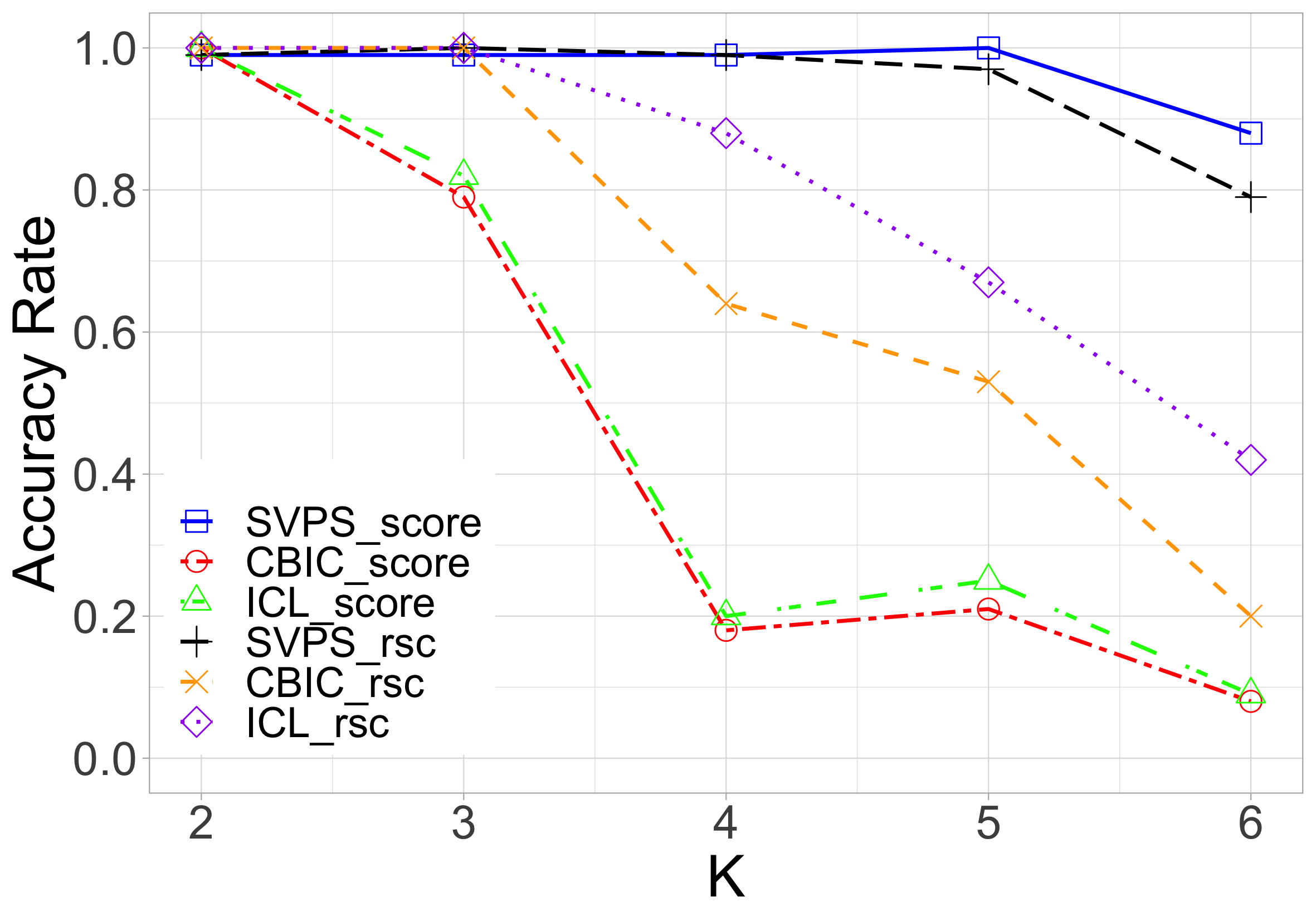}
  \caption{$\rho = 0.06$, $r=3$; $\epsilon = 0.05$}
\end{subfigure}
\begin{subfigure}{.32\textwidth}
  \centering
  \includegraphics[width = 1.0\linewidth, height=0.7\linewidth]{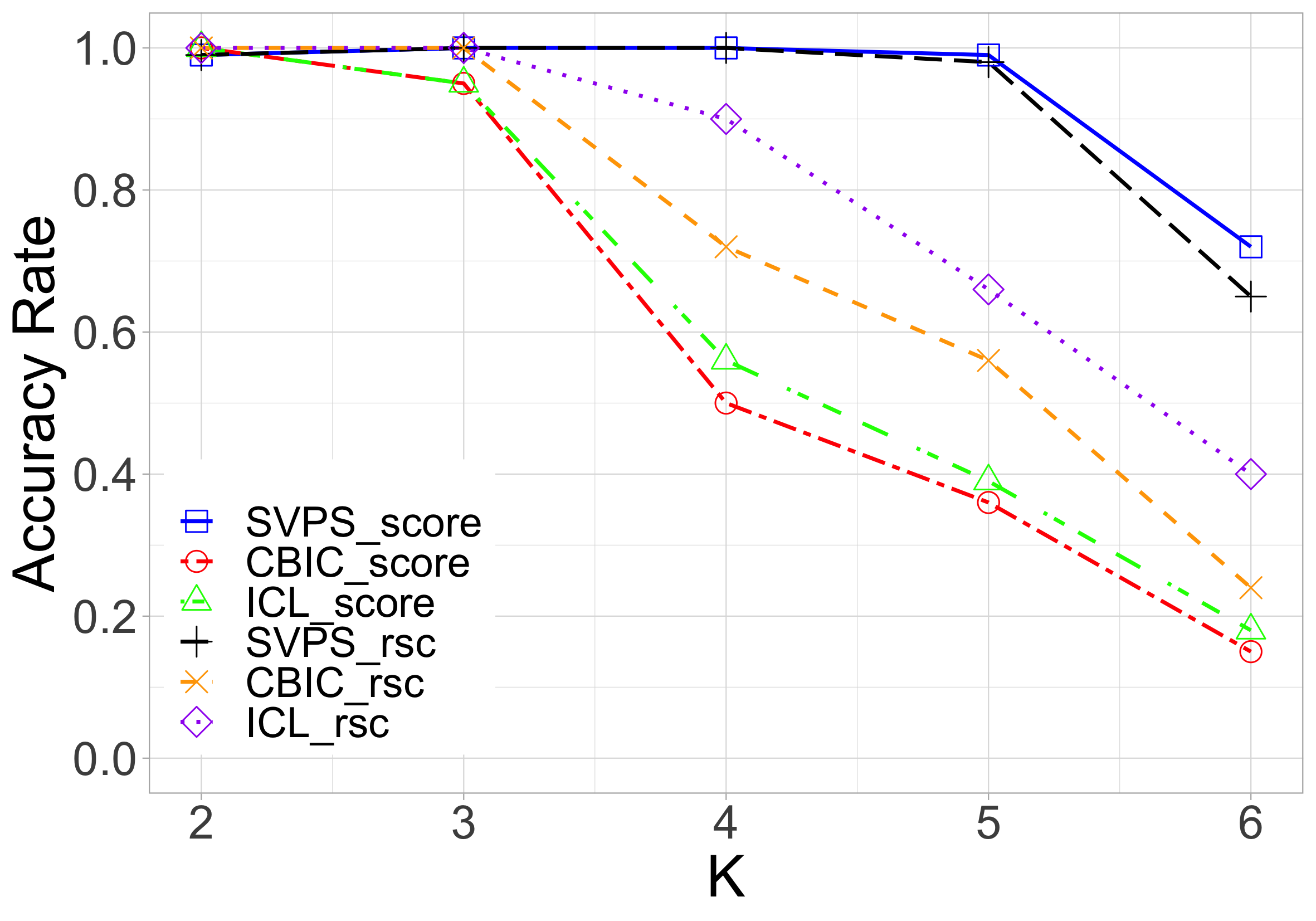}
  \caption{$\rho = 0.12$, $r=2$; $\epsilon = 0.05$}
\end{subfigure}
\hspace{0.003\textwidth}
\raisebox{4.5\height}{\rotatebox[origin=c]{90}{\scalebox{0.8}{Poisson}}}
\end{minipage}%

\begin{minipage}{1\textwidth}
\begin{subfigure}{.32\textwidth}
  \centering
  \includegraphics[width=1.0\linewidth, height=0.7\linewidth]{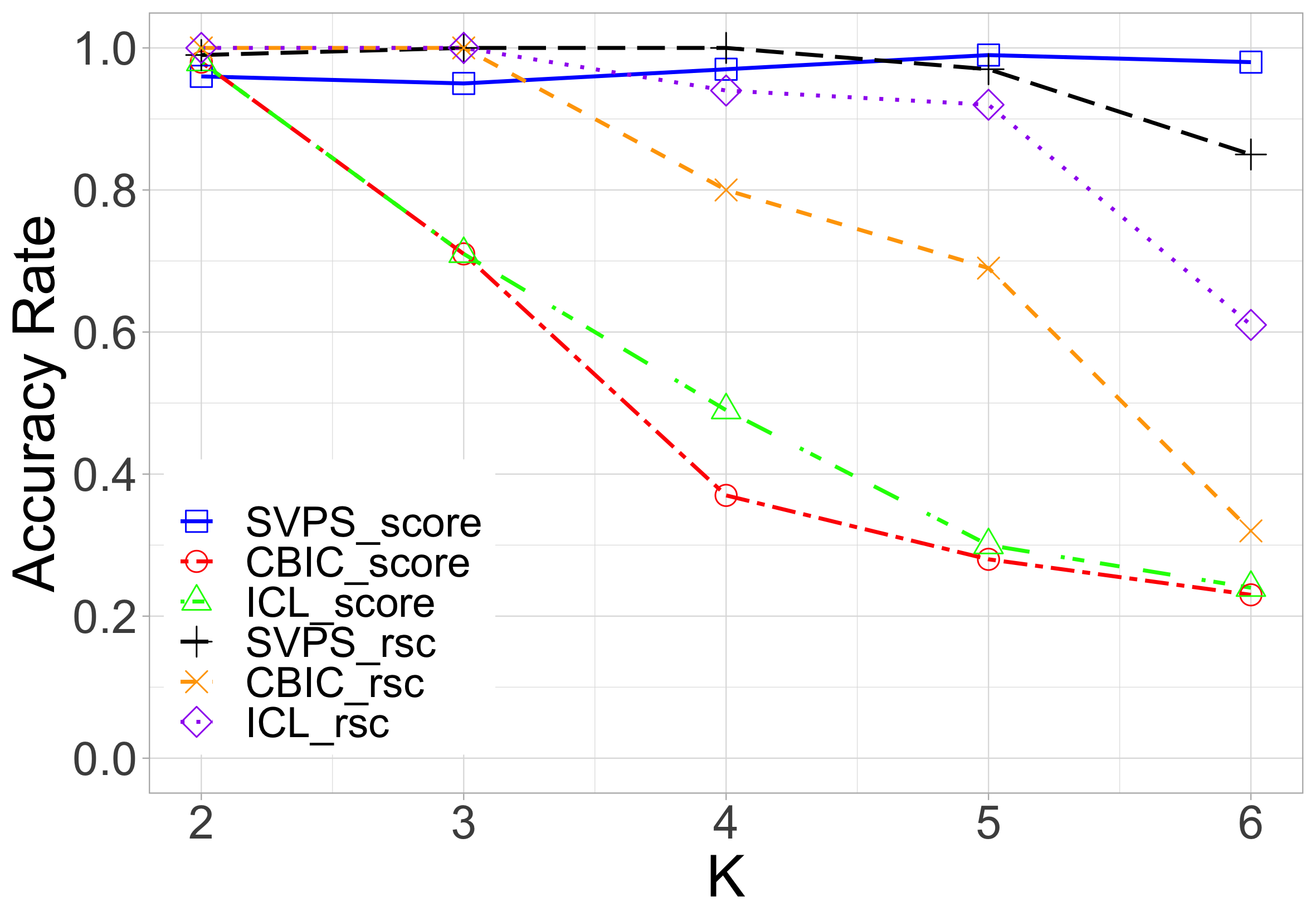}
  \caption{$\rho = 0.04$, $r=4$; $\epsilon = 0.05$}
\end{subfigure}%
\begin{subfigure}{.32\textwidth}
  \centering
  \includegraphics[width=1.0\linewidth, height=0.7\linewidth]{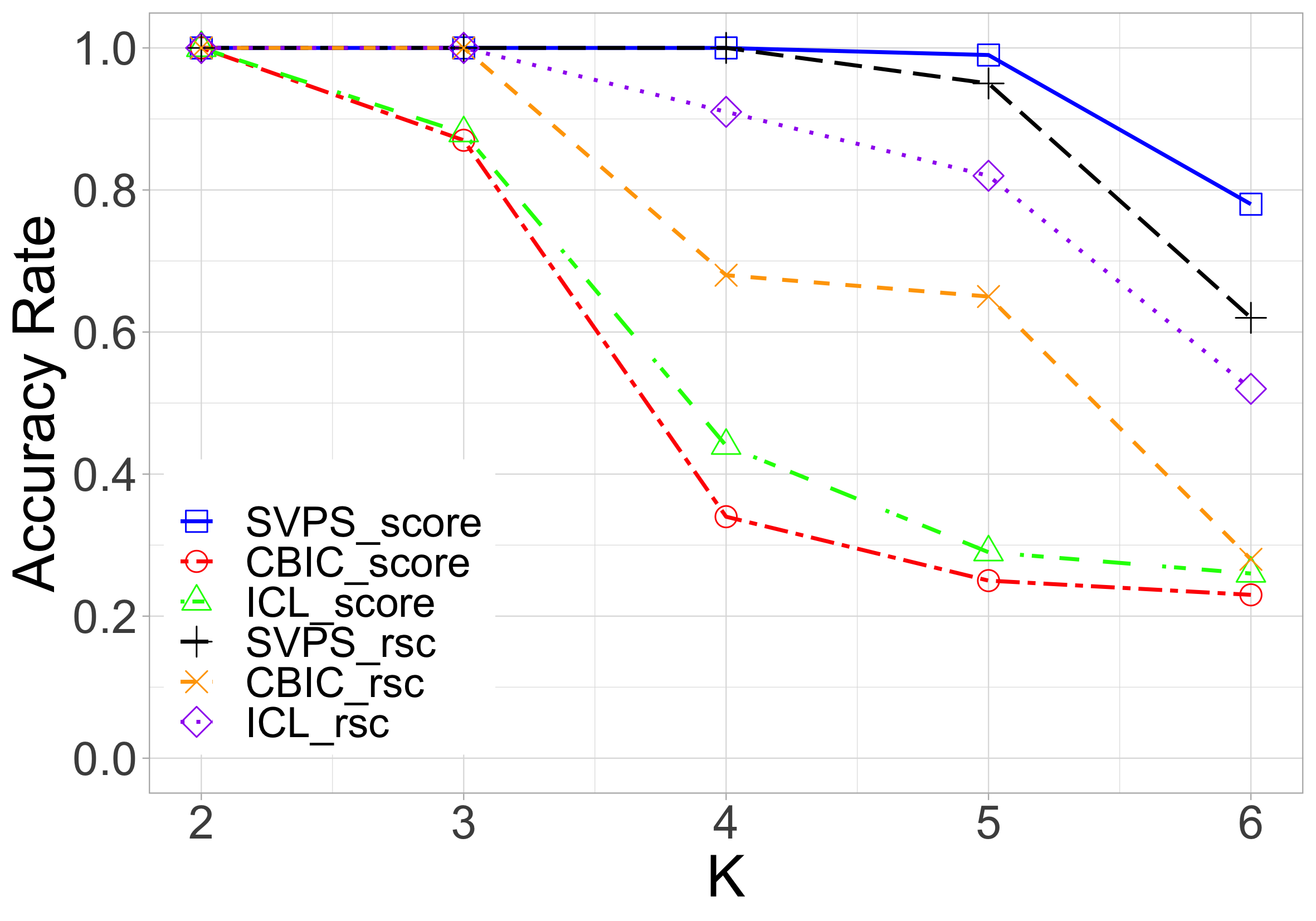}
  \caption{$\rho = 0.06$, $r=3$; $\epsilon = 0.05$}
\end{subfigure}
\begin{subfigure}{.32\textwidth}
  \centering
  \includegraphics[width = 1.0\linewidth, height=0.7\linewidth]{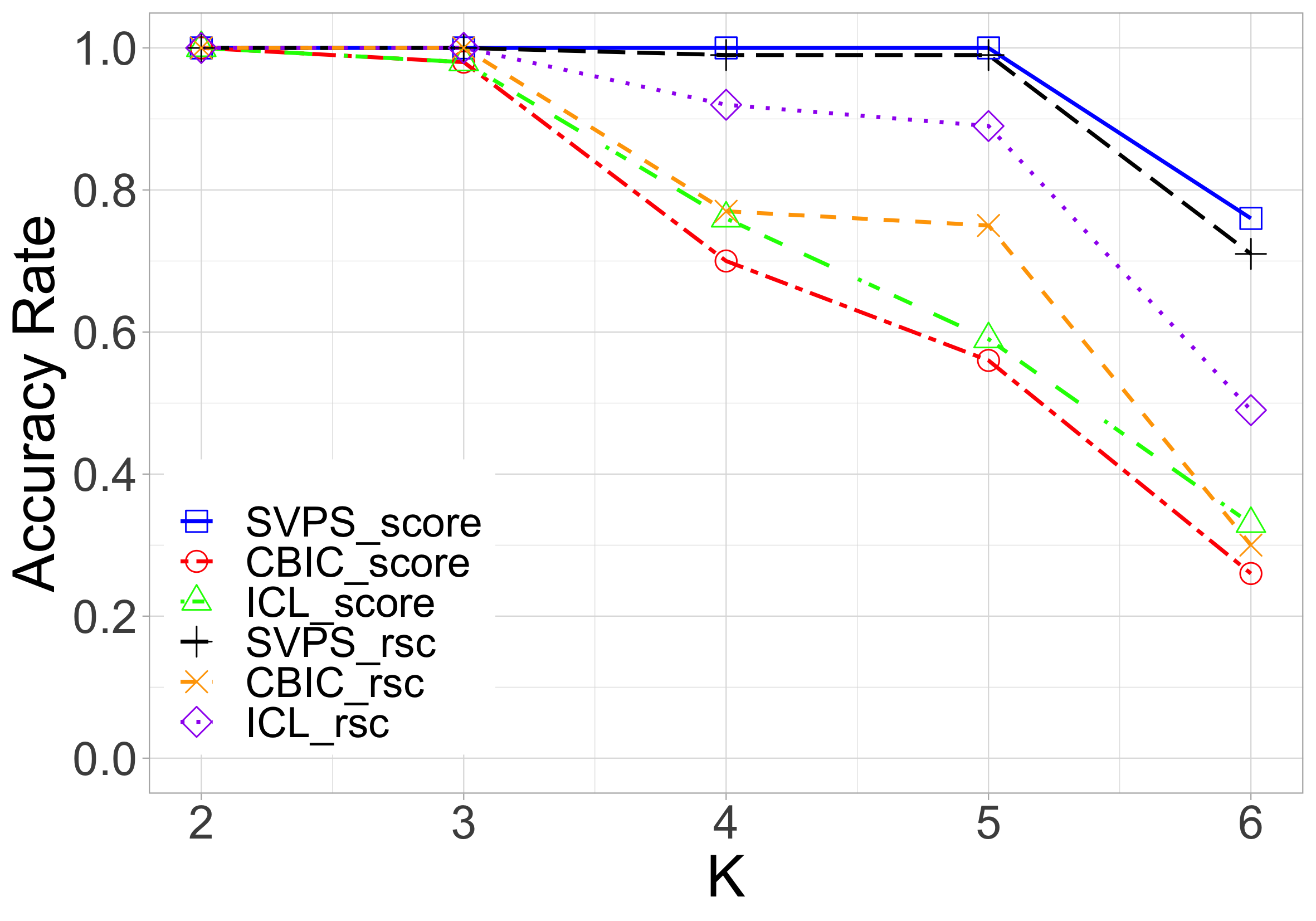}
  \caption{$\rho = 0.12$, $r=2$; $\epsilon = 0.02$}
\end{subfigure}
\hspace{0.003\textwidth}
\raisebox{3.8\height}{\rotatebox[origin=c]{90}{\scalebox{0.8}{Binomial}}}
\end{minipage}%

\begin{minipage}{1\textwidth}
\begin{subfigure}{.32\textwidth}
  \centering
  \includegraphics[width=1.0\linewidth, height=0.7\linewidth]{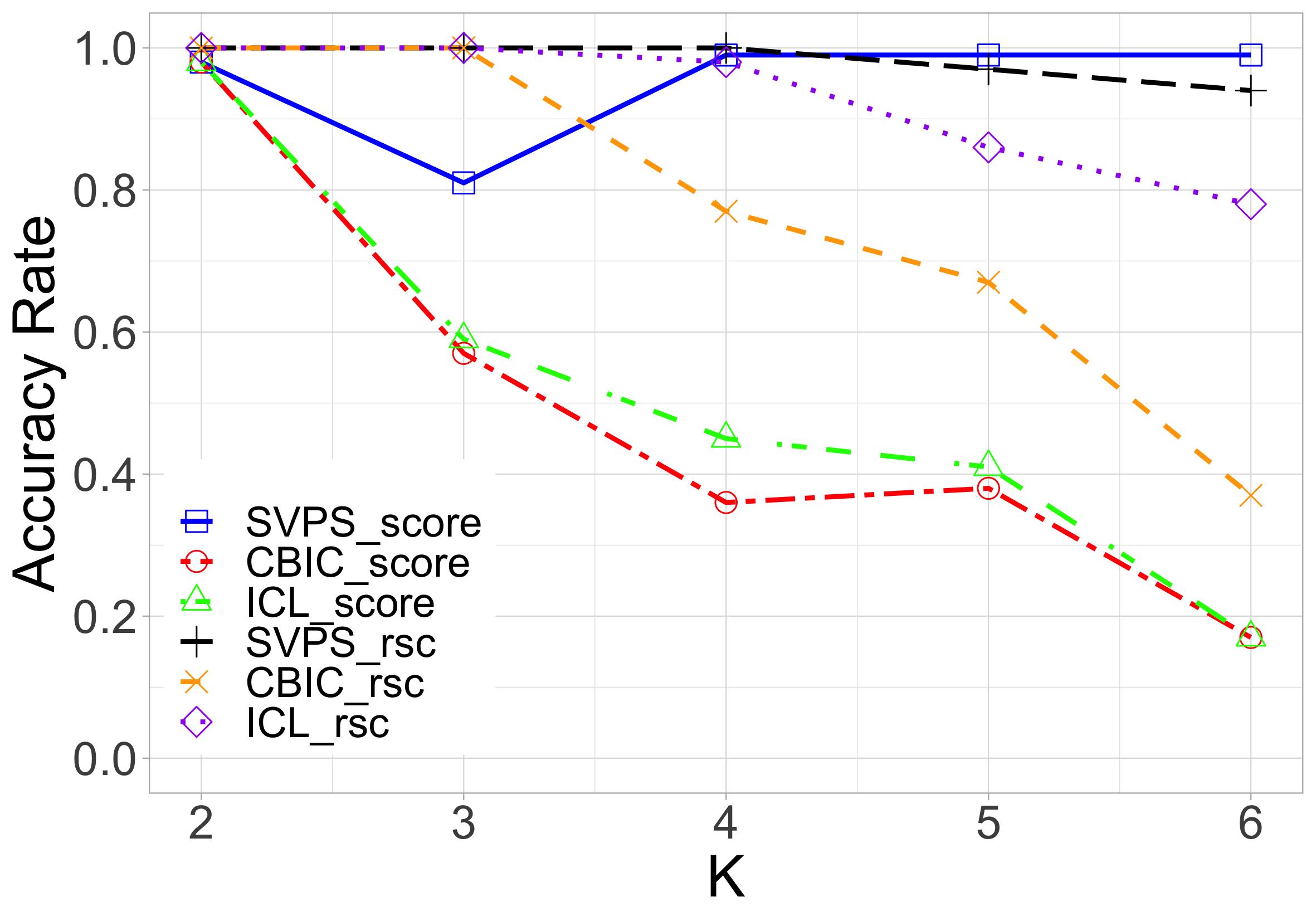}
  \caption{$\rho = 0.04$, $r=4$; $\epsilon = 0.1$}
\end{subfigure}%
\begin{subfigure}{.32\textwidth}
  \centering
  \includegraphics[width=1.0\linewidth, height=0.7\linewidth]{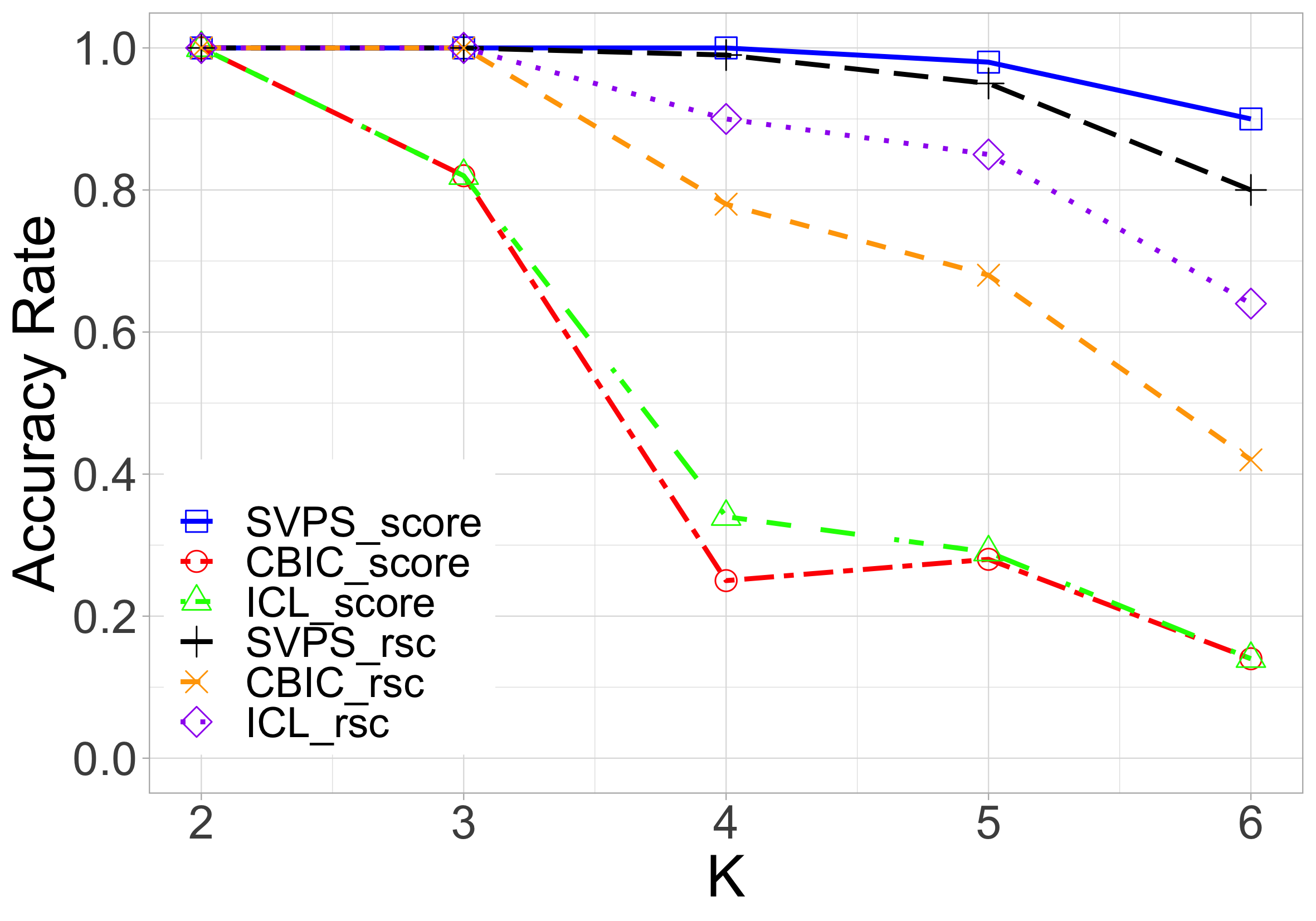}
  \caption{$\rho = 0.06$, $r=3$; $\epsilon = 0.1$}
\end{subfigure}
\begin{subfigure}{.32\textwidth}
  \centering
  \includegraphics[width=1.0\linewidth, height=0.7\linewidth]{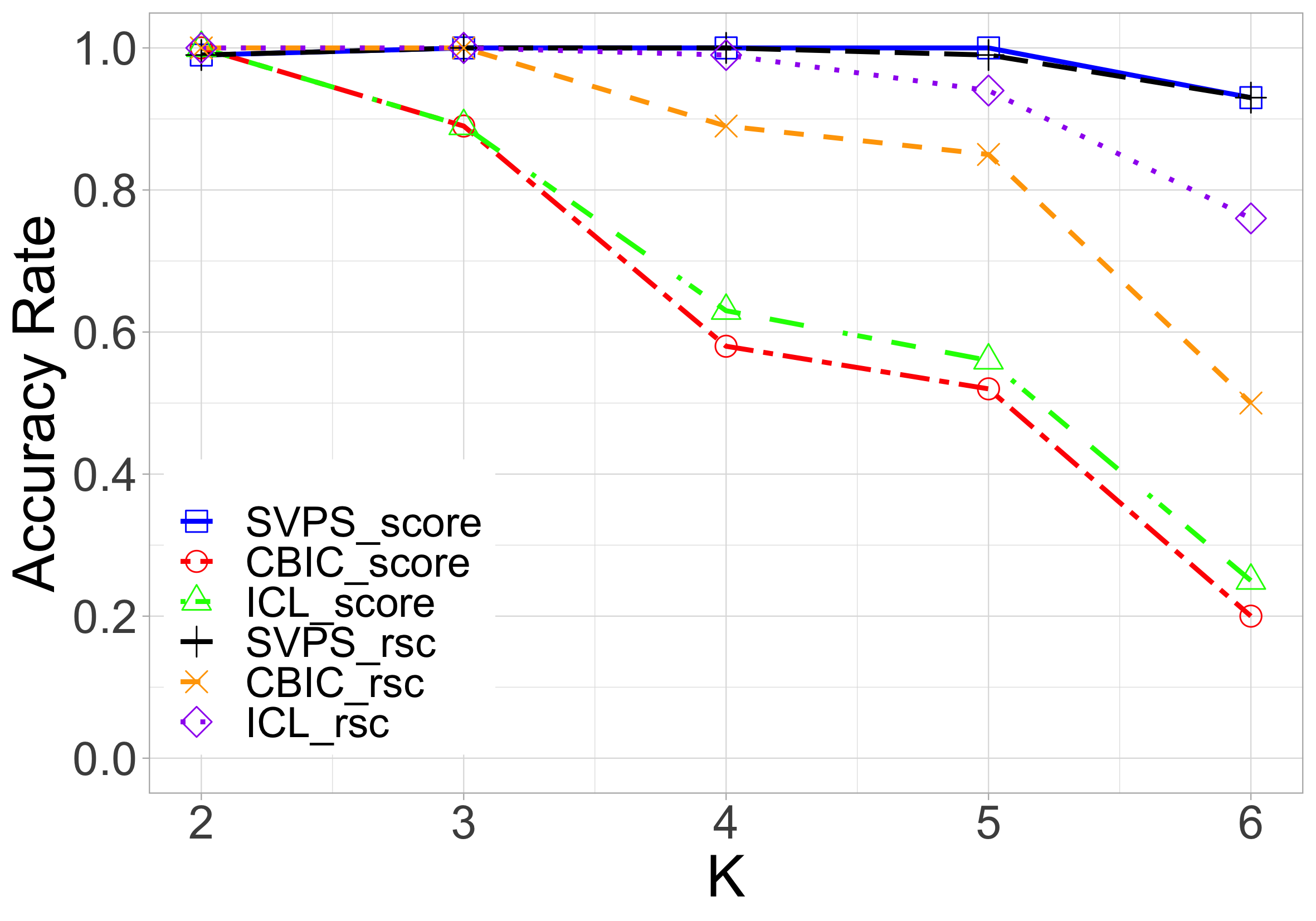}
  \caption{$\rho = 0.12$, $r=2$; $\epsilon = 0.1$}
\end{subfigure}
\hspace{0.001\textwidth}
\raisebox{2.1\height}{\rotatebox[origin=c]{90}{\scalebox{0.8}{Negative Binomial}}}
\end{minipage}%

\caption{Accuracy rate comparison between SVPS, CBIC and ICL in different simulations. The top, middle and bottom rows correspond to the Poisson, binomial and negative binomial distribution, respectively. }
\label{fig:simulation}
\end{figure*}

\subsubsection{Comparison with Score Based Methods}

In the simulated experiments, SVPS is compared with two score based methods: corrected BIC (CBIC) in \citet{hu2020corrected} and integrated classification likelihood (ICL) method in \citet{daudin2008mixture}, both of which are adapted to the weighted DCSBM with explicit likelihood function. To be specific, for each $m=1, 2, ...$, we calculate the following two scores
\begin{align*}
    \text{CBIC}(m) = \log f(\mtx{A} \vert \widehat{\mtx{M}}^{(m)}     ) - \left[ \lambda n \log m + \frac{m(m+1)}{2} \log n \right]
\end{align*}
and
\begin{align*}
    \text{ICL}(m) = & \log f (\mtx{A} \vert \widehat{\mtx{M}}^{(m)}     ) \\
    & - \left[ \sum_{k=1}^m \hat{n}_k \log \left( \frac{n}{\hat{n}_k} \right) + \frac{m(m+2)}{2} \log n \right],
\end{align*}
where $\lambda $ is a tuning parameter, $\widehat{\mtx{M}}^{(m)}$ is the estimated mean matrix, $\hat{n}_k$'s are the estimated block sizes according to $\widehat{\mtx{M}}^{(m)} $, and $f(\mtx{A} \vert \widehat{\mtx{M}}^{(m)})$ is the likelihood function based on the actual generating distribution of the network. Throughout all our experiments, we fix $\lambda = 1$ which is the common choice used in \citet{hu2020corrected}.
Note that although ICL is derived under the SBM, we extend its usage to the weighted DCSBM, similar to the experiments in \citet{hu2020corrected}. 
Ideally, $\widehat{\mtx{M}}^{(m)}$ should be based on the maximum likelihood. However, for the sake of fair comparison with SVPS, we only consider estimating the mean adjacency matrix by spectral clustering as discussed in Section \ref{sec:dcsbm_fitting}.

Next, we compare SVPS with CBIC and ICL in the three aforementioned simulations, experimenting with different combinations of $(\rho, r)$ and $K=2, 3, \ldots, 6$, where the spectral clustering method is either SCORE or RSC. In each simulation setup, we generate $100$ independent networks and record the percentage of correctly estimating $K$ of each method as the accuracy rate for comparison.

The plots of the empirical accuracy rates in the three simulations are shown in Figure \ref{fig:simulation}, with figure captions indicating $(\rho, r)$ and the choice of $\epsilon$ which determines the threshold in SVPS. As we can see, SVPS in general performs much better than the score based methods in these simulations, especially when using SCORE for spectral clustering. This result may seem surprising given that the score based methods utilize the likelihood information of the actual distribution, whereas SVPS does not even use the correct variance function. One possible explanation for the underperformance of the score based methods is that we only fit the models with spectral methods instead of maximum likelihood estimation with EM algorithms. Again, we choose spectral methods for fair comparison, since there is no likelihood as a guidance to fit the weighted DCSBM for SVPS. Another point to note is that in the first column of Figure \ref{fig:simulation}, when $K=3$, we observe a decrease in performance of SVPS with SCORE; however, the underlying reason remains unclear.

\subsection{Les Mis\'{e}rables Network}

In this subsection, we study the \textit{Les Mis\'{e}rables} weighted network compiled in \citet{knuth1993stanford}, also analyzed in the literature of network analysis, see, e.g., \citet{newman2004finding, ball2011efficient, newman2016estimating}. In this network, any two characters (nodes) are connected by a weighted edge representing the number of co-occurrences between the pair in the same chapter of the book. The estimated number of communities by some model based approach in \citet{newman2016estimating} is $6$, which may be reasonable since corresponding major subplots are also identified in their analysis.

With this co-occurrence network, we investigate the difference between the estimated numbers of communities when the network is treated as unweighted and weighted. In the unweighted case, by replacing any positive weight with $1$, we apply CBIC and ICL based on the standard DCSBM, as well as two popular methods for unweighted networks: the Bethe Hessian matrix (BH) proposed in \citet{le2015estimating} and stepwise goodness-of-fit (StGoF) proposed in \citet{jin2022optimal} to the unweighted adjacency matrix to obtain the estimated number of communities $\widehat{K}$. We use SCORE for node clustering in CBIC, ICL and StGoF.  Note that the original StGoF fails to stop in this case, so we choose $\widehat{K}$ as the number of communities which corresponds to the smallest test statistic, as suggested in \citet{jin2022optimal}. The results of rank selection by these methods are listed in Table \ref{tab:lesmis_unweighted}, which are either $3$ or $4$, much smaller than $6$ as suggested in \citet{newman2016estimating}. We apply RSC and SCORE with $K=3$ to the unweighted network, and the clustering results are shown in Figure \ref{fig:lesmis_unweighted_3}.

\begin{table}[h!]
\setlength{\tabcolsep}{10pt}
\renewcommand{\arraystretch}{1.5}
    \centering
    \caption{Estimated $K$ in the unweighted adjacency matrix of Les Mis\'{e}rables by different methods.}
    \begin{tabular}{lcccc}
    \hline
    & BH & CBIC & ICL & StGoF
    \\ \hline
    $\widehat{K}$  & 4 & 3 & 3 & $3$
    \\
    \hline
    \end{tabular}
    \label{tab:lesmis_unweighted}
\end{table}

\begin{figure}[t!]
\begin{subfigure}{.5\textwidth}
  \centering
  \includegraphics[width=0.5\linewidth, height=0.5\linewidth]{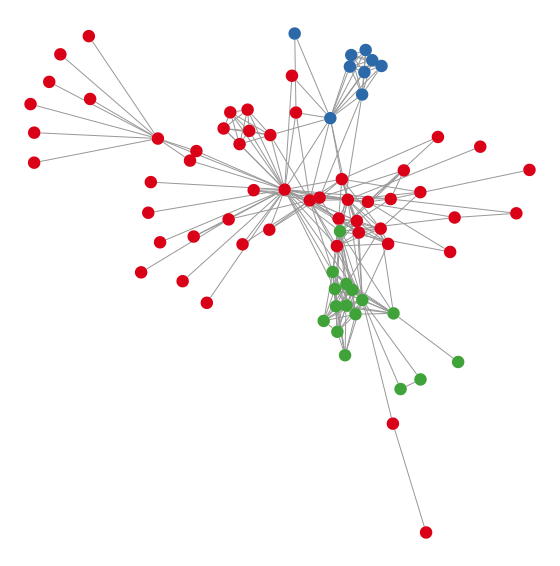}
  \caption{Unweighted with $K=3$ and RSC}
\end{subfigure}%
\begin{subfigure}{.5\textwidth}
  \centering
  \includegraphics[width=0.5\linewidth, height=0.5\linewidth]{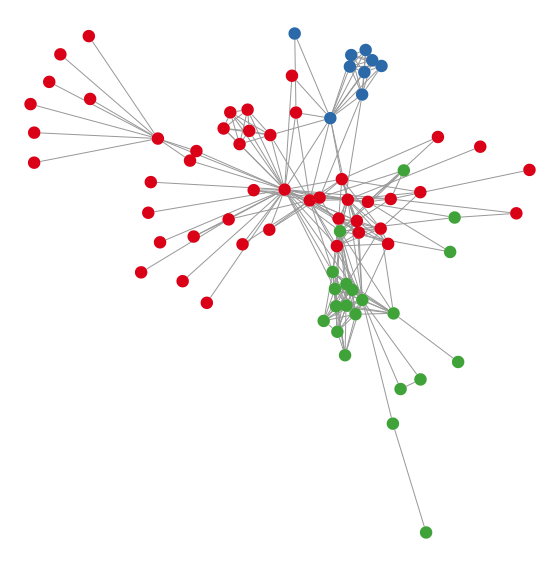}
  \caption{Unweighted with $K=3$ and SCORE}
\end{subfigure}%
\caption{Estimated clusters of nodes in Les Mis\'{e}rables by applying RSC and SCORE to the unweighted adjacency with $K=3$. }
\label{fig:lesmis_unweighted_3}
\end{figure}

Coming back to the original weighted network, we compare the results of rank selection by SVPS, CBIC and ICL using both SCORE and RSC for spectral clustering. CBIC and ICL are based on the likelihood of the Poisson DCSBM.
For the implementation of SVPS, we first preprocess the weighted network into a regularized adjacency matrix $\mtx{A}_{\tau} = \mtx{A}+\tau \mtx{J}_n$, where $\mtx{J}_n$ is the $n\times n$ all-one matrix and $\tau=0.05, 0.1, 0.25, 0.5$. The threshold for sequential testing in SVPS is set to 2.05. The comparison of the estimated $K$ is summarized in Table \ref{tab:lesmis_weighted}, which shows some consistency among these methods. Additionally, Figure \ref{fig:lesmis_weighted_567} displays the estimated clusters by applying SCORE and RSC to the weighted network with $K=5, 6, 7$. It seems that RSC yields more interpretable results than SCORE.

\begin{table}[h!]
\renewcommand{\arraystretch}{1.5}
    \centering
    \caption{Estimated $K$ in Les Mis\'{e}rables by applying SVPS to the regularized weighted adjacency with different $\tau$ values and CBIC, ICL to the original weighted adjacency with Poisson likelihood. The rows correspond to SCORE and RSC as the clustering method.}
    \begin{tabular}{lcccccc}
    \hline
    & \makecell{SVPS \\ ($\tau=0.05$)}  & \makecell{SVPS \\ ($\tau=0.1$)} & \makecell{SVPS \\ ($\tau=0.25$)} & \makecell{SVPS \\ ($\tau=0.5$)} & CBIC & ICL 
    \\ \hline
    SCORE  & 7 & 6 & 7 & 6 & 7 & 7
    \\
    RSC  & 7 & 7 & 6 & 6 & 7 & 5
    \\
    \hline
    \end{tabular}
    \label{tab:lesmis_weighted}
\end{table}

\begin{figure}[t!]
\begin{subfigure}{.5\textwidth}
  \centering
  \includegraphics[width=0.5\linewidth, height=0.5\linewidth]{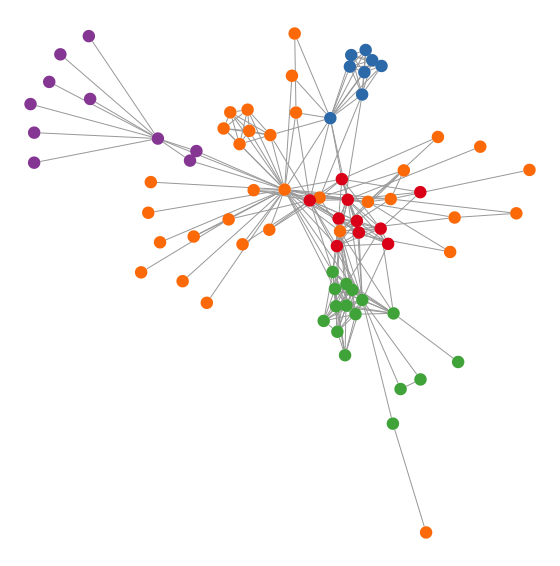}
  \caption{Weighted with $K=5$ and RSC}
\end{subfigure}
\begin{subfigure}{.5\textwidth}
  \centering
  \includegraphics[width=0.5\linewidth, height=0.5\linewidth]{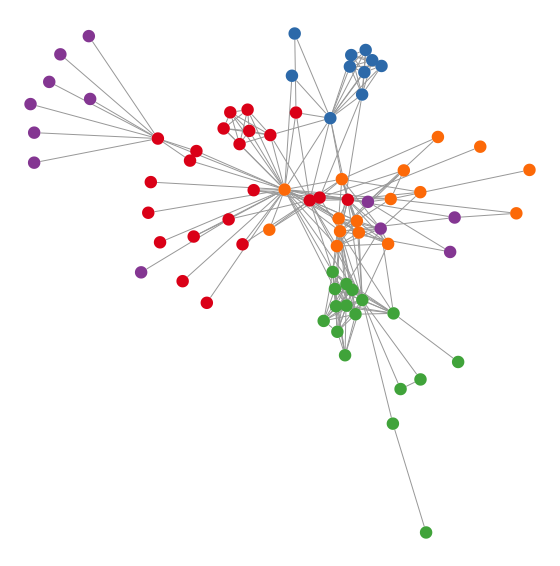}
  \caption{Weighted with $K=5$ and SCORE}
\end{subfigure}
\begin{subfigure}{.5\textwidth}
  \centering
  \includegraphics[width=0.5\linewidth, height=0.5\linewidth]{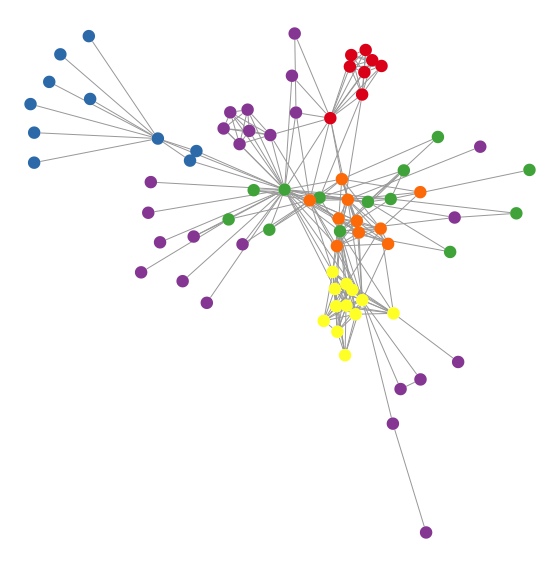}
  \caption{Weighted with $K=6$ and RSC}
\end{subfigure}
\begin{subfigure}{.5\textwidth}
  \centering
  \includegraphics[width=0.5\linewidth, height=0.5\linewidth]{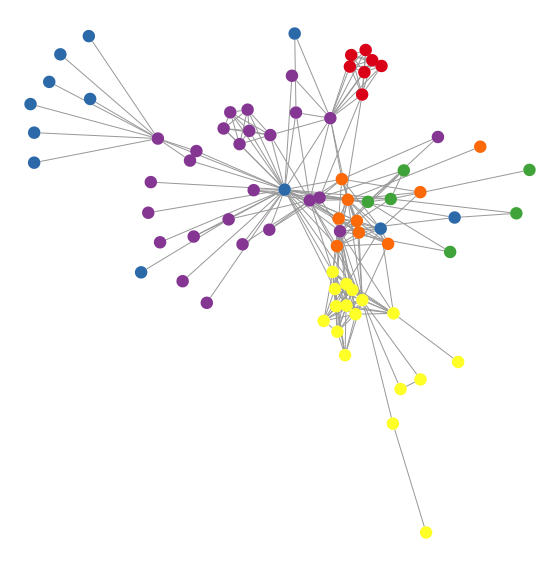}
  \caption{Weighted with $K=6$ and SCORE }
\end{subfigure}
\begin{subfigure}{.5\textwidth}
  \centering
  \includegraphics[width=0.5\linewidth, height=0.5\linewidth]{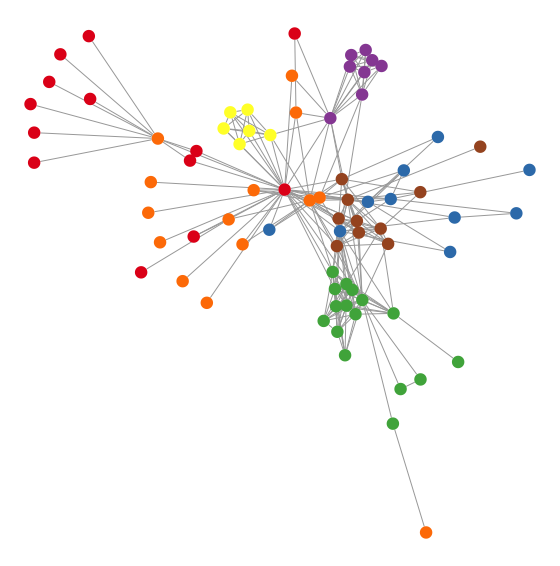}
  \caption{Weighted with $K=7$ and RSC}
\end{subfigure}
\begin{subfigure}{.5\textwidth}
  \centering
  \includegraphics[width=0.5\linewidth, height=0.5\linewidth]{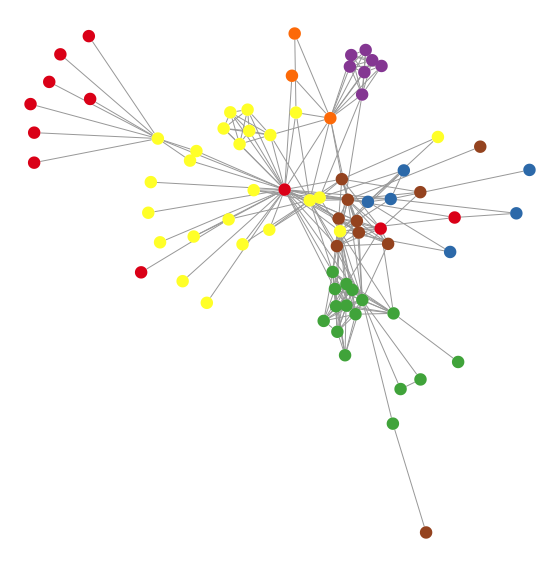}
  \caption{Weighted with $K=7$ and SCORE}
\end{subfigure}
\caption{Estimated clusters in Les Mis\'{e}rables by applying RSC or SCORE to the weighted adjacency with $K=5, 6, 7$.}
\label{fig:lesmis_weighted_567}
\end{figure}

\section{DISCUSSION}
\label{sec:discussion}

This article studies a method for selecting the number of communities for weighted networks. First, we propose a generic weighted DCSBM, where the mean adjacency matrix is modeled as a DCSBM, and the variance profile is linked to the mean adjacency matrix through a variance function, without likelihood imposed. Next, we introduce a sequential testing approach for selecting the number of communities. In each step, the mean structure is fitted with some clustering method, and the variance profile matrix can be estimated through the variance function. A key component in constructing the test statistic is the scaling of the variance profile matrix, which is helpful in normalizing the adjacency matrix. The test statistic, obtained from the normalized adjacency matrix, is then used to determine the number of communities.

In theory, we show the consistency of our proposed procedure in selecting the number of communities under mild conditions on the weighted DCSBM. In particular, the network is allowed to be sparse. The consistency results include analysis of both the null and the underfitting cases. Additionally, the Nonsplitting Property studied in \citet{jin2022optimal} can be extended to the weighted DCSBM, which plays an essential role in our theoretical analysis. 

For future work, a number of questions remain, both theoretically and methodologically. In theory, we assume that the weighted DCSBM satisfies some balanced conditions for analytical convenience, but this condition is very likely to be further relaxed. 
Our theoretical results are established for a known variance function $\nu(\cdot)$, and it would be interesting to have analogous results for an estimated variance function $\hat{\nu}(\cdot)$. Moreover, it would be valuable to study the asymptotic null distribution of $T_{n, m}$, based on which the p-value can be calculated at each step. However, we are unable to derive distributional results for $T_{n, m}$ based on the tools used in this paper. 

A key assumption for our method is that the relationship between the variance profile matrix and the mean adjacency matrix can be characterized by a variance function. It would be interesting to see whether this assumption can be relaxed. Note that the estimation of the variance profile is for the purpose of matrix scaling so that the normalization matrix $\mtx{\Psi}$ can be estimated under the null case. It would be valuable to investigate whether $\mtx{\Psi}$ can be consistently estimated even if the variance profile matrix cannot be consistently estimated. We leave these aforementioned questions for future study.

\section*{Acknowledgment}
Y. Liu and X. Li are partially supported by the NSF via the Career Award DMS-1848575. X. Li would like to thank Tracy Ke, Can Le, Haoran Li, Kaizheng Wang, and Ke Wang for their helpful discussions.

\bibliographystyle{plainnat}
    \bibliography{ref}

\appendix

\newpage

\section*{ORGANIZATION OF THE SUPPLEMENTARY MATERIAL}
The supplementary material consists of four appendices. Appendix \ref{sec:lemmas} contains the supporting lemmas of the main results. Appendix \ref{sec:proof_main} presents the proofs of Theorem \ref{thm:null} and Theorem \ref{thm:under-fitting}. Appendix \ref{sec:proof_nsp} is the proof of Lemma \ref{thm:nsp}, which proves the Nonsplitting Property of SCORE under the weighted DCSBM. Appendix \ref{sec:additional_exp} presents additional synthetic simulations with larger values of $K$.

\section{SUPPORTING LEMMAS}
\label{sec:lemmas}
We first cite two results from \citet{landa2022scaling} regarding the sensitivity analysis of matrix scaling. Here we only state the results for symmetric matrix scaling with row sums equal to $n$.
\begin{lemma}[\citet{landa2022scaling}]
\label{lem:landa_2}
Let $\mtx{A}$ be an $n \times n$ symmetric matrix with positive entries. Then, there exists a unique positive vector $\vct{x} \in \mathbb{R}^n$ satisfying
\[
x_i \left(\sum_{j=1}^n A_{ij}x_j \right) = 1 , \quad i=1, \ldots, n.
\]
Furthermore, denote $a_{\max} = \max_{1 \leq i \leq j \leq n} A_{ij}$ and $a_{\min} = \min_{1 \leq i \leq j \leq n} A_{ij}$. Then,
\[
\frac{1}{\sqrt{n}}\frac{\sqrt{a_{\min}}}{a_{\max}} \leq x_i \leq \frac{1}{\sqrt{n}}\frac{\sqrt{a_{\max}}}{a_{\min}}, \quad i=1, \ldots, n.
\]
\end{lemma}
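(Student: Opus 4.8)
The plan is to read the scaling equations $x_i\left(\sum_j A_{ij}x_j\right)=1$ as the stationarity conditions of a single strictly convex and coercive potential, which delivers existence and uniqueness simultaneously, and then to extract the entrywise bounds from one global identity. Passing to logarithmic variables $y_i=\log x_i$, I would introduce
\[
G(\vct{y}) = \frac{1}{2}\sum_{i,j=1}^n A_{ij}\,e^{y_i+y_j} - \sum_{i=1}^n y_i ,
\]
whose partial derivatives are $\partial G/\partial y_i = e^{y_i}\sum_j A_{ij}e^{y_j}-1 = x_i\sum_j A_{ij}x_j - 1$. Thus the critical points of $G$ correspond exactly to the positive vectors $\vct{x}$ solving the scaling system, and the whole problem reduces to showing that $G$ possesses a unique critical point.

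For uniqueness I would compute the Hessian $H_{ik}=\delta_{ik}r_i + D_{ik}$, where $D_{ik}=A_{ik}e^{y_i+y_k}$ and $r_i=\sum_j D_{ij}$, and verify the identity $\vct{v}^\top H\vct{v} = \frac{1}{2}\sum_{i,k}D_{ik}(v_i+v_k)^2$. Since each $A_{ik}>0$ forces $D_{ik}>0$, this form vanishes only when $v_i+v_k=0$ for all $i,k$, hence (taking $i=k$) only at $\vct{v}=\vct{0}$; so $G$ is strictly convex and has at most one critical point. For existence I would prove coercivity by splitting $y_i=s+z_i$ with $\sum_i z_i=0$: minimizing over the mean $s$ for fixed $\vct{z}$ gives $\min_s G \ge \frac{n}{2}+\frac{n}{2}\log(Q(\vct{z})/n)$ with $Q(\vct{z})=\sum_{ij}A_{ij}e^{z_i+z_j}\ge a_{\min}n^2$, so $G\to\infty$ whenever $\|\vct{z}\|\to\infty$ (since then $\max_i z_i\to\infty$ forces $Q(\vct{z})\to\infty$), while the remaining cases $s\to\pm\infty$ with $\|\vct{z}\|$ bounded are immediate from $\frac{a_{\min}}{2}n^2 e^{2s}-ns$. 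A coercive strictly convex function attains a unique minimizer, which is its unique critical point, giving the existence and uniqueness of $\vct{x}$.

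For the bounds I would first sum the $n$ scaling equations to obtain the identity $\sum_{i,j}A_{ij}x_ix_j=n$. Sandwiching the left-hand side by $a_{\min}S^2 \le \sum_{ij}A_{ij}x_ix_j \le a_{\max}S^2$ with $S=\sum_i x_i$ yields $\sqrt{n/a_{\max}}\le S\le\sqrt{n/a_{\min}}$. Finally, rewriting the $i$-th equation as $x_i = 1/\sum_j A_{ij}x_j$ and bounding $\sum_j A_{ij}x_j$ between $a_{\min}S$ and $a_{\max}S$ gives $\frac{1}{a_{\max}S}\le x_i\le\frac{1}{a_{\min}S}$; substituting the two-sided bound on $S$ produces exactly the claimed inequalities $\frac{1}{\sqrt{n}}\frac{\sqrt{a_{\min}}}{a_{\max}}\le x_i\le\frac{1}{\sqrt{n}}\frac{\sqrt{a_{\max}}}{a_{\min}}$.

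The step I expect to be most delicate is coercivity, and hence existence: the exponential growth of the quadratic energy must be balanced against the linear term $-\sum_i y_i$ simultaneously in the mean direction $s$ and the mean-zero directions $\vct{z}$, which is why the change of variables and the $s$-minimization are essential. By contrast, the uniqueness argument and the bounds are short once the convex-analytic setup and the identity $\sum_{ij}A_{ij}x_ix_j=n$ are in place. Alternatively, since the result is quoted from \citet{landa2022scaling}, existence and uniqueness may simply be taken from the classical theory of symmetric matrix scaling (Sinkhorn-type theorems), in which case the argument collapses to the brief derivation of the entrywise bounds.
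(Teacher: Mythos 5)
Your proposal is correct, and it is worth noting that it proves strictly more than the paper does: the paper gives no proof of this lemma at all, but simply imports it from \citet{landa2022scaling} as a known result on symmetric matrix scaling. Your argument is therefore a genuinely different, self-contained route. The variational part is sound: in the variables $y_i=\log x_i$ the gradient of $G(\vct{y})=\tfrac12\sum_{i,j}A_{ij}e^{y_i+y_j}-\sum_i y_i$ is exactly $x_i\sum_j A_{ij}x_j-1$, the Hessian identity $\vct{v}^\top \mtx{H}\vct{v}=\tfrac12\sum_{i,k}D_{ik}(v_i+v_k)^2$ with $D_{ik}=A_{ik}e^{y_i+y_k}>0$ checks out (the diagonal terms $i=k$ force $\vct{v}=\vct{0}$, giving strict convexity), and your coercivity argument via the split $y_i=s+z_i$, $\sum_i z_i=0$, together with $Q(\vct{z})\geq a_{\min}\left(\sum_i e^{z_i}\right)^2\geq a_{\min}n^2$ (AM--GM) and the one-dimensional minimization in $s$, correctly covers both the $\|\vct{z}\|\to\infty$ and $|s|\to\infty$ regimes. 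The bounds also follow exactly as you say: summing the scaling equations gives $\sum_{i,j}A_{ij}x_ix_j=n$, which sandwiches $S=\sum_i x_i$ between $\sqrt{n/a_{\max}}$ and $\sqrt{n/a_{\min}}$, and then $x_i=1/\sum_j A_{ij}x_j$ yields the claimed entrywise inequalities. What each approach buys: the paper's citation keeps the supplementary material short and defers to the sensitivity analysis in \citet{landa2022scaling}, which is the part it actually needs downstream (Lemma \ref{lem:landa_9}); your proof makes the existence, uniqueness, and quantitative bounds transparent and elementary, at the cost of length, and would be the natural thing to include if one wanted the paper self-contained on this point.
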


\begin{lemma}[\citet{landa2022scaling}]
\label{lem:landa_9}
Let $\widetilde{\mtx{A}}$ be a symmetric matrix with positive entries. Denote $\tilde{a}_{\max} = \max_{1\leq i \leq j \leq n} \widetilde{A}_{ij}$ and $\tilde{a}_{\min} = \min_{1\leq i \leq j \leq n} \widetilde{A}_{ij}$. Suppose there is a constant $\epsilon \in (0, 1)$ and a positive vector $\vct{x} \in \mathbb{R}^n$, such that
\[
\max_{1 \leq i \leq n} \left\vert \sum_{j=1}^n x_i \widetilde{A}_{ij} x_j - 1 \right\vert \leq \epsilon.
\]
Denote $x_{\min} = \min_{1 \leq i \leq n} x_i$. Then, there exists a positive vector $\tilde{\vct{x}} \in \mathbb{R}^n$ such that
\[
\sum_{j=1}^n \tilde{x}_i \widetilde{A}_{ij} \tilde{x}_j = 1, \quad i=1, \ldots, n,
\]
and
\[
\max_{1 \leq i \leq n} \left\vert \frac{\tilde{x}_i}{x_i} - 1\right\vert\leq \frac{\epsilon}{1-\epsilon} + 4 \epsilon \cdot \frac{n^{3/2}}{x_{\min}^3}\cdot\frac{\sqrt{\tilde{a}_{\max}}}{\tilde{a}_{\min}^2 }.
\]
As a consequence, if $\vct{x}$ satisfies the equations in Lemma \ref{lem:landa_2}, there holds
\[
\max_{1 \leq i \leq n} \left\vert \frac{\tilde{x}_i}{x_i} - 1\right\vert\leq \frac{\epsilon}{1-\epsilon} + 4 \epsilon \cdot \frac{a_{\max}^3}{a_{\min}^{3/2}}\cdot\frac{\tilde{a}_{\max}^{1/2}}{\tilde{a}_{\min}^2 }.
\]
\end{lemma}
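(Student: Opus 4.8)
The plan is to separate the two assertions. Existence and uniqueness of a positive $\tilde{\vct{x}}$ with $\sum_{j}\tilde{x}_i\widetilde{A}_{ij}\tilde{x}_j=1$ is exactly Lemma \ref{lem:landa_2} applied to $\widetilde{\mtx{A}}$, so the entire content lies in the perturbation bound on $\max_i|\tilde{x}_i/x_i-1|$. I would work multiplicatively, writing $\tilde{x}_i=x_i(1+\delta_i)$, or equivalently passing to log-coordinates $u_i=\log x_i$ and bounding $\|\tilde{\vct{u}}-\vct{u}\|_\infty$, since to first order $|\tilde{x}_i/x_i-1|$ and $|\tilde{u}_i-u_i|$ coincide.

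The core idea is to realize the exact scaling as the unique zero near $\vct{1}$ of the map $F_i(\vct{u})=\sum_j \widetilde{A}_{ij}e^{u_i+u_j}$, i.e. $F(\tilde{\vct{u}})=\vct{1}$, whereas the given vector satisfies $F(\vct{u})=\vct{r}$ with $\|\vct{r}-\vct{1}\|_\infty\le\epsilon$ by hypothesis. Equivalently, $\tilde{\vct{u}}$ minimizes the strictly convex potential $\Phi(\vct{u})=\tfrac12\sum_{i,j}\widetilde{A}_{ij}e^{u_i+u_j}-\sum_i u_i$, whose gradient is $F(\vct{u})-\vct{1}$ and whose Hessian is $\mtx{H}=\Diag(\vct{r})+\mtx{S}$, where $\mtx{S}$ is the symmetrically scaled matrix $S_{ij}=\widetilde{A}_{ij}x_ix_j$. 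I would first strip off the leading $\tfrac{\epsilon}{1-\epsilon}$ contribution by a single diagonal rescaling that absorbs the residual $\vct{r}-\vct{1}$, and then control the remaining correction via the linearization $\tilde{\vct{u}}-\vct{u}\approx-\mtx{H}^{-1}(\vct{r}-\vct{1})$, so that $\|\tilde{\vct{u}}-\vct{u}\|_\infty\lesssim\|\mtx{H}^{-1}\|_{\infty\to\infty}\,\epsilon$. Estimating this inverse entrywise, using the bounds $\tilde a_{\min}x_{\min}^2\le S_{ij}$ and $S_{ij}\le \tilde a_{\max}x_{\max}^2$, is what should generate the factor $4\epsilon\cdot\frac{n^{3/2}}{x_{\min}^3}\cdot\frac{\sqrt{\tilde a_{\max}}}{\tilde a_{\min}^2}$. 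In practice I would implement this as a Sinkhorn-type contraction: starting from $\vct{u}$, iterate the symmetric balancing map, show it contracts in a weighted sup-norm on a small neighborhood, and invoke Banach's fixed-point theorem to both locate $\tilde{\vct{u}}$ and produce the bound with its $\tfrac{1}{1-\epsilon}$-type amplification.

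The main obstacle is the conditioning of the scaling problem. The natural strong-convexity constant of $\Phi$, namely $\lambda_{\min}(\mtx{H})$, degrades sharply when the scaling factors are very non-uniform, and a crude one-point bound is not enough: one must keep the iterates (or the segment joining $\vct{u}$ to $\tilde{\vct{u}}$) inside a region where $x_{\min}$, and hence the lower bound $S_{ij}\ge\tilde a_{\min}x_{\min}^2$, does not collapse. Guaranteeing this trapping behavior — so that $\mtx{H}$ stays invertible with a quantitatively controlled inverse throughout the correction — is the delicate step, and it is precisely where the powers of $x_{\min}^{-1}$, the entry ratios $\tilde a_{\max}^{1/2}/\tilde a_{\min}^2$, and the dimensional factor $n^{3/2}$ enter.

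Finally, the stated consequence is a routine substitution: when $\vct{x}$ is the exact scaling of $\mtx{A}$ supplied by Lemma \ref{lem:landa_2}, plugging the lower bound $x_{\min}\ge\tfrac{1}{\sqrt n}\tfrac{\sqrt{a_{\min}}}{a_{\max}}$ into the general estimate replaces the factor $n^{3/2}/x_{\min}^3$ by the corresponding function of $a_{\min}$ and $a_{\max}$ (up to the normalization convention for the row sums), which yields the second displayed inequality.
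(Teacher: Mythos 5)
You should first note that the paper contains no proof of this lemma: both displayed bounds are imported from \citet{landa2022scaling} (just as Lemma \ref{lem:landa_2} is), so there is no internal argument to compare against, and your attempt must be judged on its own merits. Your variational setup is sound: the exact scaling is the unique minimizer of the strictly convex potential $\Phi(\vct{u})=\tfrac12\sum_{i,j}\widetilde{A}_{ij}e^{u_i+u_j}-\sum_i u_i$, its gradient is the residual $F(\vct{u})-\vct{1}$, and the Hessian at the given point is indeed $\Diag(\vct{r})+\mtx{S}$ with $S_{ij}=\widetilde{A}_{ij}x_ix_j$; existence and uniqueness do follow from Lemma \ref{lem:landa_2}. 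The problem is that the entire content of the lemma is the explicit constant $4\epsilon\cdot n^{3/2}x_{\min}^{-3}\sqrt{\tilde a_{\max}}/\tilde a_{\min}^{2}$, and your sketch never produces it. The step you call ``delicate'' --- keeping the segment joining $\vct{u}$ to $\tilde{\vct{u}}$ (or the Sinkhorn iterates) inside a region where the Hessian inverse is uniformly controlled --- is not a technical footnote; it \emph{is} the proof. Worse, the estimates you do indicate point toward a different bound: from $S_{ij}\ge \tilde a_{\min}x_{\min}^2$ one gets $\vct{v}^\top\mtx{H}\vct{v}=\tfrac12\sum_{i,j}S_{ij}(v_i+v_j)^2\ge n\,\tilde a_{\min}x_{\min}^2\|\vct{v}\|_2^2$, hence $\|\mtx{H}^{-1}\|_{\infty\to\infty}\le\sqrt{n}\,\|\mtx{H}^{-1}\|_{2\to 2}\le \left(\sqrt{n}\,\tilde a_{\min}x_{\min}^2\right)^{-1}$ and a linearized bound of order $\epsilon/(\sqrt{n}\,\tilde a_{\min}x_{\min}^2)$, which does not match the claimed form; you give no argument that the bootstrapping closes with the stated powers of $n$, $x_{\min}$, $\tilde a_{\max}$, $\tilde a_{\min}$, nor any derivation of the clean leading term $\epsilon/(1-\epsilon)$ (your proposed ``diagonal rescaling that absorbs the residual'' does not obviously reduce the residual at all, since the error $\vct{r}-\vct{1}$ is not separable).

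The second genuine gap is the claim that the consequence is ``routine substitution.'' As the statement is written, it is not: Lemma \ref{lem:landa_2} gives $x_{\min}\ge n^{-1/2}\sqrt{a_{\min}}/a_{\max}$, and substituting this into $n^{3/2}/x_{\min}^3$ yields $n^{3}\,a_{\max}^3/a_{\min}^{3/2}$ --- an extra factor of $n^3$ relative to the displayed consequence. The two displays can only be reconciled by tracking the row-sum normalization (the remark before Lemma \ref{lem:landa_2} mentions row sums equal to $n$, while the displayed equations use row sums equal to $1$) and converting the bound of \citet{landa2022scaling} between conventions; your parenthetical ``up to the normalization convention for the row sums'' flags exactly this issue but does not resolve it. This is not a pedantic point for the present paper: it is the $n$-free consequence --- homogeneous of degree zero in the matrix entries --- that is invoked in the proof of Lemma \ref{lem:theta_B_hat} to conclude $\|\widehat{\mtx{\Psi}}^{1/2}\mtx{\Psi}^{-1/2}-\mtx{I}_n\|=o_P(1)$; a version carrying an $n^3$ factor would be useless there, since the available $\epsilon$ is only $o_P(1)$ with no rate. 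A complete proof must therefore either rederive the perturbation bound directly in the correct normalization or perform this conversion explicitly, and the proposal does neither.
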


The following lemma provides a tail bound of the sum of independent random variables satisfying the Bernstein condition.

\begin{lemma}[Bernstein's Inequality, Corollary 2.11 of \citet{boucheron2013concentration}]
\label{lem:bernstein_inequality}
Let $X_1,\ldots,X_n$ be independent real-valued random variables. Assume that there exist positive numbers $v$ and $c$ such that $\sum_{i=1}^n \E [X_i^2]\leq v$ and
\begin{equation*}
    \sum_{i=1}^n \E \left[|X_i|^q \right] \leq \frac{q!}{2} c^{q-2} v  \quad \text{for all intergers } q\geq 3.
\end{equation*}
Then, for all $t > 0$,
\[
\P \left(\sum_{i=1}^n \left(X_i - \E \left[ X_i \right] \right) \geq t \right) \leq \exp\left( -\frac{t^2}{2(v+ct)} \right).
\]
\end{lemma}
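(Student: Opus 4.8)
Since the statement is the classical Bernstein inequality (and is in fact quoted as Corollary 2.11 of \citet{boucheron2013concentration}), the natural route is the exponential-moment, or Chernoff, method, and I would organize the argument in three movements. First I would reduce the tail bound to a bound on the cumulant generating function of the centered sum $S - \E S$, where $S = \sum_{i=1}^n X_i$. For any $\lambda > 0$, Markov's inequality applied to $e^{\lambda(S - \E S)}$ together with independence gives $\P(S - \E S \ge t) \le e^{-\lambda t}\prod_{i=1}^n \E e^{\lambda(X_i - \E X_i)}$, so the entire problem reduces to a good per-coordinate upper bound on $\E e^{\lambda(X_i - \E X_i)}$.

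Second, I would expand each per-coordinate log-moment-generating function as a power series and control it by the moment hypothesis. Writing $Z_i = X_i - \E X_i$, so that $\E Z_i = 0$, the inequality $\log(1+u) \le u$ yields
\[
\log \E e^{\lambda Z_i} \;\le\; \E e^{\lambda Z_i} - 1 \;=\; \sum_{q \ge 2}\frac{\lambda^q}{q!}\,\E\!\left[Z_i^{\,q}\right].
\]
Summing over $i$, isolating the $q=2$ term (which contributes $\tfrac{\lambda^2}{2}\sum_i \E X_i^2 \le \tfrac{v\lambda^2}{2}$), and inserting the Bernstein moment bound into the higher-order terms, the factorials cancel and the remainder becomes a geometric series in $c\lambda$. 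For every $0 < \lambda < 1/c$ this produces the clean quadratic-type bound
\[
\log \E e^{\lambda(S - \E S)} \;\le\; \frac{v\lambda^2}{2(1 - c\lambda)}.
\]

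Finally, I would optimize the Chernoff exponent. Combining the first two steps gives $\P(S - \E S \ge t) \le \exp\!\big(-\lambda t + \tfrac{v\lambda^2}{2(1-c\lambda)}\big)$ for every admissible $\lambda$, and the choice $\lambda = t/(v+ct)$ — which is positive and, since $v>0$, lies in $(0, 1/c)$ — collapses the right-hand side to exactly $\exp\!\big(-\tfrac{t^2}{2(v+ct)}\big)$, delivering the claimed bound.

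The one genuinely delicate point I anticipate is the passage from the hypothesis, which bounds the raw absolute moments $\sum_i \E[|X_i|^q]$, to a usable bound on the centered quantities $\E[Z_i^{\,q}]$ that actually appear in the series expansion (for odd $q$ these can be negative, which is why textbook formulations phrase the condition in terms of $(X_i)_+$); one must also justify that the power series converges and may be geometric-summed precisely when $c\lambda < 1$. In the application to the weighted DCSBM this subtlety is harmless, because the relevant variables $X_i = A_{ij} - M_{ij}$ are already centered, so $Z_i = X_i$ and $\E[Z_i^{\,q}] \le \E[|X_i|^q]$ directly; the rest of the bookkeeping (the $q!/2$ factor, the geometric sum, and verifying the optimizing $\lambda$) is routine. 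Since the lemma is cited, one may of course simply invoke \citet{boucheron2013concentration} verbatim, but the sketch above is the self-contained derivation I would record.
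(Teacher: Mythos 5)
The paper never actually proves this lemma---it is quoted (with $|X_i|$ in place of the positive part $(X_i)_+$) from Corollary 2.11 of \citet{boucheron2013concentration}, so the paper's ``proof'' is precisely the citation you mention in your last sentence. Measured against the standard textbook argument, the skeleton of your sketch is right: Chernoff bounding via independence, a series bound giving $\log \E e^{\lambda(S-\E S)} \le \frac{v\lambda^2}{2(1-c\lambda)}$ for $0<\lambda<1/c$, and the optimization $\lambda = t/(v+ct)$, which indeed lies in $(0,1/c)$ because $v>0$ and collapses the exponent to exactly $-t^2/(2(v+ct))$.

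The genuine gap is in your second movement, and you diagnose it yourself but do not repair it: expanding $\E e^{\lambda Z_i}-1=\sum_{q\ge 2}\frac{\lambda^q}{q!}\E[Z_i^{\,q}]$ produces \emph{centered} moments, which the hypothesis does not control. Crude patches cannot recover the stated constants: for instance $\E|Z_i|^q \le 2^q\,\E|X_i|^q$ only yields the Bernstein condition with $(2c,4v)$ in place of $(c,v)$, hence the strictly weaker tail $\exp\bigl(-t^2/(2(4v+2ct))\bigr)$; and your fallback of restricting to centered $X_i$ does cover both uses in the paper (Lemmas \ref{lem:E_norm} and \ref{lem:degree_concentration} apply the bound to $A_{ij}-M_{ij}$), but it proves less than the lemma as stated. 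The one-line repair---which is what \citet{boucheron2013concentration} actually do---is to apply $\log u \le u-1$ to the \emph{uncentered} moment generating function and pull out the linear term: $\log \E e^{\lambda Z_i} = \log \E e^{\lambda X_i} - \lambda \E X_i \le \E\bigl[e^{\lambda X_i}-1-\lambda X_i\bigr] = \sum_{q\ge 2}\frac{\lambda^q}{q!}\E[X_i^{\,q}]$. The series is now in raw moments, so $\E[X_i^{\,q}]\le \E[|X_i|^q]$ lets the hypothesis be inserted directly (the sharper textbook variant instead uses $e^u-u-1\le u^2/2$ for $u\le 0$, which is why only $(X_i)_+$-moments are needed there for $q\ge 3$); the interchange of sum and expectation is legitimate by Tonelli, since the hypothesis forces $\sum_{q\ge 2}\frac{\lambda^q}{q!}\,\E|X_i|^q \le \frac{v\lambda^2}{2}\sum_{q\ge 2}(c\lambda)^{q-2} < \infty$ whenever $c\lambda<1$. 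After this substitution, summing over $i$ and the rest of your argument go through verbatim and yield the claimed constant.
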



The following lemma given in \citet{latala2018dimension} bounds the operator norm of heterogeneous random matrices.
\begin{lemma}[Remark 4.12 of \citet{latala2018dimension}]
\label{lem:E_spec_norm}
Suppose that $\mtx{X}$ is a random matrix with independent and centered upper-triangular entries. Define the quantities
\[
 \sigma_\infty  \coloneqq  \underset{i}{\max }\sqrt{\sum_{j}\E \left[X^2_{ij} \right]} , \quad \sigma_\infty^*  \coloneqq \underset{i,j}{\max } \left\| X_{ij} \right\|_{\infty} .
\]
Then, for every $0\leq \epsilon \leq 1$ and $t\geq 0$, we have
\[
\mathbb{P} \left( \| \mtx{X} \| \geq 2(1+\epsilon) \sigma_{\infty} + t\right) \leq n \exp\left(- \frac{\epsilon t^2}{C \sigma_{\infty}^{*2}} \right) ,
\]
where $C$ is a universal constant.
\end{lemma}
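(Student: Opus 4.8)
Since the statement is quoted directly from \citet{latala2018dimension}, the cleanest route is to invoke their Remark 4.12; nevertheless, let me sketch how such a bound is assembled, since the structure clarifies where the difficulty lies. The plan is to split the estimate into (i) a sharp bound on the expected operator norm $\E\norm{\mtx{X}}$ and (ii) a concentration statement for $\norm{\mtx{X}}$ around its mean, and then to reconcile the two into the stated threshold $2(1+\epsilon)\sigma_\infty + t$.

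First I would control $\E\norm{\mtx{X}}$. This is the substantive part and is exactly the dimension-free content of \citet{latala2018dimension}: for a symmetric matrix with independent centered entries one expects $\E\norm{\mtx{X}} \le 2\sigma_\infty + (\text{lower-order})$, where $\sigma_\infty = \max_i \sqrt{\sum_j \E[X_{ij}^2]}$ is the largest $\ell_2$ norm among the rows of standard deviations. The delicate point is the sharp leading constant $2$ --- the semicircle edge --- which requires a careful trace-moment computation (or the noncommutative Khintchine route) rather than a naive matrix Bernstein bound, the latter losing a spurious $\sqrt{\log n}$ factor in front of $\sigma_\infty$. The remaining lower-order term is governed by the entrywise sup-norm $\sigma_\infty^* = \max_{ij}\norm{X_{ij}}_\infty$ and can be absorbed into the slack $2\epsilon\sigma_\infty$ appearing in the threshold.

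Second I would establish concentration of $\norm{\mtx{X}}$ about its expectation. The key observation is that the map $\mtx{X} \mapsto \norm{\mtx{X}}$ is convex and $1$-Lipschitz with respect to the Frobenius norm, being a supremum of the linear functionals $\langle \mtx{X}, \vct{u}\vct{v}^\top\rangle$ over unit vectors, while the independent entries are almost surely bounded by $\sigma_\infty^*$. Talagrand's convex concentration inequality for convex Lipschitz functions of bounded independent variables then yields a dimension-free sub-Gaussian tail of the form $\P(\norm{\mtx{X}} \ge \E\norm{\mtx{X}} + s) \le \exp(-c\, s^2/\sigma_\infty^{*2})$. Combining this with the expectation bound from the first step, and accounting for the $\epsilon$-dependent allocation of the slack, produces the stated inequality, with the prefactor $n$ and the universal constant $C$ emerging from this bookkeeping.

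The main obstacle is entirely in the first step: pinning down the constant $2$ multiplying $\sigma_\infty$ in a dimension-free manner, i.e.\ without an extraneous $\sqrt{\log n}$. This is precisely the difficulty resolved in \citet{latala2018dimension}, which is why we cite their result rather than reprove it here; the concentration step, by contrast, is standard and relies only on convexity of the spectral norm and boundedness of the entries.
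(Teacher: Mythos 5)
Your proposal takes the same approach as the paper: the paper states this lemma purely as a citation to Remark 4.12 of \citet{latala2018dimension} and provides no proof of its own, so invoking that external result is exactly what is done. Your supplementary sketch (sharp expectation bound plus Talagrand's convex concentration inequality, with the prefactor $n$ absorbing the dimension-dependent lower-order term in the trivial regime) is a reasonable reconstruction of how such tail bounds are obtained, but it is strictly extra relative to what the paper requires.
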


\begin{lemma}
\label{lem:spectral_lowerbound}
Under Assumption \ref{ass:DCSBM}, there holds $\left|\lambda_K\left( \mtx{M}\right) \right| \geq c_0 \theta_{\min}^2   n$. 
\end{lemma}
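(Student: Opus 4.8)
The plan is to reduce the $n\times n$, rank-$K$ matrix $\mtx{M}$ to an equivalent $K\times K$ spectral problem and then bound its smallest nonzero eigenvalue in magnitude. Write $\mtx{M}=\mtx{L}\mtx{B}\mtx{L}^\top$ with $\mtx{L}=\mtx{\Theta}\mtx{\Pi}\in\mathbb{R}^{n\times K}$. A direct computation gives $\mtx{L}^\top\mtx{L}=\mtx{\Pi}^\top\mtx{\Theta}^2\mtx{\Pi}=\mtx{D}$, where $\mtx{D}=\Diag(d_1,\dots,d_K)$ is diagonal with $d_k=\sum_{i\in\mathcal{N}_k}\theta_i^2$, since each node lies in exactly one community. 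Hence the columns of $\mtx{Q}:=\mtx{L}\mtx{D}^{-1/2}$ are orthonormal, i.e. $\mtx{Q}^\top\mtx{Q}=\mtx{I}_K$, and $\mtx{M}=\mtx{Q}\,(\mtx{D}^{1/2}\mtx{B}\mtx{D}^{1/2})\,\mtx{Q}^\top$.

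Set $\mtx{C}:=\mtx{D}^{1/2}\mtx{B}\mtx{D}^{1/2}$, a symmetric $K\times K$ matrix. Because $\mtx{Q}$ has orthonormal columns, every eigenpair $(\mu,\vct{v})$ of $\mtx{C}$ satisfies $\mtx{M}(\mtx{Q}\vct{v})=\mtx{Q}\mtx{C}\vct{v}=\mu\,\mtx{Q}\vct{v}$, so each eigenvalue of $\mtx{C}$ is an eigenvalue of $\mtx{M}$. Since $\mtx{B}$ and $\mtx{D}$ are invertible, $\mtx{C}$ is invertible, and as $\mtx{M}$ has rank $K$ these are exactly the $K$ nonzero eigenvalues of $\mtx{M}$. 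In particular $|\lambda_K(\mtx{M})|$, the smallest nonzero eigenvalue in magnitude, equals $\sigma_{\min}(\mtx{C})$.

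It remains to lower bound $\sigma_{\min}(\mtx{C})=1/\norm{\mtx{C}^{-1}}$. Using $\mtx{C}^{-1}=\mtx{D}^{-1/2}\mtx{B}^{-1}\mtx{D}^{-1/2}$ and submultiplicativity, $\norm{\mtx{C}^{-1}}\le \norm{\mtx{D}^{-1/2}}^2\,\norm{\mtx{B}^{-1}}=\big(\min_k d_k\big)^{-1}\,|\lambda_K(\mtx{B})|^{-1}$, whence $|\lambda_K(\mtx{M})|\ge |\lambda_K(\mtx{B})|\cdot \min_k d_k$. (Equivalently, Ostrowski's inertia theorem gives $\lambda_j(\mtx{C})=\omega_j\lambda_j(\mtx{B})$ with $\omega_j\in[\min_k d_k,\max_k d_k]$.) Now the balancedness and community-connectivity parts of Assumption \ref{ass:DCSBM} supply $|\lambda_K(\mtx{B})|\ge c_0$ and $\min_k d_k\ge \min_k n_k\,\theta_{\min}^2\ge c_0\,n\,\theta_{\min}^2$, yielding $|\lambda_K(\mtx{M})|\ge c_0^2\,\theta_{\min}^2 n$, i.e. the claimed bound after relabeling the constant depending on $c_0$. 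The only genuinely delicate point is that $\mtx{B}$ need not be positive definite, so I work throughout with magnitudes and singular values, using $\norm{\mtx{B}^{-1}}=1/|\lambda_K(\mtx{B})|$ (valid since $\mtx{B}$ is symmetric) rather than any PSD ordering; the rest is a routine congruence computation.
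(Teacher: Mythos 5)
Your proof is correct and takes essentially the same route as the paper's: both exploit $\mtx{M}=(\mtx{\Theta}\mtx{\Pi})\mtx{B}(\mtx{\Theta}\mtx{\Pi})^\top$ together with the orthogonality of the columns of $\mtx{\Theta}\mtx{\Pi}$ (disjoint supports) to reduce to the congruence bound $|\lambda_K(\mtx{M})| \geq |\lambda_K(\mtx{B})|\cdot \sigma_K(\mtx{\Theta}\mtx{\Pi})^2$, which the paper asserts directly and which your orthonormalization $\mtx{Q}=\mtx{L}\mtx{D}^{-1/2}$ and inverse-norm argument prove explicitly. The constant $c_0^2$ you end up with (rather than the stated $c_0$) is in fact what the paper's own computation yields when the column-norm bound is done carefully ($\min_k\|\vec{\theta}_k\|_2 \geq \theta_{\min}\sqrt{c_0 n}$, not $\theta_{\min}\sqrt{n}$), and the discrepancy is immaterial since the lemma is only ever invoked up to constants depending on $c_0$.
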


\begin{proof}
Recall that $\mtx{M} = \mtx{\Theta} \mtx{\Pi} \mtx{B} \mtx{\Pi}^\top \mtx{\Theta}$, where $\mtx{\Theta} \mtx{\Pi} = \left[ \vec{\theta}_1, \vec{\theta}_2, \ldots, \vec{\theta}_K \right]$. Denote $\sigma_k \left( \cdot \right)$ as the $k$-th singular value of a matrix. It is easy to observe
\[
\sigma_K \left(\mtx{\Theta} \mtx{\Pi} \right) = \underset{k}{\min} \| \vec{\theta}_k \|_2 \geq \theta_{\min} \sqrt{n} .
\]
Then by the assumption \eqref{eq:B_entry_ass}, we have
\[
|\lambda_K(\mtx{M})| \geq \left| \lambda_K(\mtx{B}) \right| \cdot \sigma_K\left(\mtx{\Theta} \mtx{\Pi} \right)^2  \geq c_0  \theta_{\min}^2 n.
\]
\end{proof}

\begin{lemma}
\label{lem:psi_bound}
Under Assumption \ref{ass:DCSBM}, the scaling factors satisfy 
\[
\left. 1\middle/\left(C \theta_{\min} \sqrt{n}\right) \right. \leq \psi_i \leq \left. C \middle/\left(\theta_{\min} \sqrt{n}\right) \right., \quad i=1, \ldots, n
\]
for some constant $C$ that only relies on $c_0$.
\end{lemma}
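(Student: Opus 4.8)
The plan is to apply the matrix-scaling bound in Lemma \ref{lem:landa_2} directly to the variance profile matrix $\mtx{V}$, since the defining equation \eqref{eq:pop_scaling} for the scaling factors $\psi_i$ is exactly the symmetric scaling equation $x_i\left(\sum_j V_{ij} x_j\right) = 1$ of Lemma \ref{lem:landa_2} with $x_i = \psi_i$. Writing $v_{\max} = \max_{i \leq j} V_{ij}$ and $v_{\min} = \min_{i \leq j} V_{ij}$, Lemma \ref{lem:landa_2} then yields
\[
\frac{1}{\sqrt{n}} \frac{\sqrt{v_{\min}}}{v_{\max}} \leq \psi_i \leq \frac{1}{\sqrt{n}} \frac{\sqrt{v_{\max}}}{v_{\min}}, \quad i=1, \ldots, n.
\]
So it remains only to show that both $v_{\max}$ and $v_{\min}$ are of order $\theta_{\min}^2$, up to constants depending only on $c_0$; feeding such bounds into the display above contributes a factor $\theta_{\min}$ from the $\sqrt{v}$ terms and a factor $\theta_{\min}^2$ from the denominators, which collapses to the stated $1/(\theta_{\min}\sqrt{n})$ scaling.

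Next I would bound the entries of $\mtx{V}$. By the variance-mean relation $V_{ij} = \nu(M_{ij})$ and the two-sided bound in \eqref{eq:variance_linearity}, we have $c_0 M_{ij} \leq V_{ij} \leq M_{ij}/c_0$, so it suffices to control the entries of $\mtx{M}$. From $M_{ij} = \theta_i \theta_j B_{\phi(i)\phi(j)}$ together with $c_0 \leq B_{kl} \leq 1$ in \eqref{eq:B_entry_ass}, I obtain $c_0 \theta_{\min}^2 \leq M_{ij} \leq \theta_{\max}^2$ for all $i,j$. The balancedness condition \eqref{eq:balance_ass} gives $\theta_{\max} \leq \theta_{\min}/c_0$, hence $\theta_{\max}^2 \leq \theta_{\min}^2/c_0^2$, and combining these I get $c_0^2 \theta_{\min}^2 \leq v_{\min} \leq v_{\max} \leq \theta_{\min}^2/c_0^3$, so that both extremal entries lie between constant multiples of $\theta_{\min}^2$.

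Finally, I would substitute $v_{\max} \leq \theta_{\min}^2/c_0^3$ and $v_{\min} \geq c_0^2 \theta_{\min}^2$ into the two-sided bound and collect the powers of $c_0$, which produces a single constant $C = C(c_0)$ in both directions (e.g. an upper bound of the form $1/(c_0^{7/2}\theta_{\min}\sqrt{n})$ and a lower bound of the form $c_0^4/(\theta_{\min}\sqrt{n})$, after which one takes the larger of the two reciprocal constants). There is no genuine obstacle: this is a short computation whose only delicate point is bookkeeping the powers of $c_0$ through the square roots, so as to confirm that the resulting $C$ depends on $c_0$ alone and is independent of $n$ and $\theta_{\min}$.
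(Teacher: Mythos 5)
Your proposal is correct and follows essentially the same route as the paper: the paper's proof simply invokes Lemma \ref{lem:landa_2} under Assumption \ref{ass:DCSBM}, and your argument fills in exactly the intended details, namely that \eqref{eq:pop_scaling} is the scaling equation of Lemma \ref{lem:landa_2} applied to $\mtx{V}$, and that $c_0 M_{ij} \leq V_{ij} \leq M_{ij}/c_0$ together with $c_0\theta_{\min}^2 \leq M_{ij} \leq \theta_{\max}^2 \leq \theta_{\min}^2/c_0^2$ pins both $v_{\min}$ and $v_{\max}$ to the order $\theta_{\min}^2$, yielding the stated bounds with $C=C(c_0)$.
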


\begin{proof}
In light of Assumption \ref{ass:DCSBM}, this result is implied by Lemma \ref{lem:landa_2} directly.
\end{proof}

\begin{lemma}
\label{lem:E_norm}
For any fixed $c_0>0$ in Assumption \ref{ass:DCSBM}, there holds
\[
\left\| \mtx{\Psi}^{\frac{1}{2}}(\mtx{A}-\mtx{M})\mtx{\Psi}^{\frac{1}{2}} \right\| \leq 2+o_P(1) \quad \text{as~} n \rightarrow \infty. 
\]
\end{lemma}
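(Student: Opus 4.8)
The plan is to set $\mtx{X} \coloneqq \mtx{\Psi}^{\frac{1}{2}}(\mtx{A}-\mtx{M})\mtx{\Psi}^{\frac{1}{2}}$, a symmetric matrix whose upper-triangular entries $X_{ij}=\sqrt{\psi_i\psi_j}\,(A_{ij}-M_{ij})$ are independent and mean-zero, and to bound $\|\mtx{X}\|$ using the heterogeneous operator-norm estimate of Lemma \ref{lem:E_spec_norm}. The whole point of the variance-profile scaling is that the relevant variance parameter is \emph{exactly} one: since $\E[X_{ij}^2]=\psi_i\psi_j V_{ij}$, the doubly-stochastic identity \eqref{eq:pop_scaling} gives $\sum_j \E[X_{ij}^2]=\psi_i\sum_j V_{ij}\psi_j=1$ for every $i$, so $\sigma_\infty(\mtx{X})=1$. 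If Lemma \ref{lem:E_spec_norm} applied to $\mtx{X}$ directly with $\epsilon=t=o(1)$, the threshold $2(1+\epsilon)\sigma_\infty+t$ would collapse to $2+o(1)$ and we would be done at once.

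The obstacle — and the reason the stronger sparsity $\theta_{\min}\ge \log^3 n/\sqrt{n}$ is imposed — is that Lemma \ref{lem:E_spec_norm} also requires control of the entrywise sup-norm $\sigma_\infty^*=\max_{ij}\|X_{ij}\|_\infty$, whereas $A_{ij}$ may be unbounded (e.g.\ Poisson, negative binomial). I would resolve this by truncation. Fix a level $B=c_1\log n$ with $c_1=c_1(c_0)$ large, write $W_{ij}=A_{ij}-M_{ij}$, and split into centered bounded and tail parts,
\[
Y_{ij} = W_{ij}\mathbf{1}\{|W_{ij}|\le B\} - \E[W_{ij}\mathbf{1}\{|W_{ij}|\le B\}], \qquad Z_{ij} = W_{ij}\mathbf{1}\{|W_{ij}|> B\} - \E[W_{ij}\mathbf{1}\{|W_{ij}|> B\}],
\]
so that $W_{ij}=Y_{ij}+Z_{ij}$ and $\mtx{X}=\mtx{\Psi}^{\frac{1}{2}}\mtx{Y}\mtx{\Psi}^{\frac{1}{2}}+\mtx{\Psi}^{\frac{1}{2}}\mtx{Z}\mtx{\Psi}^{\frac{1}{2}}$.

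For the bounded part I would apply Lemma \ref{lem:E_spec_norm} to $\mtx{\Psi}^{\frac{1}{2}}\mtx{Y}\mtx{\Psi}^{\frac{1}{2}}$. Truncation only lowers the variance, so $\E[Y_{ij}^2]\le V_{ij}$ and hence $\sigma_\infty\le 1$ exactly as above; meanwhile $|Y_{ij}|\le 2B$, and Lemma \ref{lem:psi_bound} with $\theta_{\min}\sqrt{n}\ge \log^3 n$ gives $\sqrt{\psi_i\psi_j}\le C/\log^3 n$, whence $\sigma_\infty^*\le 2CB/\log^3 n = O(1/\log^2 n)\to 0$. Choosing $\epsilon_n=t_n=(\log n)^{-1/4}\to 0$ keeps the threshold at $2(1+\epsilon_n)+t_n=2+o(1)$, while the failure probability $n\exp(-\epsilon_n t_n^2/(C\sigma_\infty^{*2}))$ tends to $0$ because its exponent grows like $\epsilon_n t_n^2\log^4 n=(\log n)^{13/4}$, which dominates $\log n$. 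This yields $\|\mtx{\Psi}^{\frac{1}{2}}\mtx{Y}\mtx{\Psi}^{\frac{1}{2}}\|\le 2+o_P(1)$.

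For the tail part I would show it is negligible. A single-variable Bernstein bound from the moment condition \eqref{eq:bernstein} (Lemma \ref{lem:bernstein_inequality} applied to $\pm W_{ij}$, using that $V_{ij}=\nu(M_{ij})\le M_{ij}/c_0\le\theta_{\max}^2/c_0$ is bounded) gives $\P(|W_{ij}|>B)\le 2e^{-B/(4R(c_0))}=2n^{-c_1/(4R(c_0))}$; taking $c_1$ large, a union bound over the $O(n^2)$ entries shows that with probability tending to $1$ \emph{no} entry exceeds $B$. On that event $Z_{ij}$ reduces to the deterministic constant $-\E[W_{ij}\mathbf{1}\{|W_{ij}|>B\}]$, which by tail integration is $O(B\,n^{-c_1/(4R(c_0))})$, and bounding the operator norm of this deterministic matrix by $n$ times its largest entry together with $\sqrt{\psi_i\psi_j}\le C/(\theta_{\min}\sqrt{n})$ gives $\|\mtx{\Psi}^{\frac{1}{2}}\mtx{Z}\mtx{\Psi}^{\frac{1}{2}}\|=O(n^{1-c_1/(4R(c_0))}/\log^2 n)=o(1)$ once $c_1>4R(c_0)$. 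Combining the two parts via the triangle inequality on the high-probability event then finishes the proof. I expect the truncation bookkeeping — choosing $B$, $\epsilon_n$, $t_n$ so that \emph{simultaneously} the threshold collapses to $2$, the Lemma \ref{lem:E_spec_norm} failure probability vanishes, and the deterministic tail matrix is $o(1)$ — to be the main technical obstacle, and it is precisely this interplay that forces the extra logarithmic factors in the sparsity assumption.
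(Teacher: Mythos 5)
Your proposal is correct and follows essentially the same route as the paper's proof: truncate the noise at a logarithmic level, center, apply the heterogeneous spectral-norm bound of Lemma \ref{lem:E_spec_norm} with $\epsilon = t = (\log n)^{-1/4}$ to the bounded part, and use Bernstein's inequality (Lemma \ref{lem:bernstein_inequality}) to show that both the truncation bias and the probability that any entry exceeds the truncation level are negligible. The only differences are bookkeeping: the paper truncates the scaled entries $E_{ij}$ at level $1/\log n$ and bounds the bias term $\|\E[\widehat{\mtx{E}}]\|_F$ via Cauchy--Schwarz, whereas you truncate the raw noise $W_{ij}$ at $c_1\log n$ and bound the deterministic tail matrix by $n$ times its largest entry.
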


\begin{proof}
Denote $\mtx{E} \coloneqq \mtx{\Psi}^{\frac{1}{2}}(\mtx{A}-\mtx{M})\mtx{\Psi}^{\frac{1}{2}}$ whose entries are $E_{ij} = (A_{ij} - M_{ij}) \psi_i^{\frac{1}{2}} \psi_j^{\frac{1}{2}}$. By \eqref{eq:pop_scaling}, we have
\begin{equation}
\label{eq:double_stochastic}
\sum_{j=1}^n \E \left[E_{ij}^2 \right] =1, \quad \forall~i=1,\ldots,n.
\end{equation}
Define 
\[
\widehat{E}_{ij} = E_{ij} \mathbf{1}_{\{|E_{ij}| < 1/(\log n)\}} \quad \text{and} \quad \widetilde{E}_{ij} = \widehat{E}_{ij} - \E \left[\widehat{E}_{ij}\right].
\]
The resulting matrices are denoted as $\widehat{\mtx{E}}$ and $\widetilde{\mtx{E}}$, respectively. Our proof strategy is to first bound $\| \widetilde{\mtx{E}} \| $, and then $\| \widehat{\mtx{E}} \| $, and finally $\| \mtx{E} \| $.

Notice that $\widetilde{\mtx{E}}$ has independent and mean-centered entries for $i\geq j$. Since 
\[
\left|\widehat{E}_{ij} \right| = \left|E_{ij}\right| \mathbf{1}_{\{|E_{ij}| < 1/(\log n)\}} \leq \left|E_{ij}\right|,
\]
almost surely we have
\begin{align*}
\E \left[\widetilde{E}^2_{ij}\right] = \Var \left( \widetilde{E}_{ij} \right) = \Var \left( \widehat{E}_{ij} \right) \leq \E \left[\widehat{E}_{ij}^2\right] \leq \E \left[E_{ij}^2\right].
\end{align*}
This implies that
\[
\sum_{j=1}^n \E \left[\widetilde{E}_{ij}^2 \right] \leq 1, \quad \forall~ i=1,\ldots,n.
\]
Also, since $\| \widehat{E}_{ij} \|_{\infty}\leq 1/(\log n)$, we have $\vert \E [ \widehat{E}_{ij} ] \vert \leq 1/(\log n)$, and thus 
\[
\| \widetilde{E}_{ij} \|_{\infty} \leq \| \widehat{E}_{ij} \|_{\infty} + \vert \E \widehat{E}_{ij} \vert \leq 2/(\log n).
\]
Define the quantities
\[
 \tilde{\sigma}  \coloneqq  \underset{i}{\max }\sqrt{\sum_{j}\E [\widetilde{E}^2_{ij}] } \leq 1, \quad \tilde{\sigma}_*  \coloneqq \underset{i,j}{\max } \| \widetilde{E}_{ij} \|_{\infty} \leq 2/(\log n).
\]
Then, apply Corollary \ref{lem:E_spec_norm} to $\widetilde{\mtx{E}}$. For any $t > 0$, there holds
\[
\mathbb{P} \left( \| \widetilde{\mtx{E}} \| \geq 2(1+\epsilon) \tilde{\sigma} + t\right) \leq n \exp\left(- \frac{\epsilon t^2}{C \tilde{\sigma}_*^2} \right).
\]
By letting $\epsilon = t = \log^{-1/4} n$, with the above inequalities, we have
\begin{equation}
\label{eq:E_tilde_spec_ori}
\| \widetilde{\mtx{E}} \| \leq 2+o_P(1) \quad \text{as~} n \rightarrow \infty. 
\end{equation}



Next, we give an upper bound of $\| \widehat{\mtx{E}} \|$. Since $\widetilde{\mtx{E}} = \widehat{\mtx{E}} - \E [ \widehat{\mtx{E}} ]$, we have $\| \widehat{\mtx{E}} \| \leq \| \widetilde{\mtx{E}} \| + \| \E [ \widehat{\mtx{E}} ] \|_F$.
Notice that
\[
\E \left[E_{ij} \mathbf{1}_{\{|E_{ij}| \geq 1/(\log n)\}}\right] = \E \left[E_{ij} - \widehat{E}_{ij}\right]  = -\E \left[\widehat{E}_{ij}\right].
\]
By Cauchy-Schwarz inequality, 
\begin{equation}
\label{eq:trunc_bias}
\left(\E \left[\widehat{E}_{ij}\right]\right)^2 = \left( \E \left[E_{ij} \mathbf{1}_{\{|E_{ij}| \geq 1/(\log n)\}}\right] \right)^2 \leq \E [E_{ij}^2 ]  \P \left( |E_{ij}| \geq 1/(\log n) \right).
\end{equation}
By Lemma \ref{lem:psi_bound}, there holds
\begin{align*}
\P \left(\left|E_{ij}\right| \geq 1/(\log n) \right)  &= \P \left(\left|A_{ij} - M_{ij}\right| \geq \frac{1}{\psi_i^{1/2} \psi_j^{1/2} \log n}  \right) 
\\
&\leq \P \left(\left|A_{ij} - M_{ij}\right| \geq \frac{\theta_{\min} \sqrt{n}}{C \log n}\right) .
\end{align*}
Then, by Assumption \ref{ass:DCSBM} and Lemma \ref{lem:bernstein_inequality}, we have
\begin{align}
\label{eq:trunc_effect}
\max_{1 \leq i, j \leq n} \P \left( \left|E_{ij}\right| \geq 1/(\log n) \right)  &\leq \max_{1 \leq i \leq j \leq n} 2 \exp\left( -\frac{ \left(\frac{\theta_{\min} \sqrt{n}}{C \log n} \right)^2 }{2\left( M_{ij}/c_0 + R \frac{\theta_{\min} \sqrt{n}}{C \log n} \right)} \right) \nonumber
\\
& \leq 2 \exp\left( - \frac{C \theta_{\min} n }{ \theta_{\min} \log^2 n +  \sqrt{n} \log n} \right) \nonumber
\\
& = O(n^{-10}),
\end{align}
where the last line is due to \eqref{eq:sparseness} in Assumption \ref{ass:DCSBM}. Then, \eqref{eq:double_stochastic} and \eqref{eq:trunc_bias} imply
\begin{align*}
\left\| \E [\widehat{\mtx{E}}] \right\|_F^2 &= \sum_{1 \leq i, j \leq n}\left(\E \widehat{E}_{ij}\right)^2 \leq \sum_{1 \leq i, j \leq n} \E [E_{ij}^2] \P \left( \left|E_{ij}\right| \geq 1/(\log n) \right) = O(n^{-9}).
\end{align*}
Then \eqref{eq:E_tilde_spec_ori} implies
\begin{equation}
\label{eq:E_hat}
\| \widehat{\mtx{E}} \| \leq \| \widetilde{\mtx{E}} \| + \| \E [ \widehat{\mtx{E}} ] \|_F \leq 2+o_P(1).
\end{equation}
Finally, note that 
\begin{align*}
\P \left(\widehat{\mtx{E}} \neq \mtx{E} \right) 
&= \P\left( \cup_{1 \leq i, j \leq n}\{|E_{ij}| \geq 1/(\log n)\} \right)
\\
&\leq \sum_{1 \leq i, j \leq n } \P (|E_{ij}| \geq 1/(\log n) )
\\
&\leq n^2 \max_{1 \leq i, j \leq n } \P (|E_{ij}| \geq 1/(\log n) ) = O(n^{-8}).
\end{align*}
Then \eqref{eq:E_hat} implies $\| \mtx{E} \| \leq 2+o_P(1)$.
\end{proof}

\begin{lemma}
\label{lem:noise_norm}
Under Assumption \ref{ass:DCSBM}, there holds
\[
\left\| \mtx{A}-\mtx{M} \right\| =  O_P(\theta_{\min} \sqrt{n}) \quad \text{as~} n \rightarrow \infty. 
\]
\end{lemma}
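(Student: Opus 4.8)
The plan is to transfer the spectral-norm bound already obtained in Lemma \ref{lem:E_norm} for the scaled noise matrix $\mtx{E} = \mtx{\Psi}^{1/2}(\mtx{A}-\mtx{M})\mtx{\Psi}^{1/2}$ back to the unscaled noise matrix $\mtx{A}-\mtx{M}$, using the two-sided bounds on the scaling factors $\psi_i$ from Lemma \ref{lem:psi_bound}. The key algebraic observation is that $\mtx{A}-\mtx{M} = \mtx{\Psi}^{-1/2}\mtx{E}\mtx{\Psi}^{-1/2}$, so that
\[
\left\| \mtx{A} - \mtx{M}\right\| = \left\| \mtx{\Psi}^{-1/2}\mtx{E}\mtx{\Psi}^{-1/2}\right\| \leq \left\|\mtx{\Psi}^{-1/2}\right\|^2 \left\| \mtx{E}\right\|,
\]
where I use submultiplicativity of the operator norm together with the symmetry $\left\|\mtx{\Psi}^{-1/2}\right\| = \left\|\mtx{\Psi}^{-1/2}\right\|$.

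Next I would control each factor on the right-hand side. Since $\mtx{\Psi}^{-1/2}$ is diagonal, its operator norm is simply $\max_i \psi_i^{-1/2}$. By Lemma \ref{lem:psi_bound}, $\psi_i \geq 1/(C\theta_{\min}\sqrt{n})$, which gives $\psi_i^{-1/2} \leq \sqrt{C\theta_{\min}\sqrt{n}} = \sqrt{C}\,(\theta_{\min}\sqrt{n})^{1/2}$, and hence $\left\|\mtx{\Psi}^{-1/2}\right\|^2 \leq C\,\theta_{\min}\sqrt{n}$. For the second factor, Lemma \ref{lem:E_norm} already provides $\left\|\mtx{E}\right\| \leq 2 + o_P(1)$, which is in particular $O_P(1)$. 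Combining these two bounds yields
\[
\left\| \mtx{A}-\mtx{M}\right\| \leq C\,\theta_{\min}\sqrt{n}\cdot\left(2 + o_P(1)\right) = O_P\!\left(\theta_{\min}\sqrt{n}\right),
\]
which is exactly the claimed conclusion.

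There is essentially no serious obstacle here: the statement is a direct corollary of the two preceding lemmas. The only point requiring a small amount of care is the correct bookkeeping of the exponents when passing from the bound on $\psi_i$ to the bound on $\left\|\mtx{\Psi}^{-1/2}\right\|^2$, and noting that it is the \emph{lower} bound $\psi_i \geq 1/(C\theta_{\min}\sqrt{n})$ (not the upper bound) that controls the inverse scaling. I would therefore keep the proof to these three short displays, citing Lemma \ref{lem:psi_bound} for the scaling factors and Lemma \ref{lem:E_norm} for the scaled spectral norm, and collecting the constants into the single $c_0$-dependent constant $C$.
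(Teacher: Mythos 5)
Your proof is correct and follows exactly the paper's route: the paper also derives this lemma as a direct corollary of Lemma \ref{lem:psi_bound} and Lemma \ref{lem:E_norm}, undoing the scaling $\mtx{A}-\mtx{M} = \mtx{\Psi}^{-1/2}\mtx{E}\mtx{\Psi}^{-1/2}$ and using the lower bound on the $\psi_i$. Your exponent bookkeeping ($\|\mtx{\Psi}^{-1/2}\|^2 = \max_i \psi_i^{-1} \leq C\,\theta_{\min}\sqrt{n}$) is exactly the right detail to spell out.
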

\begin{proof}
This is a straightforward corollary of Lemmas \ref{lem:psi_bound} and \ref{lem:E_norm}. 
\end{proof}

\begin{lemma}
\label{lem:degree_concentration}
Under Assumption \ref{ass:DCSBM}, we have
\[
\max_{1 \leq i \leq n}\left\vert\frac{d_i}{d_i^*}-1\right\vert = o_P(1) 
\quad
\text{and}
\quad
\max_{1\leq k, l \leq K}\left\vert \frac{\vct{1}_k^\top \mtx{A} \vct{1}_l}{\vct{1}_k^\top \mtx{M} \vct{1}_l} - 1\right\vert = o_P(1).
\]
\end{lemma}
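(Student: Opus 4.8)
The plan is to prove the two concentration statements in Lemma~\ref{lem:degree_concentration} by writing each quantity as a sum of independent centered random variables and then applying Bernstein's inequality (Lemma~\ref{lem:bernstein_inequality}), followed by a union bound over the $O(n)$ or $O(K^2)$ events. The first task is to understand the orders of the denominators. For a fixed node $i \in \mathcal{N}_k$, the population degree is $d_i^* = \E[d_i] = \sum_{j=1}^n M_{ij}$. By \eqref{eq:DCSBM} together with the balancedness and community-connectivity assumptions, each $M_{ij} = \theta_i \theta_j B_{\phi(i)\phi(j)} \asymp \theta_{\min}^2$, so summing over $j$ gives $d_i^* \asymp \theta_{\min}^2 n$. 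Similarly, $\vct{1}_k^\top \mtx{M} \vct{1}_l = \sum_{i \in \mathcal{N}_k, j \in \mathcal{N}_l} M_{ij} \asymp \theta_{\min}^2 n^2$ because each block has $\asymp n$ nodes. These orders are what make the relative-error statements nontrivially small.

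First I would handle the degree term. Write $d_i - d_i^* = \sum_{j=1}^n (A_{ij} - M_{ij})$, a sum of $n$ independent centered variables (for fixed $i$). The Bernstein condition \eqref{eq:bernstein} gives $\E[(A_{ij}-M_{ij})^2] \leq \nu(M_{ij}) \leq M_{ij}/c_0$ and the $p$-th moment control needed for Lemma~\ref{lem:bernstein_inequality}, so I would take $v = \sum_j \nu(M_{ij})/1 \asymp \sum_j M_{ij} \asymp d_i^* \asymp \theta_{\min}^2 n$ and $c = R(c_0)$. Applying Bernstein with $t = \delta d_i^*$ for a small $\delta$ yields a bound of the form $\exp\left(-C \delta^2 (\theta_{\min}^2 n)^2 / (\theta_{\min}^2 n + \delta \theta_{\min}^2 n)\right) = \exp\left(-C' \delta^2 \theta_{\min}^2 n\right)$. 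Using the sparseness assumption \eqref{eq:sparseness}, namely $\theta_{\min} \geq (\log^3 n)/\sqrt{n}$, so $\theta_{\min}^2 n \geq \log^6 n$, I can choose $\delta = \delta_n \to 0$ slowly (e.g.\ $\delta_n \asymp (\log n)^{-1}$) so that the exponent still diverges faster than $\log n$, making the probability $o(n^{-1})$. A union bound over $i = 1, \ldots, n$ then gives $\max_i |d_i/d_i^* - 1| = o_P(1)$.

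The second statement is analogous but the sum ranges over a block of $\asymp n^2$ entries. I would write $\vct{1}_k^\top (\mtx{A} - \mtx{M}) \vct{1}_l = \sum_{i \in \mathcal{N}_k,\, j \in \mathcal{N}_l} (A_{ij} - M_{ij})$, again a sum of independent centered terms (taking care of the symmetric/upper-triangular bookkeeping, which only affects constants). Here $v \asymp \vct{1}_k^\top \mtx{M} \vct{1}_l \asymp \theta_{\min}^2 n^2$, and Bernstein with $t = \delta \cdot \theta_{\min}^2 n^2$ gives an exponent $\asymp \delta^2 \theta_{\min}^2 n^2$, which is even larger than in the degree case, so the same choice of $\delta_n \to 0$ makes each tail $o(K^{-2})$; a union bound over the $K^2$ (fixed) pairs finishes the argument. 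I expect the main obstacle to be bookkeeping the exact orders of $v$ and the denominator so that the ratio $t/\sqrt{v}$ scales correctly, and in particular verifying that the sparseness floor $\theta_{\min} \geq (\log^3 n)/\sqrt{n}$ is exactly what is needed to push the Bernstein exponent past $\log n$ after the union bound; the probabilistic content is otherwise routine once the moment bounds from \eqref{eq:bernstein} are matched to the hypotheses of Lemma~\ref{lem:bernstein_inequality}.
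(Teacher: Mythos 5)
Your proposal is correct and follows essentially the same route as the paper's proof: decompose $d_i - d_i^*$ (and the block sums) into independent centered summands, bound the variance via $\nu(M_{ij}) \leq M_{ij}/c_0$, apply Bernstein's inequality (Lemma \ref{lem:bernstein_inequality}) with deviation level proportional to the mean, and use the sparseness floor $\theta_{\min}^2 n \geq \log^6 n$ to beat the union bound. The only cosmetic difference is that the paper fixes an arbitrary $\epsilon>0$ and lets the failure probability vanish, whereas you use a slowly decaying $\delta_n$; both are standard and equivalent ways to conclude $o_P(1)$.
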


\begin{proof}
Note that
\[
d_i - d_i^* = \sum_{j=1}^n (A_{ij} - M_{ij}),
\]
in which the summands satisfy \eqref{eq:bernstein}. Then, we have
\[
\var(d_i - d_i^*) = \sum_{j=1}^n \nu(M_{ij}) \leq \frac{1}{c_0} \sum_{j=1}^n M_{ij}.
\]
By Lemma \ref{lem:bernstein_inequality}, for any fixed $\epsilon > 0$, we have
\begin{align*}
\label{eq:di_tail_bd}
\sum_{i=1}^n \P \left(|d_i - d_i^*| \geq \epsilon d_i^* \right) 
&\leq \sum_{i=1}^n 2 \exp\left( -\frac{\epsilon^2 {d_i^*}^2}{2 \left(\frac{1}{c_0} (\sum_{j=1}^n M_{ij}) + R \epsilon d_i^* \right)} \right)
\\
& \leq 2n \exp \left(-C \left(\frac{\epsilon^2}{1 + \epsilon}\right) \theta_{\min}^2  n\right) = O(n^{-3}),
\end{align*}
where the last inequality is due to \eqref{eq:sparseness} in Assumption \ref{ass:DCSBM}. Therefore, with probability $1-O(n^{-3})$, we have $\max_{1 \leq i \leq n}\left\vert\frac{d_i}{d_i^*}-1\right\vert \leq \epsilon$. Given $\epsilon$ can be chosen arbitrarily small, we have 
$\max_{1 \leq i \leq n}\left\vert\frac{d_i}{d_i^*}-1\right\vert = o_P(1)$.

With similar arguments, we have $\max_{1\leq k, l \leq K}\left\vert \frac{\vct{1}_k^\top \mtx{A} \vct{1}_l}{\vct{1}_k^\top \mtx{M} \vct{1}_l} - 1\right\vert = o_P(1)$.
\end{proof}

For unweighted networks, \citet{jin2022optimal} proved that SCORE enjoys the Nonsplitting Property (NSP) in both the underfitting and null cases. In other words, the true communities are refinements of the estimated communities with high probability. This property is essential for analyzing sequential testing approaches, as it reduces the number of possible DCSBM fittings. We give a formal definition of this property.

\begin{definition}[Nonsplitting Property \citep{jin2022optimal}]
Let the ground truth of communities in a network be $\mathcal{N}_1, \ldots, \mathcal{N}_K$. Assume $m \leq K$, and $\widehat{\mathcal{N}}^{(m)}_1, \ldots, \widehat{\mathcal{N}}^{(m)}_m$ are the estimated communities by certain clustering method. We say that these estimated communities satisfy the Nonsplitting Property (NSP), if the true communities $\mathcal{N}_1, \ldots, \mathcal{N}_K$ are a refinement of the estimated ones. In other words, for any $k=1,\ldots, K$, there is exactly one $l=1, \ldots, m$, such that $\mathcal{N}_k \cap \widehat{\mathcal{N}}^{(m)}_l \neq \emptyset$.
\end{definition}

Then, we have the following lemma that guarantees the NSP of SCORE under the weighted DCSBM.
\begin{lemma}
\label{thm:nsp}
Under Assumption \ref{ass:DCSBM} with any fixed $c_0>0$, for any fixed $m \leq K$, SCORE satisfies the NSP with probability $1 - O(n^{-3})$.
\end{lemma}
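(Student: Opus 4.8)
The plan is to adapt the Nonsplitting Property argument of \citet{jin2022optimal} to the weighted DCSBM by showing that the SCORE procedure behaves almost identically to the unweighted case, provided the adjacency matrix concentrates around its mean and the degree-correction structure is preserved. The key conceptual point is that SCORE operates on the leading eigenvectors of $\mtx{A}$ and on entrywise ratios of these eigenvectors, so what ultimately matters is (i) how close the leading eigenspace of $\mtx{A}$ is to that of $\mtx{M}$, and (ii) the spectral gap between $|\lambda_K(\mtx{M})|$ and the noise level. Crucially, Lemma \ref{lem:spectral_lowerbound} gives $|\lambda_K(\mtx{M})| \geq c_0 \theta_{\min}^2 n$, while Lemma \ref{lem:noise_norm} gives $\|\mtx{A}-\mtx{M}\| = O_P(\theta_{\min}\sqrt{n})$. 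Under the sparseness assumption \eqref{eq:sparseness} this yields a signal-to-noise ratio of order $\theta_{\min}\sqrt{n}/1 \to \infty$, which is exactly the regime in which the unweighted NSP proof goes through. The point of the weighted model is that these two spectral ingredients are the \emph{only} places where the distributional structure of the edges enters, so once they are established the combinatorial and geometric parts of the SCORE analysis can be imported essentially verbatim.

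Concretely, I would proceed as follows. First, invoke Lemmas \ref{lem:spectral_lowerbound} and \ref{lem:noise_norm} to control the signal and noise spectrally, and apply a Davis--Kahan type $\sin\Theta$ bound to show that the leading $m$-dimensional eigenspace of $\mtx{A}$ is close (in the appropriate row-wise sense) to that of $\mtx{M}$. Since SCORE works with entrywise ratios $u_{k+1}(i)/u_1(i)$, the second step is to establish an entrywise (row-wise $\ell_2$ or $\ell_\infty$) perturbation bound for the eigenvectors, controlling $u_1(i)$ away from zero using the Perron structure of $\mtx{M}$ (the leading eigenvector inherits positivity and magnitude $\asymp \theta_i/\|\vec{\theta}\|_2$ from the $B_{kk}=1$, $B_{kl}\geq c_0$ structure in \eqref{eq:B_entry_ass}). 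Third, I would show that in the noiseless case the $K$ ideal ratio-vectors (one per true community) occupy $m$ well-separated positions in $\mathbb{R}^{m-1}$ when $m\leq K$, so that under $k$-means with $m$ centers no cluster can contain points from two different ideal positions that straddle a separation boundary—this is the core geometric content of the nonsplitting claim, namely that each true community maps entirely into a single estimated community. Finally, combine the perturbation bounds with the separation to argue that, with probability $1-O(n^{-3})$, the $k$-means output respects the true partition in the nonsplitting sense, matching the $O(n^{-3})$ failure probability coming from the degree-concentration and spectral-norm tail bounds.

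The main obstacle, and the reason this is stated as a separate lemma rather than a trivial corollary, is the entrywise eigenvector analysis in the sparse weighted regime. The original SCORE/NSP proof relies delicately on the fact that the ratio statistics are stable, which requires both a lower bound on $|u_1(i)|$ and a uniform row-wise control of the higher eigenvectors; in the weighted setting the edges are no longer bounded, so the truncation argument already used in Lemma \ref{lem:E_norm} must be threaded through the eigenvector perturbation as well, and this is precisely where the extra logarithmic factors in the sparseness assumption \eqref{eq:sparseness} are needed. I expect the bulk of the technical work to be verifying that the heterogeneous, unbounded-entry random matrix still admits the requisite row-wise eigenvector concentration (an entrywise, not just spectral-norm, guarantee), so that the $k$-means step cannot split a true community. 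Once that entrywise bound is in hand, the remaining geometric and combinatorial steps transfer from \citet{jin2022optimal} with only notational changes, and the union bound over the $O(n)$ bad events delivers the stated $1-O(n^{-3})$ probability. The full details are deferred to Appendix \ref{sec:proof_nsp}.
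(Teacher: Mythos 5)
Your overall architecture matches the paper's: entrywise control of the leading eigenvectors of $\mtx{A}$, stability of the SCORE ratio matrix, and then importing the geometric/$k$-means arguments of \citet{jin2022optimal} (the paper likewise invokes their Lemmas 4.2, 4.3 and Theorem 4.1 verbatim once a row-wise deviation bound is in hand). However, there is a genuine gap at exactly the step you yourself flag as the crux. A Davis--Kahan $\sin\Theta$ bound only controls the eigenspace perturbation in operator or Frobenius norm; converting that to the $2\rightarrow\infty$ (row-wise) bound needed for the ratio statistics loses a factor of $\sqrt{n}$ and kills the argument. The paper's proof of Lemma \ref{lem:row_wise_bound_matrix} does not go through Davis--Kahan at all: it applies the entrywise eigenvector perturbation theorem of \citet{abbe2020entrywise} (Lemma \ref{thm:row-wise-Abbe}), separately to the Perron eigenvalue and to the positive and negative groups among $\lambda_2^*,\ldots,\lambda_K^*$, and the real work is verifying that theorem's conditions (A1)--(A4). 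Your proposal names the needed conclusion (``row-wise eigenvector concentration'') but supplies no mechanism for it, so the core of the lemma remains unproven.

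The second divergence is how unbounded edge weights are handled. You propose threading the truncation device of Lemma \ref{lem:E_norm} through the eigenvector analysis, but the paper never truncates in the NSP proof: the Bernstein moment condition \eqref{eq:bernstein} is assumed precisely so that the spectral-norm condition (A3) follows from the subexponential matrix Bernstein inequality (Lemma \ref{thm:matrix_Bernstein_subexponential}, from \citet{tropp2012user}) and the row-concentration condition (A4) follows from a vector Bernstein inequality (Lemmas \ref{lem:vector_bernstein} and \ref{lem:vector_bernstein_application}), both applied directly to $\mtx{A}-\mtx{M}$. Truncation is only needed in the null-case analysis of Lemma \ref{lem:E_norm}, where the sharp constant $2$ from \citet{latala2018dimension} requires bounded entries; that is also the actual source of the extra logarithms in \eqref{eq:sparseness}, contrary to your attribution. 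Your truncation route is not obviously fatal, but it introduces a bias $\E[\widehat{\mtx{A}}]-\mtx{M}$ that would have to be propagated through every row-wise bound and through the k-means separation argument, and none of that bookkeeping is sketched; with the Bernstein condition already in Assumption \ref{ass:DCSBM}, it is also unnecessary.
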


The proof of this lemma is deferred to Appendix \ref{sec:proof_nsp}, which follows the idea of \cite{jin2022optimal}. A crucial component of the proof is the row-wise bounds of eigenvector perturbations \citep{jin2017estimating, abbe2020entrywise}.

\section{PROOFS OF THE MAIN RESULTS}
\label{sec:proof_main}

Throughout this section, we use $C$ to represent a constant that depends only on $c_0$, and its value may change from line to line.

\subsection{Proof of Theorem \ref{thm:null}}
This subsection aims to prove Theorem \ref{thm:null}, so we assume $m=K$. For notational convenience, we omit the superscripts in the estimators indicating the $m$-th step.

Note that
\[
T_{n, m} = \left|\lambda_{m+1} \left(\widehat{\mtx{\Psi}}^{\frac{1}{2}} \mtx{A} \widehat{\mtx{\Psi}}^{\frac{1}{2}}\right)\right|
\]
and
\[
\widehat{\mtx{\Psi}}^{\frac{1}{2}} \mtx{A} \widehat{\mtx{\Psi}}^{\frac{1}{2}} = \widehat{\mtx{\Psi}}^{\frac{1}{2}} \mtx{M} \widehat{\mtx{\Psi}}^{\frac{1}{2}} +
\widehat{\mtx{\Psi}}^{\frac{1}{2}} (\mtx{A} - \mtx{M}) \widehat{\mtx{\Psi}}^{\frac{1}{2}},
\]
where $\rank \left(\widehat{\mtx{\Psi}}^{\frac{1}{2}} \mtx{M} \widehat{\mtx{\Psi}}^{\frac{1}{2}} \right) \leq K=m$. Then we have
\[
T_{n,m} \leq \left\|\widehat{\mtx{\Psi}}^{\frac{1}{2}} \left(\mtx{A}-\mtx{M}\right) \widehat{\mtx{\Psi}}^{\frac{1}{2}} \right\|.
\]
On the other hand,
\begin{align*}
\widehat{\mtx{\Psi}}^{\frac{1}{2}} \left(\mtx{A}-\mtx{M}\right) \widehat{\mtx{\Psi}}^{\frac{1}{2}} 
&= \left(\widehat{\mtx{\Psi}}^{\frac{1}{2}}\mtx{\Psi}^{-\frac{1}{2}}\right) \left(\mtx{\Psi}^{\frac{1}{2}}(\mtx{A}-\mtx{M})\mtx{\Psi}^{\frac{1}{2}}\right)
\left(\mtx{\Psi}^{-\frac{1}{2}}\widehat{\mtx{\Psi}}^{\frac{1}{2}}\right) .
\end{align*}
In order to show that $T_{n,m} \leq 2+o_p(1)$, it suffices to show the following inequalities
\[
\begin{cases}
\left\|\mtx{\Psi}^{\frac{1}{2}}(\mtx{A}-\mtx{M})\mtx{\Psi}^{\frac{1}{2}} \right\| \leq 2+ o_p(1),
\\
\left\| \widehat{\mtx{\Psi}}^{\frac{1}{2}}\mtx{\Psi}^{-\frac{1}{2}} - \mtx{I}_n \right\| \leq o_p(1).
\end{cases} 
\]


The first inequality follows from Lemma \ref{lem:E_norm} and the second inequality can be shown by the following lemma.


\begin{lemma}
\label{lem:theta_B_hat}
Under the null case $m=K$, Assumption \ref{ass:DCSBM} implies
\[
\max_{1 \leq i \leq n}\left\vert\frac{\hat{\theta}_i}{\theta_i}-1\right\vert = o_P(1), 
\quad 
\max_{1\leq k, l \leq K}\left\vert\frac{\widehat{B}_{kl}}{B_{kl}}-1\right\vert = o_P(1),
\]
\[
\max_{1\leq i, j \leq n}\left\vert\frac{\widehat{M}_{ij}}{M_{ij}}-1\right\vert = o_P(1),
\quad
\max_{1\leq k, l \leq n}\left\vert\frac{\widehat{V}_{kl}}{V_{kl}}-1\right\vert  = o_P(1).
\]
Furthermore, we have
\[
\left\| \widehat{\mtx{\Psi}}^{\frac{1}{2}}\mtx{\Psi}^{-\frac{1}{2}} - \mtx{I}_n \right\| = o_P(1).
\]
\end{lemma}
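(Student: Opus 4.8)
The plan is to establish the four entrywise ratio convergences in sequence, since each one feeds into the next, and then convert the variance-profile control into a statement about the scaling factors via the sensitivity analysis cited earlier. The key inputs are Lemma~\ref{lem:degree_concentration} (degree and block-sum concentration), Lemma~\ref{thm:nsp} (the Nonsplitting Property for SCORE), and the assumption that $m=K$, which together mean that with probability $1-O(n^{-3})$ the estimated partition is a refinement of the truth and, since $m=K$, actually coincides with it up to a relabeling of the $K$ communities. Conditioning on this event, the plug-in estimators in \eqref{eq:theta_hat}--\eqref{eq:M_hat} are built from the \emph{correct} community indicator vectors, so the analysis reduces to comparing the data-based quantities $(\hat{\vct{1}}_k)^\top \mtx{A} \hat{\vct{1}}_l$ and $d_i$ against their population analogues $\vct{1}_k^\top \mtx{M} \vct{1}_l$ and $d_i^*$.

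First I would prove the $\widehat{B}_{kl}/B_{kl}$ claim: from the explicit formula \eqref{eq:B_hat} and the expression \eqref{eq:B_kl} for $B_{kl}$, the ratio is a product and quotient of the three block-sum ratios $\vct{1}_k^\top \mtx{A}\vct{1}_l/(\vct{1}_k^\top \mtx{M}\vct{1}_l)$, each of which is $1+o_P(1)$ by the second display of Lemma~\ref{lem:degree_concentration}. Since $K$ is fixed, taking a maximum over the finitely many $(k,l)$ pairs preserves the $o_P(1)$ rate. Next, for $\hat{\theta}_i/\theta_i$, I would compare \eqref{eq:theta_hat} with \eqref{eq:theta}: the ratio equals the block-sum ratio $\sqrt{(\hat{\vct{1}}_k)^\top \mtx{A}\hat{\vct{1}}_k}/\sqrt{\vct{1}_k^\top\mtx{M}\vct{1}_k}$ times $(\vct{1}_k^\top\mtx{M}\vct{1}_n)/((\hat{\vct{1}}_k)^\top\mtx{A}\vct{1}_n)$ times $d_i/d_i^*$. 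The first two factors are controlled by block-sum concentration (writing $\vct{1}_n=\sum_l \vct{1}_l$ and summing finitely many $o_P(1)$ ratios), and the third by the first display of Lemma~\ref{lem:degree_concentration}, giving a uniform $1+o_P(1)$ bound over all $i$. The mean-matrix claim $\widehat{M}_{ij}/M_{ij}=1+o_P(1)$ then follows since $\widehat{M}_{ij}=\hat{\theta}_i\hat{\theta}_j\widehat{B}_{kl}$ and $M_{ij}=\theta_i\theta_j B_{kl}$, so the ratio is a product of three factors each uniformly $1+o_P(1)$. For the variance profile, I would invoke the variance-mean function bounds \eqref{eq:variance_linearity}: the Lipschitz property together with the lower bound $\nu(\mu)\geq c_0\mu$ gives $|\widehat{V}_{kl}/V_{kl}-1|=|\nu(\widehat{M}_{kl})-\nu(M_{kl})|/\nu(M_{kl})\leq (1/c_0)|\widehat{M}_{kl}-M_{kl}|/(c_0 M_{kl})=o_P(1)$ uniformly.

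For the final scaling-factor bound, the plan is to apply Lemma~\ref{lem:landa_9} with $\widetilde{\mtx{A}}=\widehat{\mtx{V}}$ and the perturbation vector $\vct{x}=\vct{\psi}$, the true scaling factors. Because $\psi_i\psi_j V_{ij}$ sums to $1$ in each row by \eqref{eq:pop_scaling}, and $\widehat{V}_{ij}=(1+o_P(1))V_{ij}$ uniformly, the residual $\max_i|\sum_j \psi_i\widehat{V}_{ij}\psi_j-1|$ is $o_P(1)$, serving as the $\epsilon$ in Lemma~\ref{lem:landa_9}. The entry bounds $\tilde{a}_{\max},\tilde{a}_{\min}$ and $a_{\max},a_{\min}$ all scale like $\theta_{\min}^2$ up to constants under Assumption~\ref{ass:DCSBM}, and the scaling factors obey $\psi_i\asymp 1/(\theta_{\min}\sqrt{n})$ by Lemma~\ref{lem:psi_bound}; substituting these into the sensitivity bound shows the correction term $4\epsilon\, a_{\max}^3 \tilde{a}_{\max}^{1/2}/(a_{\min}^{3/2}\tilde{a}_{\min}^2)$ stays $o_P(1)$. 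Hence $\max_i|\hat{\psi}_i/\psi_i-1|=o_P(1)$, which is exactly $\|\widehat{\mtx{\Psi}}^{1/2}\mtx{\Psi}^{-1/2}-\mtx{I}_n\|=o_P(1)$ since this matrix is diagonal with entries $(\hat{\psi}_i/\psi_i)^{1/2}$. The main obstacle I anticipate is verifying that the polynomial-in-$n$ factors in the Lemma~\ref{lem:landa_9} correction term cancel correctly: the $n^{3/2}/x_{\min}^3$ growth must be absorbed by the $\theta_{\min}$-dependence of the variance entries, and one must check—using the sparseness bound $\theta_{\min}\geq \log^3 n/\sqrt{n}$—that the product of $\epsilon=o_P(1)$ with these diverging prefactors still vanishes, rather than merely staying bounded.
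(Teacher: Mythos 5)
Your proposal is correct and follows essentially the same route as the paper's proof: exact recovery via the NSP (Lemma \ref{thm:nsp}) in the null case, the explicit plug-in ratio formulas combined with Lemma \ref{lem:degree_concentration} to get the uniform $1+o_P(1)$ control of $\hat{\theta}_i$, $\widehat{B}_{kl}$, $\widehat{M}_{ij}$, $\widehat{V}_{ij}$, and then the consequence form of Lemma \ref{lem:landa_9} applied with $\widetilde{\mtx{A}} = \widehat{\mtx{V}}$ and $\vct{x} = \vct{\psi}$. The obstacle you anticipate at the end does not actually arise: because $\vct{\psi}$ exactly scales $\mtx{V}$, the applicable correction term is $4\epsilon\, V_{\max}^3 \widehat{V}_{\max}^{1/2}\big/\bigl(V_{\min}^{3/2}\widehat{V}_{\min}^2\bigr)$, in which the powers of $\theta_{\min}$ cancel exactly to leave an $O_P(1)$ prefactor with no $n$-dependence and no need for the sparseness bound, whereas the first bound of Lemma \ref{lem:landa_9}, with its $n^{3/2}/x_{\min}^3$ factor, would leave an $\epsilon \cdot O(n^3)$ term and would \emph{not} suffice --- so it is essential to invoke the consequence form, as your main argument in fact does.
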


\begin{proof}
In the null case, the NSP given in Theorem \ref{thm:nsp} implies strong consistency for the recovery of communities. Without loss of generality, denote
$\widehat{\N}_k = \N_k$ and thereby $\vct{1}_{k} = \hat{\vct{1}}_k$ for $k=1, \ldots, K$, with probability $1 - O(n^{-3})$. By combining \eqref{eq:B_kl}, \eqref{eq:theta}, \eqref{eq:theta_hat}  and \eqref{eq:B_hat}, we get
\[
\frac{\theta_i}{\hat{\theta}_i} = \frac{d_i^*}{d_i} \frac{\sum_{j \in \mathcal{N}_k} d_j}{\sum_{j \in \mathcal{N}_k} d_j^*} \sqrt{\frac{\vct{1}_k^\top \mtx{M} \vct{1}_k}{\vct{1}_k^\top \mtx{A} \vct{1}_k}}, 
\]
and
\[
\frac{B_{kl}}{\widehat{B}_{kl}} = \frac{\vct{1}_k^\top \mtx{M} \vct{1}_l}{\vct{1}_k^\top \mtx{A} \vct{1}_l} \sqrt{\frac{\vct{1}_k^\top \mtx{A} \vct{1}_k}{\vct{1}_k^\top \mtx{M} \vct{1}_k}} \sqrt{\frac{\vct{1}_l^\top \mtx{A} \vct{1}_l}{\vct{1}_l^\top \mtx{M} \vct{1}_l}}.
\]
Then by Lemma \ref{lem:degree_concentration}, we have
\[
\max_{1 \leq i \leq n}\left\vert\frac{\hat{\theta}_i}{\theta_i}-1\right\vert = o_P(1), 
\quad 
\max_{1\leq k, l \leq K}\left\vert\frac{\widehat{B}_{kl}}{B_{kl}}-1\right\vert = o_P(1).
\]
Furthermore, $\max_{1\leq i, j \leq n}\left\vert\frac{\widehat{M}_{ij}}{M_{ij}}-1\right\vert = o_P(1)$ follows from 
\[
\frac{\widehat{M}_{ij}}{M_{ij}} = \frac{\hat{\theta}_i \hat{\theta}_j \widehat{B}_{\phi(i)\phi(j)}}{\theta_i \theta_j B_{\phi(i)\phi(j)}}.
\]
Then 
\begin{equation}
\label{eq:variance_concentration}
\max_{1\leq i, j \leq n}\left\vert\frac{\widehat{V}_{ij}}{V_{ij}}-1\right\vert = o_P(1)
\end{equation}
follows from the assumption \eqref{eq:variance_linearity} on the variance-mean function. 

On the other hand, notice that \eqref{eq:variance_concentration} implies
\[
\max_{1\leq i \leq n} \left\vert\frac{ \sum_{j=1}^n \widehat{V}_{ij} \psi_i \psi_j} {\sum_{j=1}^n V_{ij} \psi_i \psi_j }-1\right\vert = o_P(1).
\]
By \eqref{eq:pop_scaling}, we further have
\[
\max_{1\leq i \leq n} \left\vert \sum_{j=1}^n \widehat{V}_{ij} \psi_i \psi_j -1\right\vert = o_P(1).
\]
Denote $V_{\max} = \max_{1 \leq i, j \leq n} V_{ij}$ and $V_{\min} = \min_{1 \leq i, j \leq n} V_{ij}$. Additionally, $\widehat{V}_{\max}$ and $\widehat{V}_{\min}$ are defined similarly. Assumption \ref{ass:DCSBM} implies 
\[
\left. V_{\max}\middle/V_{\min} \right. = O(1) \quad \text{and} \quad \left. V_{\min}\middle/V_{\max} \right. = O(1). 
\]
Combined with \eqref{eq:variance_concentration}, we have
\[
\left. \widehat{V}_{\max}\middle/V_{\max} \right. = O_P(1) \quad \text{and} \quad \left. \widehat{V}_{\min}\middle/V_{\min} \right. = O_P(1).
\]
Plug in the above inequalities to Lemma \ref{lem:landa_9}, we have $\left\| \widehat{\mtx{\Psi}}^{\frac{1}{2}}\mtx{\Psi}^{-\frac{1}{2}} - \mtx{I}_n \right\| = o_P(1)$.
\end{proof}

\subsection{Proof of Theorem \ref{thm:under-fitting}}

This subsection aims to prove Theorem \ref{thm:under-fitting} with $m<K$. Again, we omit the superscripts in the estimators.

Note that
\begin{equation}
\label{eq:underfitting_lower}
T_{n, m} = \left|\lambda_{m+1} \left(\widehat{\mtx{\Psi}}^{\frac{1}{2}} \mtx{A} \widehat{\mtx{\Psi}}^{\frac{1}{2}}\right)\right| \geq \left|\lambda_{K} \left(\widehat{\mtx{\Psi}}^{\frac{1}{2}} \mtx{A} \widehat{\mtx{\Psi}}^{\frac{1}{2}}\right)\right| \geq |\lambda_K(\mtx{A})| \hat{\psi}_{\min} \geq (|\lambda_K(\mtx{M})| - \|\mtx{A} - \mtx{M}\|) \hat{\psi}_{\min},
\end{equation}
where $\hat{\psi}_{\min} = \min_{1 \leq i \leq n} \hat{\psi}_i$. Therefore, it suffices to find lower bounds for $|\lambda_K(\mtx{M})|$ and $\hat{\psi}_{\min}$, as well as an upper bound for $\|\mtx{A} - \mtx{M}\|$.

\begin{lemma}
\label{lem:theta_hat_underfitting}
For the underfitting case $m<K$, denote $\widehat{M}_{\max} = \max_{1 \leq i, j \leq n} \widehat{M}_{ij}$ and $\widehat{M}_{\min} = \min_{1 \leq i, j \leq n} \widehat{M}_{ij}$. Also, denote $\widehat{V}_{\max}$ and $\widehat{V}_{\min}$ in a similar manner. Then under Assumption \ref{ass:DCSBM}, there holds
\[
\max \left(\frac{\widehat{M}_{\max}}{\theta_{\min}^2},~~\frac{\theta_{\min}^2}{\widehat{M}_{\min}} \right) = O_P(1),
\]
and
\[
\max \left(\frac{\widehat{V}_{\max}}{\theta_{\min}^2},~~\frac{\theta_{\min}^2}{\widehat{V}_{\min}} \right) = O_P(1).
\]
Furthermore, we have
\[
\frac{1}{\theta_{\min} \hat{\psi}_{\min} \sqrt{n}} = O_P(1).
\]
\end{lemma}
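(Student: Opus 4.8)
The plan is to reduce everything to order-of-magnitude bounds on the plug-in estimates via the Nonsplitting Property together with the degree concentration already established. First I would invoke Lemma~\ref{thm:nsp} to condition on the high-probability event that SCORE satisfies the NSP, so that each estimated community $\widehat{\mathcal{N}}_l^{(m)}$ is an exact union of true communities. Consequently every indicator $\hat{\vct{1}}_l^{(m)}$ decomposes as $\sum_{a \in S_l}\vct{1}_a$ for some index set $S_l \subseteq \{1,\dots,K\}$, so each bilinear form $(\hat{\vct{1}}_k^{(m)})^\top\mtx{A}\hat{\vct{1}}_l^{(m)}$, as well as $(\hat{\vct{1}}_k^{(m)})^\top\mtx{A}\vct{1}_n$, splits into a finite sum (at most $K^2$ terms) of the quantities $\vct{1}_{a}^\top\mtx{A}\vct{1}_{b}$ controlled in Lemma~\ref{lem:degree_concentration}. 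Since $K$ is fixed, the second bound in Lemma~\ref{lem:degree_concentration} yields
\[
(\hat{\vct{1}}_k^{(m)})^\top\mtx{A}\hat{\vct{1}}_l^{(m)} = (1+o_P(1))\,(\hat{\vct{1}}_k^{(m)})^\top\mtx{M}\hat{\vct{1}}_l^{(m)},
\]
and likewise $d_i = (1+o_P(1))d_i^*$ uniformly, so in the formula \eqref{eq:M_hat} for $\widehat{M}_{ij}^{(m)}$ I can replace $\mtx{A}$ by $\mtx{M}$ and $d_i$ by $d_i^*$ at the cost of an overall $(1+o_P(1))$ factor.

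Next I would estimate the resulting deterministic quantity entrywise. Using Assumption~\ref{ass:DCSBM}, the balancedness and sparseness conditions give $\|\vct{\theta}_a\|_1 = \sum_{i \in \mathcal{N}_a}\theta_i \asymp n\theta_{\min}$, and with $c_0 \le B_{ab}\le 1$ this yields $(\hat{\vct{1}}_k^{(m)})^\top\mtx{M}\hat{\vct{1}}_l^{(m)} \asymp n^2\theta_{\min}^2$, $(\hat{\vct{1}}_k^{(m)})^\top\mtx{M}\vct{1}_n \asymp n^2\theta_{\min}^2$, and $d_i^* \asymp n\theta_{\min}^2$, where $\asymp$ hides constants depending only on $c_0$; here NSP guarantees every estimated community has size $\Theta(n)$, since it is a union of true communities each of size at least $c_0 n$. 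Substituting these into \eqref{eq:M_hat}, the factors $n^4$ cancel and $\theta_{\min}^6/\theta_{\min}^4 = \theta_{\min}^2$ remains, giving $\widehat{M}_{ij}^{(m)} \asymp \theta_{\min}^2$ uniformly in $i,j$; this is exactly $\max\left(\widehat{M}_{\max}/\theta_{\min}^2,\ \theta_{\min}^2/\widehat{M}_{\min}\right) = O_P(1)$. The variance claim then follows immediately from the sandwich $c_0\mu \le \nu(\mu)\le \mu/c_0$ in \eqref{eq:variance_linearity}, which forces $\widehat{V}_{ij}^{(m)} = \nu(\widehat{M}_{ij}^{(m)})\asymp \widehat{M}_{ij}^{(m)}\asymp \theta_{\min}^2$, i.e.\ $\max\left(\widehat{V}_{\max}/\theta_{\min}^2,\ \theta_{\min}^2/\widehat{V}_{\min}\right) = O_P(1)$.

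Finally, for the scaling factor I would apply Lemma~\ref{lem:landa_2} with the matrix $\widehat{\mtx{V}}^{(m)}$ (symmetric with positive entries on the relevant event) and the scaling vector $\hat{\vct{\psi}}^{(m)}$ defined by \eqref{eq:sample_scaling}. The lemma's lower bound gives
\[
\hat{\psi}_{\min} \ge \frac{1}{\sqrt{n}}\,\frac{\sqrt{\widehat{V}_{\min}}}{\widehat{V}_{\max}} \gtrsim \frac{1}{\sqrt{n}}\,\frac{\theta_{\min}}{\theta_{\min}^2} = \frac{1}{\sqrt{n}\,\theta_{\min}},
\]
using the just-established orders of $\widehat{V}_{\min}$ and $\widehat{V}_{\max}$, and hence $1/(\theta_{\min}\hat{\psi}_{\min}\sqrt{n}) = O_P(1)$.

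The main obstacle I anticipate is the bookkeeping in the first step: the estimated communities are data-dependent, so one must argue that on the intersection of the NSP event and the concentration event of Lemma~\ref{lem:degree_concentration} --- both of probability $1 - O(n^{-3})$ --- the replacement of $\mtx{A}$-based quantities by their $\mtx{M}$-based counterparts is simultaneously valid for whichever grouping $S_1,\dots,S_m$ actually occurs. This is what lets the randomness of the partition be absorbed: because Lemma~\ref{lem:degree_concentration} controls all $O(K^2)$ true-block sums and all $n$ degrees uniformly, the bound becomes deterministic once we condition on these two events, independently of which union-of-true-communities structure SCORE produces.
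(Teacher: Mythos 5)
Your proposal is correct and follows essentially the same route as the paper's proof: invoke the NSP (Lemma~\ref{thm:nsp}) to write each estimated community as a union of true communities, use Lemma~\ref{lem:degree_concentration} to replace the $\mtx{A}$-based bilinear forms and degrees in \eqref{eq:M_hat} by their $\mtx{M}$-based counterparts, deduce $\widehat{M}_{ij}\asymp\theta_{\min}^2$ via Assumption~\ref{ass:DCSBM}, pass to $\widehat{\mtx{V}}$ through \eqref{eq:variance_linearity}, and conclude via Lemma~\ref{lem:landa_2}. The only difference is that you spell out the order-of-magnitude bookkeeping (including the uniformity over the data-dependent grouping) that the paper compresses into ``it is easy to obtain.''
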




\begin{proof}
In the underfitting case $m<K$, the NSP given in Theorem \ref{thm:nsp} implies that the true communities $\N_1, \ldots, \N_K$ are refinements of the estimated communities $\widehat{\N}_1, \ldots, \widehat{\N}_m$. For each $1\leq k \leq m$, we assume that the number of true communities contained in $\widehat{\N}_k$ is $r_k\geq 1$, which implies that $r_1 + \cdots + r_m = K$. Then, we can represent the estimated communities as
\[
\widehat{\mathcal{N}}_k = \mathcal{N}_{h_{k1}} \cup \cdots\cup \mathcal{N}_{h_{kr_k}}, \quad k=1, \ldots, m.
\]
Here all indices $h_{kj}$ for $k=1, \ldots, m$ and $j=1, \ldots, r_k$ are distinct over $1, \ldots, K$. Similarly, we can decompose
\[
\hat{\vct{1}}_k = \vct{1}_{h_{k1}} + \cdots + \vct{1}_{h_{k r_k}}, \quad k=1, \ldots, m.
\]

Recall that
\[
\widehat{M}_{ij} = \frac{(\hat{\vct{1}}_k)^\top \mtx{A} \hat{\vct{1}}_l}{(\hat{\vct{1}}_k)^\top \mtx{A} \vct{1}_n  \cdot  (\hat{\vct{1}}_l)^\top \mtx{A} \vct{1}_n } d_i d_j.
\]


Then by Lemma \ref{lem:degree_concentration}, NSP and Assumption \ref{ass:DCSBM}, it is easy to obtain 
\[
\max \left(\frac{\widehat{M}_{\max}}{\theta_{\min}^2},~\frac{\theta_{\min}^2}{\widehat{M}_{\min}} \right) = O_P(1).
\]
Furthermore, by \eqref{eq:variance_linearity} in Assumption \ref{ass:DCSBM}, we have
\[
\max \left(\frac{\widehat{V}_{\max}}{\theta_{\min}^2},~\frac{\theta_{\min}^2}{\widehat{V}_{\min}} \right) = O_P(1).
\]
Then by \eqref{eq:sample_scaling} and Lemma \ref{lem:landa_2}, we have 
\[
\frac{1}{\theta_{\min} \hat{\psi}_{\min} \sqrt{n}} = O_P(1).
\]
\end{proof}

\begin{proof}[Proof of Theorem \ref{thm:under-fitting}]
By Assumption \ref{ass:DCSBM}, we have
\[
\theta_{\min}\sqrt{n} \geq \log^3 n \rightarrow \infty.
\]
Combined with Lemma \ref{lem:spectral_lowerbound} and \ref{lem:noise_norm}, Theorem \ref{thm:under-fitting} is proved from \eqref{eq:underfitting_lower}.

\end{proof}

\section{PROOF OF THE NONSPLITTING PROPERTY}
\label{sec:proof_nsp}
The proof of the NSP of SCORE, i.e., Lemma \ref{thm:nsp}, basically follows the arguments in \citet{jin2017estimating} and \citet{jin2022optimal}. In other words, Lemma \ref{thm:nsp} is an extension of Theorem 3.2 of \citet{jin2022optimal} to the case of weighted DCSBM. To carry out this extension, we first need some probabilistic results for subexponential random vectors and matrices.

\subsection{Preliminaries}
\begin{lemma}[Vector Bernstein-type Inequality, Theorem 2.5 of \citet{Bosq2000}]
\label{lem:vector_bernstein}
If $\{\vct{X}_i\}_{i=1}^n$ are independent random vectors in a separable Hilbert space (where the norm is denoted by $\|\cdot\|$) with $\E[\vct{X}_i] = \vct{0}$ and 
\[
\sum_{i=1}^n \E \|\vct{X}_i\|^p \leq \frac{p!}{2} \sigma^2 R^{p-2}, \quad p=2, 3, 4, \ldots .
\]
Then, for all $t>0$,
\[
\P \left(\left\| \sum_{i=1}^n \vct{X}_i \right\| \geq t\right) \leq 2 \exp \left(- \frac{t^2}{2(\sigma^2 + R t)}\right) .
\]
\end{lemma}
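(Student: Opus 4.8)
The plan is to prove this exactly as one proves the scalar Bernstein inequality (Lemma \ref{lem:bernstein_inequality}): pass to the exponential moment of $\|S_n\|$, where $S_n = \sum_{i=1}^n \vct{X}_i$, and apply the Chernoff bound $\P(\|S_n\| \ge t) \le e^{-\lambda t}\,\E\exp(\lambda\|S_n\|)$ for $\lambda \in (0, 1/R)$. Everything then reduces to a Gaussian-type bound on the Laplace transform, namely $\E\exp(\lambda\|S_n\|) \le 2\exp\!\big(\tfrac{\lambda^2\sigma^2}{2(1-\lambda R)}\big)$; optimizing the exponent $\lambda^2\sigma^2/(2(1-\lambda R)) - \lambda t$ at $\lambda = t/(\sigma^2 + Rt)$ (which indeed satisfies $\lambda R < 1$) reproduces the claimed rate $-t^2/(2(\sigma^2 + Rt))$, and the factor $2$ in front of the Laplace bound becomes the factor $2$ in the stated tail.

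The core of the argument is establishing that Laplace-transform bound by processing the summands one at a time along the natural filtration $\mathcal{F}_k = \sigma(\vct{X}_1,\ldots,\vct{X}_k)$. The difficulty relative to the scalar case is that $\|\cdot\|$ is neither linear nor differentiable at the origin, so the conditional increment $\E[\exp(\lambda\|S_{k-1} + \vct{X}_k\|) \mid \mathcal{F}_{k-1}]$ cannot be factored directly. The device I would use is the Hilbert-space second-order expansion: for $\vct{u} \ne \vct{0}$,
\[
\|\vct{u} + \vct{v}\| \le \|\vct{u}\| + \frac{\langle \vct{u}, \vct{v}\rangle}{\|\vct{u}\|} + \frac{\|\vct{v}\|^2}{2\|\vct{u}\|},
\]
obtained from $\sqrt{1+x} \le 1 + x/2$ applied to $\|\vct{u}+\vct{v}\|^2 = \|\vct{u}\|^2 + 2\langle\vct{u},\vct{v}\rangle + \|\vct{v}\|^2$. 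Conditioning on $\mathcal{F}_{k-1}$ makes $\vct{u} = S_{k-1}$ deterministic, so the linear term $\langle \vct{u}, \vct{X}_k\rangle/\|\vct{u}\|$ has mean zero, while the quadratic remainder is controlled through the moment hypothesis $\sum_i \E\|\vct{X}_i\|^p \le \frac{p!}{2}\sigma^2 R^{p-2}$; expanding $\exp$ as a power series and applying these moment bounds term by term yields a one-step estimate of the form $\E[\exp(\lambda\|S_k\|)\mid\mathcal{F}_{k-1}] \le \exp(\lambda\|S_{k-1}\|)\exp\!\big(\tfrac{\lambda^2 s_k}{2(1-\lambda R)}\big)$, where $s_k = \E\|\vct{X}_k\|^2$. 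Iterating over $k=1,\ldots,n$ and summing the $s_k$ (which the $p=2$ case bounds by $\sigma^2$) gives the desired Laplace bound.

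The main obstacle is precisely the non-smoothness of the norm at the origin: the expansion above degenerates as $\|\vct{u}\| \to 0$, so the region where $S_{k-1}$ is small must be handled separately. I would deal with this either by smoothing, replacing $\|\cdot\|$ with $\sqrt{\|\cdot\|^2 + \delta}$ and letting $\delta \downarrow 0$, or by absorbing the small-norm contribution into the leading factor of $2$, following Yurinskii's treatment; this is the step that forces the extra constant and is the only place where the vector argument genuinely departs from the scalar proof. All remaining manipulations—the power-series expansion of the exponential, matching the geometric series $\sum_{p\ge2}(\lambda R)^{p-2}$ against the denominator $1-\lambda R$, and the final optimization in $\lambda$—are routine and parallel the scalar Bernstein calculation.
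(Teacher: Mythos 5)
First, a point of context: the paper offers no proof of this lemma at all --- it is imported verbatim as Theorem 2.5 of \citet{Bosq2000} (a Pinelis--Sakhanenko-type inequality), so your attempt must stand on its own rather than be compared to an in-paper argument. Its outer layer does stand: the Chernoff step, the target Laplace bound $\E \exp(\lambda\|S_n\|)\le 2\exp\bigl(\lambda^2\sigma^2/(2(1-\lambda R))\bigr)$, and the optimization at $\lambda=t/(\sigma^2+Rt)$ (which indeed satisfies $\lambda R<1$) combine to give exactly the stated tail, and that Laplace bound is in fact a true statement.

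The genuine gap is in the only step that carries real content, and it is precisely the one you flag but do not resolve. The expansion $\|\vct{u}+\vct{v}\|\le\|\vct{u}\|+\langle \vct{u},\vct{v}\rangle/\|\vct{u}\|+\|\vct{v}\|^2/(2\|\vct{u}\|)$ has a remainder whose coefficient scales like $1/\|\vct{u}\|$, so the one-step estimate $\E[\exp(\lambda\|S_k\|)\mid\mathcal{F}_{k-1}]\le\exp(\lambda\|S_{k-1}\|)\exp\bigl(\lambda^2 s_k/(2(1-\lambda R))\bigr)$ simply fails on the event that $\|S_{k-1}\|$ is small: there the term $\lambda\|\vct{X}_k\|^2/(2\|S_{k-1}\|)$ is unbounded and no moment hypothesis on $\vct{X}_k$ alone can control its conditional exponential moment. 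Neither of your proposed remedies repairs this. Smoothing to $\sqrt{\|\cdot\|^2+\delta}$ only caps the bad coefficient at $1/\sqrt{\delta}$; since you must eventually send $\delta\downarrow 0$, you need an estimate uniform in $\delta$, and there is none --- the problem is not the non-differentiability of the norm at one point but the fact that its Hessian blows up like $1/\|\vct{u}\|$ near the origin. And ``absorbing the small-norm contribution into the factor of $2$'' is not an argument: that factor comes from converting a symmetric bound into a one-sided exponential bound (e.g.\ $e^{x}\le 2\cosh x$), not from controlling excursions of $S_{k-1}$ near the origin, and you cannot a priori bound how many indices $k$ have $\|S_{k-1}\|$ small. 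The known rigorous resolution (Pinelis; Pinelis--Sakhanenko, which is what Bosq's Theorem 2.5 rests on) is structurally different: one runs the martingale argument on $\cosh(\lambda\|\cdot\|)$ rather than $\exp(\lambda\|\cdot\|)$. The map $\vct{x}\mapsto\cosh(\lambda\|\vct{x}\|)$ is an entire function of $\|\vct{x}\|^2$, hence smooth on the whole Hilbert space, and a direct computation shows its Hessian quadratic form is bounded by $\lambda^2\cosh(\lambda\|\vct{x}\|)\,\|\vct{v}\|^2$ everywhere, so the Taylor expansion with vanishing linear term goes through with no exceptional region; the prefactor $2$ then falls out of $\cosh(\lambda t)\ge e^{\lambda t}/2$ at the Chernoff step. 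Even then, extracting the sharp $(1-\lambda R)^{-1}$ dependence (rather than a worse power of $(1-\lambda R)^{-1}$, which is what the crude bound $\cosh(\lambda\|\vct{u}+s\vct{X}\|)\le\cosh(\lambda\|\vct{u}\|)e^{\lambda\|\vct{X}\|}$ yields) requires additional care, which is why this lemma is genuinely a citation-level result and not a routine transplant of the scalar Bernstein calculation.
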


The following lemma is an application of the vector Bernstein-type inequality supporting the proof of Lemma \ref{lem:row_wise_bound_matrix}.

\begin{lemma}
\label{lem:vector_bernstein_application}
Let $X_1, \ldots, X_n$ be independent random variables satisfying 
\[
\E[|X_i - \mu_i|^p] \leq C_0\left(\frac{p!}{2}\right) R^{p-2} \mu_i,
\]
where $\mu_{\max} \coloneqq \max_{1 \leq i \leq n} \mu_i \leq C_1$. Let $\vct{w}_1, \ldots, \vct{w}_n \in \mathbb{R}^d$ be fixed vectors. Then, with probability $1 - O(n^{-4})$, we have
\begin{equation}
\label{eq:vector_bernstein_application}
\left\| \sum_{i=1}^n (X_i - \mu_i) \vct{w}_i \right\|_2 \leq C  \left(\sqrt{\mu_{\max}}\|\mtx{W}\|_F \sqrt{\log n} + \|\mtx{W}\|_{2 \rightarrow \infty} (\log n)\right),
\end{equation}
where $\mtx{W}^\top = [\vct{w}_1, \ldots, \vct{w}_n]$, $\|\mtx{W}\|_{2 \rightarrow \infty} = \max_{1 \leq i \leq n} \|\vct{w}_i\|_2$, and $C$ is a constant that only relies on $C_0$, $R$, and $C_1$.
\end{lemma}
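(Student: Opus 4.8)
The plan is to apply the vector Bernstein-type inequality (Lemma \ref{lem:vector_bernstein}) directly to the independent, mean-zero random vectors $\vct{X}_i \coloneqq (X_i - \mu_i)\vct{w}_i \in \R^d$, regarded as elements of the separable Hilbert space $\R^d$ with its Euclidean norm. Since $\sum_{i=1}^n \vct{X}_i$ is exactly the sum whose norm we must control, the entire argument reduces to verifying the moment hypothesis of Lemma \ref{lem:vector_bernstein} and then inverting the resulting tail bound into a high-probability estimate.

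For the moment check, I would use that $\norm{\vct{X}_i} = |X_i - \mu_i|\,\norm{\vct{w}_i}_2$, so that
\[
\E\norm{\vct{X}_i}^p = \norm{\vct{w}_i}_2^p \,\E\left[|X_i - \mu_i|^p\right] \leq C_0\left(\frac{p!}{2}\right) R^{p-2} \mu_i \norm{\vct{w}_i}_2^p.
\]
Summing over $i$, peeling off $\norm{\vct{w}_i}_2^{p-2} \leq \norm{\mtx{W}}_{2\rightarrow\infty}^{p-2}$ from the factorization $\norm{\vct{w}_i}_2^p = \norm{\vct{w}_i}_2^{p-2}\norm{\vct{w}_i}_2^2$, and using $\sum_i \mu_i \norm{\vct{w}_i}_2^2 \leq \mu_{\max}\sum_i \norm{\vct{w}_i}_2^2 = \mu_{\max}\norm{\mtx{W}}_F^2$, I obtain
\[
\sum_{i=1}^n \E\norm{\vct{X}_i}^p \leq \left(\frac{p!}{2}\right)\left(C_0\mu_{\max}\norm{\mtx{W}}_F^2\right)\left(R\,\norm{\mtx{W}}_{2\rightarrow\infty}\right)^{p-2}.
\]
This is exactly the hypothesis of Lemma \ref{lem:vector_bernstein} with variance proxy $\sigma^2 = C_0\mu_{\max}\norm{\mtx{W}}_F^2$ and scale $R' = R\,\norm{\mtx{W}}_{2\rightarrow\infty}$; the $p=2$ instance holds by the same computation, confirming the choice of $\sigma^2$.

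With these parameters, Lemma \ref{lem:vector_bernstein} gives $\P(\norm{\sum_i \vct{X}_i}\geq t)\leq 2\exp(-t^2/(2(\sigma^2+R't)))$. To force an $O(n^{-4})$ failure probability I would split the exponent into the two standard Bernstein regimes: when $\sigma^2 \geq R't$ the exponent is at least $t^2/(4\sigma^2)$, which exceeds $4\log n$ once $t\gtrsim \sqrt{\sigma^2\log n}$; when $R't > \sigma^2$ it is at least $t/(4R')$, which exceeds $4\log n$ once $t \gtrsim R'\log n$. Taking $t = C(\sqrt{\sigma^2\log n} + R'\log n)$ with $C$ large enough covers both cases, and substituting $\sigma^2$ and $R'$ reproduces the claimed bound, with $C$ depending only on $C_0$ and $R$ (and on $C_1$ through $\mu_{\max}\leq C_1$, which ensures the moment bounds are meaningful).

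There is no deep obstacle: the argument is a routine specialization of the vector Bernstein inequality. The one place that requires care is the moment summation, where the $p$-dependent factors must be separated so that $\mu_{\max}\norm{\mtx{W}}_F^2$ emerges as the variance proxy while $\norm{\mtx{W}}_{2\rightarrow\infty}$ emerges as the exponential scale; getting this split right is precisely what produces the two-term structure $\sqrt{\mu_{\max}}\norm{\mtx{W}}_F\sqrt{\log n} + \norm{\mtx{W}}_{2\rightarrow\infty}\log n$ in the conclusion.
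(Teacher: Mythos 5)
Your proof is correct and follows essentially the same route as the paper's: verify the moment hypothesis of Lemma \ref{lem:vector_bernstein} for the vectors $(X_i-\mu_i)\vct{w}_i$ with variance proxy $\sigma^2 \asymp \mu_{\max}\|\mtx{W}\|_F^2$ and exponential scale $R\|\mtx{W}\|_{2\rightarrow\infty}$, then invert the resulting tail bound at $t \asymp \sqrt{\sigma^2 \log n} + R\|\mtx{W}\|_{2\rightarrow\infty}\log n$ to get the $O(n^{-4})$ failure probability. The only (immaterial) difference is that you carry $\mu_{\max}$ through the moment summation directly, whereas the paper first bounds $\mu_i \leq C_1$ and reintroduces $\mu_{\max}$ in the tail bound.
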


\begin{proof}
For any $p\geq 2$, we have
\begin{align*}
\sum_{i=1}^n \E \|(X_i - \mu_i) \vct{w}_i \|_2^p
&= \sum_{i=1}^n (\E|X_i -\mu_i|^p) \|\vct{w}_i\|_2^p
\\
& \leq \sum_{i=1}^n C_0 \left(\frac{p!}{2}\right) R^{p-2} \mu_i \|\vct{w}_i\|_2^p
\\
& \leq \sum_{i=1}^n C_0 \left(\frac{p!}{2}\right) R^{p-2} C_1 \|\vct{w}_i\|_2^2 \|\mtx{W}\|_{2 \rightarrow \infty}^{p-2}
\\
& = \left(\frac{p!}{2}\right) \left(R\|\mtx{W}\|_{2 \rightarrow \infty}\right)^{p-2} (C_0 C_1 \|\mtx{W}\|_F^2),
\end{align*}
Then, by Lemma \ref{lem:vector_bernstein}, for any $t>0$,
\[
\P\left(\left\| \sum_{i=1}^n (X_i -\lambda_i)\vct{w}_i\right\|_2 \geq t\right)
\leq 2 \exp \left(-\frac{t^2}{C \mu_{\max} \|\mtx{W}\|_F^2 + (R \|\mtx{W}\|_{2 \rightarrow \infty}) t} \right).
\]
Then we get \eqref{eq:vector_bernstein_application} with probability $1-O(n^{-4})$ for sufficiently large $C$.
\end{proof}

The following lemma is the subexponential case of the matrix Bernstein inequality.

\begin{lemma}[Theorem 6.2 of \citet{tropp2012user}]
\label{thm:matrix_Bernstein_subexponential}
Consider a finite sequence $\{\mtx{X}_k\}$ of independent, random, symmetric matrices with dimension $d$. Assume that 
\[
\E[\mtx{X}_k] = \mtx{0} ~~\text{and}~~ \E[\mtx{X}_k^p]   \preceq \frac{p!}{2} R^{p-2} \mtx{A}_k^2 , \quad  p=2, 3, 4, \ldots.
\]
Compute the variance parameter
\[
\sigma^2 \coloneqq \left\|\sum_k \mtx{A}_k^2\right\|.
\]
Then, the following chain of inequalities holds for all $t \geq 0$:
\begin{align*}
\P\left(\lambda_{\max}\left(\sum_k \mtx{X}_k\right) \geq t\right)
&\leq d \exp\left(-\frac{t^2}{2(\sigma^2 + Rt)}\right). 
\end{align*}
\end{lemma}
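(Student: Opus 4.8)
The plan is to prove this via the matrix Laplace transform method, the standard machinery behind matrix concentration inequalities. Write $\mtx{Y} = \sum_k \mtx{X}_k$. First I would apply the matrix Markov inequality: for any $\theta > 0$,
\[
\P\left(\lambda_{\max}(\mtx{Y}) \geq t\right) \leq e^{-\theta t}\, \E\,\tr \exp(\theta \mtx{Y}),
\]
which reduces the problem to controlling the trace of the matrix moment generating function of $\mtx{Y}$.

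The crucial ingredient is the subadditivity of matrix cumulants, a consequence of Lieb's concavity theorem (Tropp's master tail bound): for a finite sequence of independent symmetric matrices,
\[
\E\,\tr\exp\Big(\theta\sum_k \mtx{X}_k\Big) \leq \tr\exp\Big(\sum_k \log \E\, e^{\theta \mtx{X}_k}\Big).
\]
Next I would bound each matrix mgf using the moment hypothesis. Since $\E[\mtx{X}_k] = \mtx{0}$, expanding the exponential term by term (the series converging absolutely in norm for $\theta R < 1$) and applying $\E[\mtx{X}_k^p] \preceq \tfrac{p!}{2}R^{p-2}\mtx{A}_k^2$ term-by-term, which preserves the Loewner order because each coefficient $\theta^p/p!$ is positive, gives, for $0 < \theta < 1/R$,
\[
\E\, e^{\theta \mtx{X}_k} = \mtx{I} + \sum_{p \geq 2}\frac{\theta^p}{p!}\E[\mtx{X}_k^p] \preceq \mtx{I} + \frac{\theta^2/2}{1-\theta R}\,\mtx{A}_k^2.
\]
Using the operator inequality $\mtx{I} + \mtx{H} \preceq e^{\mtx{H}}$ (valid for every symmetric $\mtx{H}$) together with the operator monotonicity of the matrix logarithm, this yields $\log \E\, e^{\theta \mtx{X}_k} \preceq \tfrac{\theta^2/2}{1-\theta R}\mtx{A}_k^2$.

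Summing over $k$ and invoking the monotonicity of the trace exponential, the bound $\tr\exp(\mtx{G}) \leq d\, e^{\lambda_{\max}(\mtx{G})}$ for any symmetric $\mtx{G}$, and the identity $\lambda_{\max}(\sum_k \mtx{A}_k^2) = \sigma^2$ (valid since each $\mtx{A}_k^2 \succeq \mtx{0}$), I would obtain
\[
\P\left(\lambda_{\max}(\mtx{Y}) \geq t\right) \leq d\,\exp\left(-\theta t + \frac{\theta^2 \sigma^2/2}{1-\theta R}\right).
\]
It remains to optimize the free parameter: the choice $\theta = t/(\sigma^2 + Rt)$ satisfies $\theta R < 1$, makes $1-\theta R = \sigma^2/(\sigma^2+Rt)$, and collapses the exponent to exactly $-t^2/\big(2(\sigma^2+Rt)\big)$, giving the claim. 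I expect the main obstacle to be the operator-analytic core rather than the bookkeeping: correctly importing the Lieb/Tropp subadditivity bound and carefully justifying that the positive-semidefinite moment bound survives the passage through the logarithm via operator monotonicity and through the exponential via $\mtx{I}+\mtx{H}\preceq e^{\mtx{H}}$. The Markov step and the final scalar optimization are routine.
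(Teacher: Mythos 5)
Your proposal is correct, but note that the paper does not prove this statement at all: it is imported verbatim as Theorem 6.2 of \citet{tropp2012user}, so there is no internal proof to compare against. Your argument --- matrix Markov inequality, Tropp's master tail bound via Lieb's concavity theorem, the mgf bound $\E\, e^{\theta \mtx{X}_k} \preceq \exp\bigl(\tfrac{\theta^2/2}{1-\theta R}\mtx{A}_k^2\bigr)$ obtained from the Bernstein moment condition, and the optimization $\theta = t/(\sigma^2+Rt)$ --- is exactly the proof given in the cited source, and all the operator-theoretic steps (term-by-term Loewner comparison, $\mtx{I}+\mtx{H} \preceq e^{\mtx{H}}$, operator monotonicity of $\log$, monotonicity of the trace exponential) are invoked correctly.
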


The following theorem from \citet{abbe2020entrywise} provides perturbation bounds of eigenspaces which is crucial to the proof of Theorem \ref{thm:nsp}. 

\begin{lemma}[\citet{abbe2020entrywise}]
\label{thm:row-wise-Abbe}
Suppose $\mtx{A} \in \mathbb{R}^{n \times n}$ is a symmetric random matrix, and let $\mtx{A}^* = \E[\mtx{A}]$. Denote the eigenvalues of $\mtx{A}$ by $\lambda_1 \geq \cdots \geq \lambda_n$, and their associated eigenvectors by $\{\vct{u}_j\}_{j=1}^n$. Analogously for $\mtx{A}^*$, the eigenvalues and eigenvectors are denoted by $\lambda_1^* \geq \cdots \geq \lambda_n^*$ and $\{\vct{u}_j^*\}_{j=1}^n$. For convenience, we also define $\lambda_0 = \lambda_0^* = \infty$ and $\lambda_{n+1} = \lambda_{n+1}^* = -\infty$. Note that we allow eigenvalues to be identical, so some eigenvectors may be defined up to rotations.

Suppose $r$ and $s$ are two integers satisfying $1 \leq r \leq n$ and $0 \leq s \leq n-r$. Let $\mtx{U} = [\vct{u}_{s+1}, \ldots, \vct{u}_{s+r}] \in \mathbb{R}^{n\times r}$, $\mtx{U}^* = [\vct{u}_{s+1}^*, \ldots, \vct{u}_{s+r}^*] \in \mathbb{R}^{n \times r}$ and $\mtx{\Lambda}^* = \diag(\lambda_{s+1}^*, \ldots, \lambda_{s+r}^*) \in \mathbb{R}^{r \times r}$. Define
\[
\Delta^* = (\lambda_s^* - \lambda_{s+1}^*) \wedge (\lambda_{s+r}^* - \lambda_{s+r+1}^*) \wedge \min_{1 \leq i \leq r} |\lambda_{s+i}^*|
\]
and
\[
\kappa = \max_{1\leq i \leq r} \left.|\lambda_{s+i}^*|\middle/\Delta^* \right..
\]

Suppose for some $\gamma \geq 0$, the following assumptions hold:
\begin{enumerate}
    \item[A1] (Incoherence) $\|\mtx{A}^*\|_{2 \rightarrow \infty} \leq \gamma \Delta^*$.
    \item[A2] (Row- and column-wise independence) For any $m \in [n]$, the entries in the $m$-th row and column of $\mtx{A}$ are independent with other entries: namely, $\{A_{ij}, i=m \text{~or~} j=m\}$ are independent of $\{A_{ij}: i \neq m, j \neq m\}$.
    \item[A3] (Spectral norm concentration) $32 \kappa \max\{\gamma, \varphi(\gamma)\} \leq 1$ and for some $\delta_0 \in (0, 1)$,
    \[
    \P\left(\|\mtx{A}-\mtx{A}^*\| \leq \gamma \Delta^* \right) \geq 1 - \delta_0.
    \]
    \item[A4] (Row concentration) Suppose $\varphi(x)$ is continuous and non-decreasing in $\mathbb{R}_+$ with $\varphi(0)=0$, $\varphi(x)/x$ is non-increasing in $\mathbb{R}_+$, and $\delta_1 \in (0, 1)$. For any $i \in [n]$ and $\mtx{W} \in \mathbb{R}^{n \times r}$,
    \[
    \P\left(\|(\mtx{A} - \mtx{A}^*)_{i \cdot} \mtx{W}\|_2 \leq \Delta^* \|\mtx{W}\|_{2 \rightarrow \infty} \varphi\left(\frac{\|\mtx{W}\|_F}{\sqrt{n} \|\mtx{W}\|_{2 \rightarrow \infty}}\right)\right) \geq 1 - \frac{\delta_1}{n}.
    \]
\end{enumerate}

Under Assumptions (A1)---(A4), with probability at least $1-\delta_0 - 2\delta_1$, there exists an orthogonal matrix $\mtx{Q}$ such that
\begin{equation}
\label{eq:rowwise_first_order}
\|\mtx{U}\mtx{Q} - \mtx{A}\mtx{U}^*(\mtx{\Lambda}^*)^{-1})\|_{2 \rightarrow \infty} \lesssim \kappa (\kappa + \varphi(1))(\gamma + \varphi(\gamma))\|\mtx{U}^*\|_{2 \rightarrow \infty} + \gamma\|\mtx{A}^*\|_{2 \rightarrow \infty}\left.\middle/\Delta^* \right..
\end{equation}
Here $\lesssim$ only hides absolute constants.
\end{lemma}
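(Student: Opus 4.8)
The plan is to prove this row-wise (entrywise) eigenspace perturbation bound by a leave-one-out argument, the standard device for decoupling the randomness in a single row of $\mtx{A}$ from the empirical eigenvectors $\mtx{U}$. The algebraic starting point is the eigen-identity $\mtx{U} = \mtx{A}\mtx{U}\mtx{\Lambda}^{-1}$ (valid since $\mtx{A}\mtx{U}=\mtx{U}\mtx{\Lambda}$ and, by Weyl's inequality together with the definition of $\Delta^*$, the empirical eigenvalues in the selected block stay bounded away from $0$), paired with its population counterpart $\mtx{U}^* = \mtx{A}^*\mtx{U}^*(\mtx{\Lambda}^*)^{-1}$. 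Thus the target first-order term $\mtx{A}\mtx{U}^*(\mtx{\Lambda}^*)^{-1}$ is exactly the matrix produced by one step of a power-type iteration seeded at $\mtx{U}^*$, and the goal is to show that $\mtx{U}\mtx{Q}$ reproduces this iterate row by row up to the claimed error.

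First I would establish a coarse aggregate bound. On the event $\{\|\mtx{A}-\mtx{A}^*\| \leq \gamma\Delta^*\}$ from (A3), the Davis--Kahan $\sin\Theta$ theorem yields an orthogonal $\mtx{Q}$ with $\|\mtx{U}\mtx{Q}-\mtx{U}^*\| \lesssim \kappa\gamma$. This controls the spectral-norm distance between the empirical and population invariant subspaces but is far too weak to deliver the $\|\cdot\|_{2\to\infty}$ conclusion on its own. Next comes the core construction: for each $m \in [n]$, define the leave-one-out matrix $\mtx{A}^{(m)}$ that coincides with $\mtx{A}$ except on its $m$-th row and column, which are replaced by the corresponding entries of $\mtx{A}^*$. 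By (A2) the associated eigenvector matrix $\mtx{U}^{(m)}$ is independent of $\{A_{mj}\}_{j=1}^n$. Since $\mtx{A}$ and $\mtx{A}^{(m)}$ differ in only one row/column, their difference has small operator norm, and a second application of Davis--Kahan gives a leave-one-out stability estimate for $\|\mtx{U}\mtx{Q}-\mtx{U}^{(m)}\mtx{Q}^{(m)}\|$ (with possibly distinct alignment matrices $\mtx{Q},\mtx{Q}^{(m)}$) of the same order as the row norm of $\mtx{A}-\mtx{A}^*$.

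I would then expand the row-wise error
\[
\vct{e}_m^\top\left(\mtx{U}\mtx{Q}-\mtx{A}\mtx{U}^*(\mtx{\Lambda}^*)^{-1}\right)
\]
using $\mtx{U}=\mtx{A}\mtx{U}\mtx{\Lambda}^{-1}$ into three groups of terms: (i) the decoupled noise term $(\mtx{A}-\mtx{A}^*)_{m\cdot}\mtx{U}^{(m)}$, to which (A4) applies directly because $\mtx{U}^{(m)}$ is independent of the $m$-th row, producing the factor $\varphi(\cdot)$ through the continuity and monotonicity hypotheses on $\varphi$ together with $\|\mtx{U}^*\|_F=\sqrt{r}$; (ii) replacement terms $(\mtx{A}-\mtx{A}^*)_{m\cdot}(\mtx{U}^{(m)}\mtx{Q}^{(m)}-\mtx{U}^*)$ and an eigenvalue/rotation mismatch of the form $\mtx{\Lambda}^{-1}\mtx{Q}-\mtx{Q}(\mtx{\Lambda}^*)^{-1}$, handled by the stability estimate, the aggregate bound, and Weyl control of $|\lambda_{s+i}-\lambda_{s+i}^*|$; and (iii) a term proportional to $\gamma\|\mtx{A}^*\|_{2\to\infty}/\Delta^*$ coming from $\mtx{A}^*_{m\cdot}(\mtx{U}\mtx{Q}-\mtx{U}^*)$, whose appearance is the source of the second summand in the claimed bound.

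Finally I would close the estimate by a self-bounding (fixed-point) argument. After taking the maximum over $m$, the quantity $\|\mtx{U}\mtx{Q}-\mtx{A}\mtx{U}^*(\mtx{\Lambda}^*)^{-1}\|_{2\to\infty}$ reappears on the right-hand side with a multiplicative coefficient of order $\kappa(\gamma+\varphi(\gamma))$, which the hypothesis $32\kappa\max\{\gamma,\varphi(\gamma)\}\leq 1$ forces to be at most $1/2$; absorbing this self-referential term and collecting the remaining contributions yields the stated inequality. A union bound over the $n$ leave-one-out row events controlled by (A4), together with the event in (A3), accounts for the claimed $1-\delta_0-2\delta_1$ success probability. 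The main obstacle I anticipate is precisely this fixed-point step: keeping the self-referential coefficient strictly contractive requires careful bookkeeping of the interlocking orthogonal alignments $\mtx{Q}$, $\mtx{Q}^{(m)}$ and of the eigenvalue gaps through $\kappa$, and it is exactly here that (A3) must be invoked with its explicit constant rather than merely qualitatively.
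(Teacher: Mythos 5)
This lemma is not proved in the paper at all---it is imported verbatim from \citet{abbe2020entrywise}---and your sketch faithfully reconstructs the original leave-one-out proof from that source: the identity $\mtx{U}=\mtx{A}\mtx{U}\mtx{\Lambda}^{-1}$ (justified by Weyl plus (A3)), the leave-one-out matrices $\mtx{A}^{(m)}$ decoupled through (A2), Davis--Kahan for the aggregate and stability bounds, (A4) applied to the decoupled term $(\mtx{A}-\mtx{A}^*)_{m\cdot}\mtx{U}^{(m)}$, and the contraction step forced by $32\kappa\max\{\gamma,\varphi(\gamma)\}\leq 1$. One minor bookkeeping slip: the $2\delta_1$ in the failure probability comes from invoking (A4) for \emph{two} families of test matrices (the independent $\mtx{U}^{(m)}$'s and the deterministic $\mtx{U}^*$), each union-bounded over the $n$ rows, rather than from the leave-one-out row events alone as you state; otherwise your outline matches the cited argument.
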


\subsection{Spectral Properties for the Adjacency Matrix}
Now, we introduce some spectral properties of the mean adjacency matrix $\mtx{M}$ and the observed weighted adjacency matrix $\mtx{A}$, which are useful in proving the NSP of SCORE. We begin by studying the eigenvalues and eigenvectors of $\mtx{M}$. 
The population adjacency matrix $\mtx{M}$ is a rank-$K$ matrix, with nonzero eigenvalues $\lambda_1^*, \lambda_2^*, \ldots , \lambda_K^*$ sorted in descending magnitude and their corresponding unit-norm eigenvectors $\vct{u}_1^*, \ldots, \vct{u}_K^*$. Notice that by Perron's theorem \citep{horn2012matrix}, $\lambda_1^*$ is positive with multiplicity 1, and we can choose $\vct{u}_1^*$ such that all its entries are strictly positive. Denote $\mtx{U}^* = [\vct{u}_2^*, \ldots, \vct{u}_K^*] \in \mathbb{R}^{n \times (K-1)}$ and $\left(\mtx{U}^*_{i\cdot} \right)^\top$ as its $i$-th row. Denote $\mtx{\Lambda}^* = \diag ( \lambda_2^*, \ldots , \lambda_K^* ) \in \mathbb{R}^{(K-1) \times (K-1)}$.

\begin{lemma}[Lemma B.1 of \citet{jin2017estimating}]
\label{lem:population_eigen}
Under Assumption \ref{ass:DCSBM}, the following statements are true:
\begin{enumerate}
    \item $|\lambda_k^* | \asymp \|\vct{\theta}\|_2^2$, $ 1\leq k\leq K$.
    \item $\lambda_1^* - |\lambda_2^*| \asymp \lambda_1^* $.
\end{enumerate}
\end{lemma}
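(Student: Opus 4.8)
The plan is to reduce the spectral analysis of the $n \times n$ matrix $\mtx{M}$ to that of a fixed-size $K \times K$ matrix, and then exploit the Perron--Frobenius structure. Write $\mtx{M} = \mtx{G}\mtx{B}\mtx{G}^\top$ with $\mtx{G} = \mtx{\Theta}\mtx{\Pi} = [\vct{\theta}_1, \ldots, \vct{\theta}_K]$. Since the $\vct{\theta}_k$ have disjoint supports, $\mtx{G}^\top\mtx{G} = \mtx{D} := \diag(\|\vct{\theta}_1\|_2^2, \ldots, \|\vct{\theta}_K\|_2^2)$ is diagonal and positive definite. The nonzero eigenvalues of $\mtx{M}$ therefore coincide with those of $\mtx{B}\mtx{D}$, which is similar to the symmetric matrix $\mtx{H} := \mtx{D}^{1/2}\mtx{B}\mtx{D}^{1/2}$ via conjugation by $\mtx{D}^{1/2}$; hence $\{\lambda_1^*, \ldots, \lambda_K^*\}$ are exactly the eigenvalues of $\mtx{H}$. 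The balancedness condition \eqref{eq:balance_ass} gives $\|\vct{\theta}_k\|_2^2 \asymp n\theta_{\min}^2 \asymp \|\vct{\theta}\|_2^2$ for every $k$ (with constants depending only on $c_0$ and the fixed $K$), so I would factor $\mtx{H} = \|\vct{\theta}\|_2^2 \widetilde{\mtx{H}}$ with $\widetilde{\mtx{H}} = \widetilde{\mtx{D}}^{1/2}\mtx{B}\widetilde{\mtx{D}}^{1/2}$, where $\widetilde{\mtx{D}} = \mtx{D}/\|\vct{\theta}\|_2^2$ is a diagonal matrix whose entries lie in a fixed compact interval $[a, b] \subset (0, \infty)$ determined by $c_0$ and $K$.

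For the first claim it suffices to show that all eigenvalues of $\widetilde{\mtx{H}}$ are bounded above and below in magnitude by positive constants. The upper bound is immediate from $\|\widetilde{\mtx{H}}\| \leq \|\widetilde{\mtx{D}}^{1/2}\|^2 \|\mtx{B}\| \leq b\|\mtx{B}\|$. For the lower bound on the smallest-magnitude eigenvalue, I would invert: $\widetilde{\mtx{H}}^{-1} = \widetilde{\mtx{D}}^{-1/2}\mtx{B}^{-1}\widetilde{\mtx{D}}^{-1/2}$, so $\|\widetilde{\mtx{H}}^{-1}\| \leq a^{-1}\|\mtx{B}^{-1}\| = a^{-1}/|\lambda_K(\mtx{B})| \leq a^{-1}/c_0$ by \eqref{eq:B_entry_ass}, which gives $\min_k |\lambda_k(\widetilde{\mtx{H}})| \geq a c_0$. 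Multiplying back by $\|\vct{\theta}\|_2^2$ yields $|\lambda_k^*| \asymp \|\vct{\theta}\|_2^2$ for all $1 \leq k \leq K$.

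The second claim is the delicate part and the main obstacle, because the two-sided bounds from the first claim do not by themselves separate $\lambda_1^*$ from $\lambda_2^*$. Here I would use that $\widetilde{\mtx{H}}$ has strictly positive entries (as $\widetilde{\mtx{D}}^{1/2}$ is positive diagonal and $\mtx{B}$ has positive entries), so Perron--Frobenius guarantees that its leading eigenvalue $\lambda_1(\widetilde{\mtx{H}})$ is positive, simple, and strictly larger in magnitude than every other eigenvalue; in particular $\lambda_1(\widetilde{\mtx{H}}) - |\lambda_2(\widetilde{\mtx{H}})| > 0$ for each admissible $\widetilde{\mtx{D}}$. To turn this strict positivity into a uniform constant gap, I would invoke compactness: the map $\widetilde{\mtx{D}} \mapsto \lambda_1(\widetilde{\mtx{H}}) - |\lambda_2(\widetilde{\mtx{H}})|$ is continuous (eigenvalues ordered by magnitude are continuous functions of the entries) and strictly positive on the compact set of diagonal matrices with entries in $[a, b]$, hence attains a positive minimum $\delta = \delta(c_0, K, \mtx{B}) > 0$. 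Consequently $\lambda_1^* - |\lambda_2^*| = \|\vct{\theta}\|_2^2 \bigl(\lambda_1(\widetilde{\mtx{H}}) - |\lambda_2(\widetilde{\mtx{H}})|\bigr) \geq \delta\|\vct{\theta}\|_2^2 \asymp \lambda_1^*$, while $\lambda_1^* - |\lambda_2^*| \leq \lambda_1^*$ is trivial, establishing $\lambda_1^* - |\lambda_2^*| \asymp \lambda_1^*$. The one point requiring care is that $\widetilde{\mtx{D}}$ varies with $n$, so the gap bound must hold uniformly over all $n$; the compactness argument is precisely what delivers this uniformity.
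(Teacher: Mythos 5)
Your proof is correct, and it takes a genuinely different route from the paper: the paper offers no self-contained argument for this lemma at all, deferring the first claim to Lemma B.1 of \citet{jin2017estimating} and the spectral-gap claim to \citet{https://doi.org/10.48550/arxiv.1906.04875}. Your argument --- identifying the nonzero spectrum of $\mtx{M}=\mtx{G}\mtx{B}\mtx{G}^\top$ with the spectrum of the $K\times K$ matrix $\mtx{H}=\mtx{D}^{1/2}\mtx{B}\mtx{D}^{1/2}$ (using that $\mtx{X}\mtx{Y}$ and $\mtx{Y}\mtx{X}$ share nonzero eigenvalues, plus similarity, both legitimate since $\mtx{D}$ is positive definite), bounding $\|\widetilde{\mtx{H}}\|$ and $\|\widetilde{\mtx{H}}^{-1}\|$ for the two-sided bound in claim 1, and combining Perron--Frobenius with compactness of $\{\widetilde{\mtx{D}}:\widetilde{D}_{kk}\in[a,b]\}$ for claim 2 --- is sound at every step: the interval $[a,b]$ follows from \eqref{eq:balance_ass}; the continuity you invoke is unproblematic because magnitude-ordered eigenvalues of symmetric matrices are singular values; and the compactness step is exactly what converts the pointwise Perron gap into a gap $\delta>0$ uniform over $n$, which is the subtlety you correctly flag. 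What your proof buys is self-containedness and transparency about hypotheses: you use only balancedness, positivity of the entries of $\mtx{B}$, and $|\lambda_K(\mtx{B})|\geq c_0$, and in particular the assumed gap $\lambda_1(\mtx{B})>|\lambda_2(\mtx{B})|$ in \eqref{eq:B_entry_ass} is never needed (your Perron--Frobenius observation shows that inequality is automatic, i.e., that part of the assumption is redundant). What the paper's citation approach buys is brevity and constants inherited from the cited results. One cosmetic refinement: as written, your $\delta$ from compactness depends on the particular fixed matrix $\mtx{B}$; to match the paper's convention that constants depend only on $c_0$ (with $K$ fixed), let $\mtx{B}$ also range over the compact set of symmetric matrices with unit diagonal and entries in $[c_0,1]$ in the compactness argument --- positivity of entries still gives a pointwise gap on this larger set, so the identical argument yields $\delta=\delta(c_0,K)$.
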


\begin{proof}
This result is basically the same as Lemma B.1 of \citet{jin2017estimating} under Assumption \ref{ass:DCSBM}. In particular, $\lambda_1^* - |\lambda_2^*| \asymp \lambda_1^* $ can be straightforwardly obtained from \citet{https://doi.org/10.48550/arxiv.1906.04875}.
\end{proof}

Also, we have the following properties of the eigenvectors of $\mtx{M}$:

\begin{lemma}[Lemma B.2 of \citet{jin2017estimating}]
\label{lem:population_eigen_vec}
Under Assumption \ref{ass:DCSBM}, the following statements are true:
\begin{enumerate}
    \item If we choose the sign of $\vct{u}_1^*$ such that $\sum_{i=1}^n u_{1}^*(i) > 0$, then the entries of $\vct{u}_1^*$ are positive satisfying $C^{-1} \theta_i/\|\vct{\theta}\|_2 \leq u_{1}^*(i) \leq C\theta_i/\|\vct{\theta}\|_2$, $1\leq i\leq n$.
    \item $\| \mtx{U}^*_{i\cdot} \|_2 \leq C\sqrt{K}\theta_i/\|\vct{\theta}\|_2 $, $ 1\leq i \leq n$.
\end{enumerate}
Here, $C$ is a constant that depends only on $c_0$.
\end{lemma}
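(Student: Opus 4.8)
The plan is to exploit the exact low-rank factorization $\mtx{M} = (\mtx{\Theta}\mtx{\Pi})\mtx{B}(\mtx{\Theta}\mtx{\Pi})^\top$ to reduce the whole problem to a fixed $K \times K$ computation. Since $\mtx{\Theta}\mtx{\Pi} = [\vct{\theta}_1, \ldots, \vct{\theta}_K]$ has full column rank and the range of $\mtx{M}$ is contained in its column space, every eigenvector $\vct{u}_k^*$ with a nonzero eigenvalue can be written as $\vct{u}_k^* = \mtx{\Theta}\mtx{\Pi}\vct{g}_k$ for some $\vct{g}_k \in \mathbb{R}^K$; entrywise, $u_k^*(i) = \theta_i\, G_{\phi(i),k}$, where $\mtx{G} = [\vct{g}_1, \ldots, \vct{g}_K] \in \mathbb{R}^{K\times K}$. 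Setting $\mtx{D} \coloneqq (\mtx{\Theta}\mtx{\Pi})^\top(\mtx{\Theta}\mtx{\Pi}) = \diag(\|\vct{\theta}_1\|_2^2, \ldots, \|\vct{\theta}_K\|_2^2)$ and using $\mtx{M}\vct{u}_k^* = \lambda_k^*\vct{u}_k^*$, cancellation of the full-rank factor $\mtx{\Theta}\mtx{\Pi}$ yields the reduced eigenproblem $\mtx{B}\mtx{D}\vct{g}_k = \lambda_k^*\vct{g}_k$, equivalently $\mtx{S}(\mtx{D}^{1/2}\vct{g}_k) = \lambda_k^*(\mtx{D}^{1/2}\vct{g}_k)$ for the symmetric matrix $\mtx{S} \coloneqq \mtx{D}^{1/2}\mtx{B}\mtx{D}^{1/2}$. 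Balancedness \eqref{eq:balance_ass} makes the diagonal of $\mtx{D}$ mutually comparable, $\|\vct{\theta}_l\|_2^2 \asymp n\theta_{\max}^2 \asymp \|\vct{\theta}\|_2^2$ with constants depending only on $c_0$, a fact I will use repeatedly.

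For Part 1, $\mtx{S}$ has strictly positive entries $S_{kl} = \|\vct{\theta}_k\|_2 B_{kl} \|\vct{\theta}_l\|_2$, so by Perron--Frobenius its leading eigenvector $\vct{h} = \mtx{D}^{1/2}\vct{g}_1$ may be taken entrywise positive, which forces the first column of $\mtx{G}$, and hence $\vct{u}_1^*$, to be positive and makes the sign convention $\sum_i u_1^*(i) > 0$ automatic. For the two-sided bound I would read off $\lambda_1^* h_k = \sum_l S_{kl} h_l$: since $B_{kl} \in [c_0,1]$ and the $\|\vct{\theta}_l\|_2$ are comparable, all $S_{kl}$ are comparable, so $\max_k h_k / \min_k h_k \leq \max_{k,l} S_{kl}/\min_{k,l} S_{kl} = O(1)$. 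As $K$ is fixed and $\|\vct{h}\|_2 = 1$, each $h_k \asymp 1$, giving $G_{l,1} = h_l/\|\vct{\theta}_l\|_2 \asymp 1/\|\vct{\theta}\|_2$. Finally $\|\vct{u}_1^*\|_2^2 = \vct{g}_1^\top\mtx{D}\vct{g}_1 = \|\vct{h}\|_2^2 = 1$, so $\vct{u}_1^*$ is already unit-norm and $u_1^*(i) = \theta_i G_{\phi(i),1} \asymp \theta_i/\|\vct{\theta}\|_2$, which is the claim.

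For Part 2, collect the eigenvectors as $[\vct{u}_1^*, \ldots, \vct{u}_K^*] = \mtx{\Theta}\mtx{\Pi}\mtx{G}$; orthonormality is then exactly $\mtx{G}^\top\mtx{D}\mtx{G} = \mtx{I}_K$, i.e. $\mtx{D}^{1/2}\mtx{G}$ is orthogonal, so its $l$-th row has unit norm and hence the $l$-th row of $\mtx{G}$ has squared norm $1/\|\vct{\theta}_l\|_2^2$. Writing $l = \phi(i)$,
\[
\|\mtx{U}^*_{i\cdot}\|_2^2 = \sum_{k=2}^K u_k^*(i)^2 \leq \theta_i^2 \sum_{k=1}^K G_{l,k}^2 = \frac{\theta_i^2}{\|\vct{\theta}_l\|_2^2} \leq \frac{\theta_i^2}{c_0^3\|\vct{\theta}\|_2^2},
\]
the last step using $\|\vct{\theta}_l\|_2^2 \geq c_0^3\|\vct{\theta}\|_2^2$ from \eqref{eq:balance_ass}; taking square roots gives $\|\mtx{U}^*_{i\cdot}\|_2 \leq C\theta_i/\|\vct{\theta}\|_2 \leq C\sqrt{K}\,\theta_i/\|\vct{\theta}\|_2$.

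The main obstacle is the quantitative two-sided control of the Perron eigenvector $\vct{h}$ in Part 1. Positivity is free, but the ratio bound $\max_k h_k/\min_k h_k = O(1)$ must be extracted carefully; it rests on the magnitude $\lambda_1^* \asymp \|\vct{\theta}\|_2^2$ from Lemma \ref{lem:population_eigen} together with the comparability of the $\|\vct{\theta}_l\|_2$, both of which trace back to balancedness. Everything else is exact linear algebra once the reduction to $\mtx{S}$ is in place, and the fixedness of $K$ lets all $K$-dependent factors be absorbed into the constant $C$.
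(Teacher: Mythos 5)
Your proof is correct. The paper itself gives no argument for this lemma --- it simply cites Lemma B.2 of \citet{jin2017estimating} --- so your write-up supplies a complete, self-contained proof, and the route you take (factor $\mtx{M} = (\mtx{\Theta}\mtx{\Pi})\mtx{B}(\mtx{\Theta}\mtx{\Pi})^\top$, reduce the eigenproblem to the $K\times K$ symmetric matrix $\mtx{S} = \mtx{D}^{1/2}\mtx{B}\mtx{D}^{1/2}$, apply Perron--Frobenius to $\mtx{S}$, and read off row norms from orthogonality of $\mtx{D}^{1/2}\mtx{G}$) is essentially the standard argument used in that reference. Two small points are worth making explicit. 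First, in Part 1 your comparability argument actually yields $h_k \asymp 1/\sqrt{K}$ rather than $h_k \asymp 1$ (comparable entries of a unit vector in $\mathbb{R}^K$ scale like $K^{-1/2}$), so your final bound reads $u_1^*(i) \asymp \theta_i/(\sqrt{K}\,\|\vct{\theta}\|_2)$; since the lemma asserts a constant depending only on $c_0$, you should note that balancedness forces $K c_0 \leq \sum_k n_k/n = 1$, i.e.\ $K \leq 1/c_0$, so any $K$-dependent factor is automatically a $c_0$-dependent constant. Second, your Part 2 bound $\|\mtx{U}^*_{i\cdot}\|_2 \leq c_0^{-3/2}\theta_i/\|\vct{\theta}\|_2$ is in fact sharper than the stated one --- the $\sqrt{K}$ in the lemma is not needed under this assumption set --- which is a pleasant byproduct of the exact identity $\|\mtx{G}_{l\cdot}\|_2^2 = 1/\|\vct{\theta}_l\|_2^2$.
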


Now, we come back to community detection with SCORE. Let $|\lambda_1| \geq |\lambda_2| \geq \cdots \geq |\lambda_K|$ be the leading $K$ eigenvalues of $\mtx{A}$ in magnitude, with corresponding unit-norm eigenvectors $\vct{u}_1, \ldots, \vct{u}_K$. Denote $\mtx{U} = [\vct{u}_2, \ldots, \vct{u}_K] \in \mathbb{R}^{n \times (K-1)}$. The proof of the following lemma is similar to that of Lemma 2.1 in \citet{jin2017estimating}, but requires some adaptions for the weighted DCSBM. We defer the proof to Section \ref{sec:proof_row_wise}.

\begin{lemma}
\label{lem:row_wise_bound_matrix}
Under Assumption \ref{ass:DCSBM}, with probability $1 - O(n^{-3})$, the following statements are true:
\begin{enumerate}
    \item We can select $\vct{u}_1$ such that $\|\vct{u}_1 - \vct{u}_1^*\|_\infty \leq \frac{C(c_0)}{\sqrt{n \log n}}$.
    \item $\|\mtx{U}\mtx{Q}^\top - \mtx{U}^*\|_{2 \rightarrow \infty}\leq \frac{C(c_0)}{\sqrt{n \log n}}$ for some $\mtx{Q} \in \mathcal{O}_{K-1}$.
\end{enumerate}
\end{lemma}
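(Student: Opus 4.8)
The plan is to establish Lemma~\ref{lem:row_wise_bound_matrix} by invoking the entrywise eigenvector perturbation bound of \citet{abbe2020entrywise}, namely Lemma~\ref{thm:row-wise-Abbe}, with $\mtx{A}^* = \mtx{M}$ and the eigenspace indexed by the top $K$ eigenvalues. The first task is to verify the four hypotheses (A1)--(A4) under Assumption~\ref{ass:DCSBM}. For this I would set $\Delta^* \asymp \lambda_K^* \asymp \|\vct{\theta}\|_2^2 \asymp \theta_{\min}^2 n$ using Lemma~\ref{lem:population_eigen} together with Lemma~\ref{lem:spectral_lowerbound}; the eigengap conditions needed for $\Delta^*$ follow from $\lambda_1^* - |\lambda_2^*| \asymp \lambda_1^*$ and the separation of $|\lambda_K^*|$ from the zero eigenvalues of the rank-$K$ matrix $\mtx{M}$. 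The condition number $\kappa$ is then $O(1)$ by Lemma~\ref{lem:population_eigen}. For A1 (incoherence) I would use Lemma~\ref{lem:population_eigen_vec}(2) to bound $\|\mtx{M}\|_{2\to\infty}$; since $M_{ij} \asymp \theta_i\theta_j$, the row norms of $\mtx{M}$ are controlled by $\theta_{\max}\|\vct{\theta}\|_1 \lesssim \theta_{\max}\theta_{\max} n$, which divided by $\Delta^* \asymp \theta_{\min}^2 n$ yields the required $\gamma$. A2 is immediate from the independence of the upper-triangular entries of $\mtx{A}$.

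The substantive work is in verifying A3 and A4, which is where the Bernstein-type machinery enters. For A3, the spectral-norm concentration $\|\mtx{A}-\mtx{M}\| \le \gamma\Delta^*$ follows from Lemma~\ref{lem:noise_norm}, which gives $\|\mtx{A}-\mtx{M}\| = O_P(\theta_{\min}\sqrt{n})$; dividing by $\Delta^* \asymp \theta_{\min}^2 n$ shows $\gamma \asymp 1/(\theta_{\min}\sqrt{n})$, which tends to zero under the sparseness condition \eqref{eq:sparseness}, so the smallness requirement $32\kappa\max\{\gamma,\varphi(\gamma)\}\le 1$ is met for large $n$. For A4 (row concentration) I would apply Lemma~\ref{lem:vector_bernstein_application} with $\mu_i = M_{im}$ and the fixed matrix $\mtx{W}$, which produces a bound of the form $\sqrt{\mu_{\max}}\|\mtx{W}\|_F\sqrt{\log n} + \|\mtx{W}\|_{2\to\infty}\log n$ with probability $1-O(n^{-4})$. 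Matching this to the function $\varphi$ in A4 requires choosing $\varphi(x) \asymp \sqrt{\log n}\,x + \log n$ (up to the normalization by $\Delta^*$), and one must check that this $\varphi$ is continuous, nondecreasing, with $\varphi(x)/x$ nonincreasing, and that $\varphi(\gamma)\to 0$. Since the Bernstein condition \eqref{eq:bernstein} of Assumption~\ref{ass:DCSBM} is exactly the hypothesis of Lemma~\ref{lem:vector_bernstein_application}, this step is where the weighted (non-binary) structure is accommodated.

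Once (A1)--(A4) hold, the conclusion \eqref{eq:rowwise_first_order} of Lemma~\ref{thm:row-wise-Abbe} gives the two-to-infinity bound on $\mtx{U}\mtx{Q} - \mtx{A}\mtx{U}^*(\mtx{\Lambda}^*)^{-1}$. To reach the stated form $\|\mtx{U}\mtx{Q}^\top - \mtx{U}^*\|_{2\to\infty} \le C/\sqrt{n\log n}$, I would substitute the values of $\kappa = O(1)$, $\varphi(1)\asymp\sqrt{\log n}$, $\gamma\asymp 1/(\theta_{\min}\sqrt{n})$, $\|\mtx{U}^*\|_{2\to\infty}\lesssim\sqrt{K}\theta_{\max}/\|\vct{\theta}\|_2$, and $\|\mtx{M}\|_{2\to\infty}/\Delta^*$ into \eqref{eq:rowwise_first_order}, and then control the linear term $\mtx{A}\mtx{U}^*(\mtx{\Lambda}^*)^{-1} - \mtx{U}^*$ separately via another application of the row-wise Bernstein bound. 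The statement about $\vct{u}_1$ in part~1 is handled by the same scheme applied to the top eigenvector, using the strict positivity and entrywise lower bound from Lemma~\ref{lem:population_eigen_vec}(1) to fix the sign and convert the $2\to\infty$ bound into an $\ell_\infty$ bound.

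The main obstacle I anticipate is the bookkeeping in A4: one must produce a single admissible $\varphi$ whose shape is dictated by the two-term Bernstein bound yet simultaneously satisfies the monotonicity and subadditivity requirements of Lemma~\ref{thm:row-wise-Abbe}, while also ensuring that the resulting rate, after plugging $\gamma\asymp 1/(\theta_{\min}\sqrt n)$ and using $\theta_{\min}\sqrt n \ge \log^3 n$, collapses to the clean $C/\sqrt{n\log n}$ in the conclusion. Tracking how the three logarithmic factors in the sparseness assumption \eqref{eq:sparseness} propagate through $\gamma$, $\varphi(\gamma)$, and the final rate---and confirming they leave exactly one surviving $\sqrt{\log n}$ in the denominator---is the delicate part; the rest is a fairly mechanical verification of the hypotheses using the supporting lemmas already established.
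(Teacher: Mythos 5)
Your overall strategy is the same as the paper's---verify (A1)--(A4) of Lemma \ref{thm:row-wise-Abbe} using Lemmas \ref{lem:population_eigen}, \ref{lem:population_eigen_vec} and \ref{lem:vector_bernstein_application}, then combine the resulting first-order bound with a separate row-wise bound on $\mtx{A}\mtx{U}^*(\mtx{\Lambda}^*)^{-1} - \mtx{U}^*$---but there is a genuine gap in how you invoke Lemma \ref{thm:row-wise-Abbe}. You apply it once, to ``the eigenspace indexed by the top $K$ eigenvalues'' (and, for part 2, to $\vct{u}_2^*,\ldots,\vct{u}_K^*$ as a single block with one rotation $\mtx{Q}$). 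However, Lemma \ref{thm:row-wise-Abbe} requires the block $\vct{u}_{s+1}^*,\ldots,\vct{u}_{s+r}^*$ to be consecutive in the ordering of eigenvalues by \emph{value}, $\lambda_1^* \geq \cdots \geq \lambda_n^*$, not by magnitude. Assumption \ref{ass:DCSBM} only lower-bounds $|\lambda_k(\mtx{B})|$, so $\mtx{M}$ may have negative eigenvalues among $\lambda_2^*,\ldots,\lambda_K^*$; these sit at the bottom of the value ordering, separated from the positive ones by the $n-K$ zero eigenvalues of the rank-$K$ matrix $\mtx{M}$. Hence $\{\lambda_2^*,\ldots,\lambda_K^*\}$ is in general not a consecutive block and a single application of the theorem is not legitimate; taking positions $1,\ldots,K$ in value order instead would put zero eigenvalues inside the block and force $\Delta^*=0$. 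The paper resolves this by splitting into three groups---(i) $\lambda_1^*$, (ii) the positive values among $\lambda_2^*,\ldots,\lambda_K^*$, (iii) the negative ones---applying Lemma \ref{thm:row-wise-Abbe} to each group separately, and assembling the two rotations into a block-diagonal $\mtx{Q}\in\mathcal{O}_{K-1}$. Your separate treatment of $\vct{u}_1$ is fine (it matches group (i)), but part 2 needs the positive/negative split.

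Two smaller issues. First, for (A3) you cite Lemma \ref{lem:noise_norm}, which is only an $O_P$ statement; it carries no quantitative failure probability, so it cannot deliver the conclusion ``with probability $1-O(n^{-3})$'' that the lemma asserts and that the downstream NSP argument requires. The paper instead re-derives the spectral-norm bound $\|\mtx{A}-\mtx{M}\|\leq C(c_0)\theta_{\min}\sqrt{n\log n}$ with failure probability $O(n^{-3})$ via the sub-exponential matrix Bernstein inequality (Lemma \ref{thm:matrix_Bernstein_subexponential}). Second, your (A4) bookkeeping is off: after the normalization by $\Delta^*\gtrsim \theta_{\min}^2 n$, the correct scales are $t_1 \asymp \sqrt{\log n}/(\theta_{\min}\sqrt{n}) \lesssim (\log n)^{-5/2}$ and $t_2 \asymp \log n/(\theta_{\min}^2 n)$, so $\varphi(1)\lesssim (\log n)^{-5/2}$ rather than $\sqrt{\log n}$; moreover an affine $\varphi(x)\asymp \sqrt{\log n}\,x+\log n$ violates $\varphi(0)=0$, and an unnormalized $\varphi$ would break the smallness requirement $32\kappa\max\{\gamma,\varphi(\gamma)\}\leq 1$ in (A3). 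One needs the paper's piecewise construction (linear near $0$, constant $t_2$, then $t_1x$, capped at $t_1$) to satisfy all the monotonicity conditions. These last points are repairable along the lines you anticipated, but the missing split by eigenvalue sign is a real gap in the argument.
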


\subsection{Proof of Lemma \ref{thm:nsp}}
Using the above lemmas, we can prove Lemma \ref{thm:nsp} by following the arguments in \citet{jin2022optimal}. Here, we give a brief outline.

Define $\mtx{R}^{(K)}$ as an $n \times (K-1)$ matrix constructed from the eigenvectors $\vct{u}_1$, $\vct{u}_2$, ..., $\vct{u}_K$ by taking entrywise ratios of $\vct{u}_2$,..., $\vct{u}_K$ to $\vct{u}_1$, i.e., $R^{(K)}(i, k) = u_{k+1}(i)/u_1(i)$ for $1\leq i \leq n$ and $0\leq k \leq K-1$. For any $2 \leq m \leq K$, let $\mtx{R}^{(m)}$ be an $n \times (m-1)$ matrix consists of the first $m-1$ columns of $\mtx{R}^{(K)}$. For any candidate number of clusters $m$,  SCORE amounts to performing $k$-means clustering on the rows of $\mtx{R}^{(m)}$. Therefore, we need to study the statistical properties of the rows of $\mtx{R}^{(m)}$.

Next, we define the population counterparts. Denote $\mathcal{O}_{K-1}$ as the space of $(K-1) \times (K-1)$ orthogonal matrices.
For any $\mtx{Q} \in \mathcal{O}_{K-1}$ and any $2 \leq k \leq K$, let $\vct{u}_k^*(\mtx{Q})$ be the $(k-1)$-th column of $[\vct{u}_2^*, \ldots, \vct{u}_K^*]\mtx{Q}$. Define $\mtx{R}^{*(K)}(\mtx{Q}) \in \mathbb{R}^{n \times (K-1)}$ such that its $(k-1)$-th column is the entrywise ratio of $\vct{u}_k^*(\mtx{Q})$ to $\vct{u}_1^*$. For any $2 \leq m \leq K$, let $\mtx{R}^{*(m)}(\mtx{Q}) \in \mathbb{R}^{n \times (m-1)}$ consist of the first $m-1$ columns of $\mtx{R}^{*(K)}(\mtx{Q})$. The following lemma provides a uniform upper bound on the difference between the corresponding rows of $\mtx{R}^{(m)}$ and $\mtx{R}^{*(m)}(\mtx{Q})$.

\begin{lemma}[Row-wise deviation bound]
\label{lem:row_wise_bound}
For $2\leq m \leq K$, denote $\left( \vct{r}_i^{(m)} \right)^\top$ and $\left( \vct{r}_i^{*(m)}(\mtx{Q}) \right)^\top$ as the $i$-th row of $\mtx{R}^{(m)}$ and $\mtx{R}^{*(m)}(\mtx{Q})$, respectively.
Under Assumption \ref{ass:DCSBM}, with probability $1 - O(n^{-3})$, there exists a $(K-1)\times (K-1)$ orthogonal matrix $\mtx{Q}$, such that
\begin{equation}
\label{eq:row_wise_bound}
\|\vct{r}_i^{(m)} - \vct{r}_i^{*(m)}(\mtx{Q})\|_{2} 
\leq \|\vct{r}_i^{(K)} - \vct{r}_i^{*(K)}(\mtx{Q})\|_{2} \leq \frac{C(c_0)}{\sqrt{\log n}},
\end{equation}
for each $2 \leq m \leq K$ and $1 \leq i \leq n$.
\end{lemma}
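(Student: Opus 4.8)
The plan is to separate the two inequalities in \eqref{eq:row_wise_bound}. The left inequality is immediate: since $\mtx{R}^{*(m)}(\mtx{Q})$ consists of the first $m-1$ columns of $\mtx{R}^{*(K)}(\mtx{Q})$ and $\mtx{R}^{(m)}$ of $\mtx{R}^{(K)}$, the vector $\vct{r}_i^{(m)} - \vct{r}_i^{*(m)}(\mtx{Q})$ is exactly the subvector formed by the first $m-1$ coordinates of $\vct{r}_i^{(K)} - \vct{r}_i^{*(K)}(\mtx{Q})$, so its Euclidean norm can only be smaller. This also explains why a single $\mtx{Q}$ works uniformly over $m$. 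It therefore suffices to bound the full $(K-1)$-dimensional row difference uniformly over $i$, using the orthogonal matrix $\mtx{Q}$ supplied by Lemma \ref{lem:row_wise_bound_matrix}.

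I would then work on the event of probability $1 - O(n^{-3})$ on which both conclusions of Lemma \ref{lem:row_wise_bound_matrix} hold. Writing $s = u_1(i)$, $t = u_1^*(i)$, and letting $\vct{a}$ denote the $i$-th row of $\mtx{U}$ viewed as a column vector and $\vct{b} = \mtx{Q}^\top \mtx{U}^*_{i\cdot}$, the $i$-th rows of the ratio matrices are $\vct{r}_i^{(K)} = \vct{a}/s$ and $\vct{r}_i^{*(K)}(\mtx{Q}) = \vct{b}/t$. The elementary identity
\[
\frac{\vct{a}}{s} - \frac{\vct{b}}{t} = \frac{\vct{a} - \vct{b}}{s} + \vct{b}\,\frac{t - s}{s t}
\]
reduces matters to controlling $\|\vct{a} - \vct{b}\|_2$, $|t - s|$, $\|\vct{b}\|_2$, and lower bounds for $s,t$. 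Here the second conclusion of Lemma \ref{lem:row_wise_bound_matrix} gives $\|\vct{a} - \vct{b}\|_2 = \|(\mtx{U}\mtx{Q}^\top)_{i\cdot} - \mtx{U}^*_{i\cdot}\|_2 \leq C/\sqrt{n\log n}$, and the first conclusion gives $|t - s| \leq \|\vct{u}_1 - \vct{u}_1^*\|_\infty \leq C/\sqrt{n\log n}$.

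The denominators are pinned down using Lemma \ref{lem:population_eigen_vec} together with the balancedness assumption \eqref{eq:balance_ass}. Indeed $u_1^*(i) \asymp \theta_i/\|\vct{\theta}\|_2$, and since $\theta_{\min}/\theta_{\max} \geq c_0$ with $\theta_{\min}\sqrt{n} \leq \|\vct{\theta}\|_2 \leq \theta_{\max}\sqrt{n}$, one gets $\theta_i/\|\vct{\theta}\|_2 \asymp n^{-1/2}$ uniformly in $i$, hence $t \asymp n^{-1/2}$. Because $|t - s| = O((n\log n)^{-1/2})$ is of strictly smaller order than $t$, it follows that $s \asymp n^{-1/2}$ as well for $n$ large, so both denominators stay bounded away from zero; likewise $\|\vct{b}\|_2 = \|\mtx{U}^*_{i\cdot}\|_2 \leq C\sqrt{K}\,\theta_i/\|\vct{\theta}\|_2 = O(n^{-1/2})$ since $K$ is fixed. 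Substituting, the first term is $O((n\log n)^{-1/2})/\Theta(n^{-1/2}) = O((\log n)^{-1/2})$ and the second is $O(n^{-1/2})\cdot O((n\log n)^{-1/2})/\Theta(n^{-1}) = O((\log n)^{-1/2})$, giving the claimed bound $C(c_0)/\sqrt{\log n}$ uniformly over $1 \leq i \leq n$.

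The step I expect to be the main obstacle is controlling the ratio denominator $u_1(i)$: because SCORE divides eigenvector entries by $u_1(i)$, the whole estimate collapses if this entry can be small or change sign. The crux is thus to certify that $u_1(i)$ is uniformly of order $n^{-1/2}$, which hinges on the positivity and entrywise lower bound for $\vct{u}_1^*$ in Lemma \ref{lem:population_eigen_vec} (via Perron's theorem and balancedness) combined with the fact that the entrywise perturbation from the first conclusion of Lemma \ref{lem:row_wise_bound_matrix} is of the smaller order $(n\log n)^{-1/2}$. Once the denominators are controlled, the remaining bounds are routine; uniformity over $i$ is inherited from the $2\to\infty$-norm control, and uniformity over $m$ is free from the coordinate-restriction observation.
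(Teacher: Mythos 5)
Your proposal is correct and takes essentially the same approach as the paper: the paper's own proof simply declares the lemma a straightforward corollary of Lemmas \ref{lem:population_eigen_vec} and \ref{lem:row_wise_bound_matrix}, following the proof of Lemma 4.1 in \citet{jin2022optimal}, and your argument is precisely that omitted computation --- the ratio-difference identity, the $2\to\infty$ and entrywise bounds from Lemma \ref{lem:row_wise_bound_matrix}, the lower bound $u_1(i)\asymp u_1^*(i)\asymp n^{-1/2}$ from Lemma \ref{lem:population_eigen_vec} with balancedness, and the subvector observation giving the inequality for $m<K$. Your identification of the denominator control as the crux is also exactly where the cited lemmas do the work.
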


This lemma is a straightforward corollary of Lemmas \ref{lem:population_eigen_vec} and \ref{lem:row_wise_bound_matrix} by following the proof of Lemma 4.1 in \citet{jin2022optimal}, so we omit the details.

Combining Lemma \ref{lem:row_wise_bound} with Lemma 4.2, Lemma 4.3 and Theorem 4.1 in \citet{jin2022optimal}, Lemma \ref{thm:nsp} is proved by exactly the same arguments as in \citet{jin2022optimal}.

\subsection{Proof of Lemma \ref{lem:row_wise_bound_matrix}}
\label{sec:proof_row_wise}
\begin{proof}
Divide $\lambda_1^*,\ldots,\lambda_K^*$ into three groups: (i) $\lambda_1^*$, (ii) positive values in $\lambda_2^*,\ldots,\lambda_K^*$, and (\text{iii}) negative values in $\lambda_2^*,\ldots,\lambda_K^*$.
We shall apply Lemma \ref{thm:row-wise-Abbe} to all three groups. For succinctness, we only show in detail the application to group (ii), while the proofs for the other two groups are similar and thus omitted. 

Denote $K_1$ as the number of eigenvalues in group (ii). Define $\mtx{\Lambda}_1^*$ as the diagonal matrix consisting of eigenvalues in group (ii), and $\mtx{U}_1^*$
as the matrix whose columns are the associated eigenvectors. Define the empirical counterparts of the two matrices as $\mtx{\Lambda}_1$ and $\mtx{U}_1$. To show the second statement in the lemma, we aim to first show with high probability that
\begin{enumerate}[label=(\alph*)]
\item $\|\mtx{U}_1\mtx{Q}^\top - \mtx{A}\mtx{U}_1^*(\mtx{\Lambda}_1^*)^{-1}\|_{2 \rightarrow \infty}\leq \frac{C(c_0)}{\sqrt{n \log n}}$ for some $\mtx{Q} \in \mathcal{O}_{K_1}$.
\item $\|\mtx{U}_1^* - \mtx{A}\mtx{U}_1^*(\mtx{\Lambda}_1^*)^{-1}\|_{2 \rightarrow \infty} \leq \frac{C(c_0)}{\sqrt{n \log n}}$.
\end{enumerate}

\paragraph{Proof of (a)}

To apply Lemma \ref{thm:row-wise-Abbe}, we need to determine $\gamma$ and $\varphi: \mathbb{R}_+ \rightarrow \mathbb{R}_+$, and then verify Assumption (A1)---(A4) as required by the theorem. Note that by Lemma \ref{lem:population_eigen}, we have 
\[
\Delta^* = \min\left\{\lambda_1^* - \lambda_2^*, |\lambda_K^*| \right\} \geq C(c_0) \theta_{\min}^2   n,
\]
and $\kappa \leq C(c_0)$.
We choose an appropriately large $C_1(c_0)$ and let
\[
\gamma = \frac{C_1(c_0)}{\sqrt{\log n}} .
\]
\begin{enumerate}
    \item[A1] This is trivial since 
    \[
    \|\mtx{M}\|_{2\rightarrow \infty} \leq c_0^{-2} \theta_{\min}^2  \sqrt{n} \leq \Delta^* \gamma,
    \]
    when $n$ is sufficiently large.
    \item[A2] This obviously holds.
    \item[A3] In the proof of A4, we show that $\varphi(\gamma) \leq C(c_0) (\log n)^{-3/2}$, hence, $32 \kappa \max\{\gamma, \varphi(\gamma)\} \leq 1$ when $n$ is sufficiently large. 
    
   Next, we show the spectral norm perturbation bound by applying the subexponential case of matrix Bernstein inequality (Theorem 6.2 of \citet{tropp2012user}, restated as Lemma \ref{thm:matrix_Bernstein_subexponential} in this paper).
Let $\mtx{X}^{ij} = (A_{ij}-M_{ij})(\mtx{E}^{ij} + \mtx{E}^{ji})$ for $i<j$, and $\mtx{X}^{ii} = (A_{ii}-M_{ii})\mtx{E}^{ii}$ for $i=1,\ldots,n$, where $\mtx{E}^{ij}$ is a $n\times n$ matrix with 1 on the $(i,j)$-th entry and 0 elsewhere. Obviously, $\E [\mtx{X}^{ij}] = \mtx{0}$. Notice that 
    \begin{align*}
    \E[(\mtx{X}^{ij})^p]=
        \begin{cases}
        \E[(A_{ij}-M_{ij})^p](\mtx{E}^{ij} + \mtx{E}^{ji})\quad \text{when $p$ is odd};
        \\
        \E[(A_{ij}-M_{ij})^p](\mtx{E}^{ij} + \mtx{E}^{ji})\quad \text{when $p$ is even}.
        \end{cases}
    \end{align*}
Also, note that $-(\mtx{E}^{ii} + \mtx{E}^{jj}) \preceq  \mtx{E}^{ij} + \mtx{E}^{ji} \preceq \mtx{E}^{ii} + \mtx{E}^{jj}$. Then, by \eqref{eq:bernstein} in Assumption \ref{ass:DCSBM}, for any integer $p \geq 2$, we have
\[
\left| \E[(A_{ij}-M_{ij})^p] \right| \leq \E[|A_{ij}-M_{ij}|^p] \leq C' \left(\frac{p!}{2}\right) R^{p-2} M_{ij} ,
\]
where $C'$ and $R$ only depend on $c_0$. Then,
    \begin{align*}
    \E[(\mtx{X}^{ij})^p] \preceq C' \frac{p!}{2}R^{p-2} M_{ij} (\mtx{E}^{ii} + \mtx{E}^{jj}), \quad p=2, 3, 4, \ldots.
    \end{align*}
    Thus, the conditions of Lemma \ref{thm:matrix_Bernstein_subexponential} are verified. Notice that $\sum_{i\leq j} \mtx{X}^{ij} = \mtx{A} - \mtx{M}$. Denote 
    \begin{align*}
        \sigma^2 = \left\| \sum_{i\leq j} C'  M_{ij} (\mtx{E}^{ii} + \mtx{E}^{jj}) \right\| = C' \left(\underset{i}{\max} \sum_{j=1}^n M_{ij}\right) \lesssim \theta_{\min}^2  n,
    \end{align*}
   where $\lesssim$ only hides a constant depending on $c_0$. Then, for all $t\geq 0$,
    \begin{align*}
    \P\left(\| \mtx{A} - \mtx{M} \| \geq t\right)
    \leq n \exp\left(-\frac{
    t^2}{2(\sigma^2 + Rt )}\right)
    \leq n \exp\left(-\frac{1}{4} \left(\frac{t^2}{\sigma^2}\wedge \frac{t}{R}\right)\right).
    \end{align*}
    By (\ref{eq:sparseness}) in Assumption \ref{ass:DCSBM}, when $n$ is sufficiently large, with probability $1-O(n^{-3})$, for a sufficiently large $C(c_0)$,
    \begin{align*}
        \| \mtx{A} - \mtx{M} \| \leq C(c_0) \theta_{\min} \sqrt{n\log n}.
    \end{align*}
 
    \item[A4] 
    By Lemma \ref{lem:vector_bernstein_application}, for any $1\leq i\leq n$ and $\mtx{W} \in \mathbb{R}^{n \times r}$, with probability $1 - O(n^{-4})$, there holds
    \begin{equation}
    \label{eq:A4_original}
    \| (\mtx{A} - \mtx{M})_{i \cdot} \mtx{W} \|_2 \leq C_2(c_0) \max\left(\theta_{\min}\|\mtx{W}\|_F\sqrt{\log n}, \|\mtx{W}\|_{2 \rightarrow \infty} (\log n)  \right),
    \end{equation}
    for a sufficiently large $C_2(c_0)$.
    Since $\Delta^* \geq C(c_0) \theta_{\min}^2   n$, we have
    \begin{align*}
        \|(\mtx{A} - \mtx{M})_{i \cdot} \mtx{W}\|_2 
        &\leq C_2(c_0) \Delta^* \|\mtx{W}\|_{2 \rightarrow \infty} \max \left( \frac{\sqrt{n \log n}}{\theta_{\min}  n}\frac{\|\mtx{W}\|_F}{\sqrt{n}\|\mtx{W}\|_{2 \rightarrow \infty}}, \frac{\log n}{\theta_{\min}^2  n} \right).
    \end{align*}
    Now we follow similar arguments as in the proof of Lemma 2.1 of \citet{jin2017estimating}. 
    Define the quantities $t_1=C_2(c_0)\left(\theta_{\min}  n \right)^{-1} \sqrt{n \log n}$ and $t_2=C_2(c_0)\left(\theta_{\min}^2 n  \right)^{-1} \log n$ .
    Define the function
    \begin{align*}
        \widetilde{\varphi}(x) = \max\left(t_1 x, t_2\right).
    \end{align*}
    Then, we have for any $1\leq i\leq n$, with probability $1 - O(n^{-4})$,
    \begin{equation}
    \label{eq:A4}
        \|(\mtx{A} - \mtx{M})_{i \cdot} \mtx{W}\|_2 
        \leq \Delta^* \|\mtx{W}\|_{2 \rightarrow \infty} 
        \widetilde{\varphi}\left( \frac{\|\mtx{W}\|_F}{\sqrt{n}\|\mtx{W}\|_{2 \rightarrow \infty}} \right).
    \end{equation}
    First, notice that $(\sqrt{n}\|\mtx{W}\|_{2 \rightarrow \infty})^{-1} \|\mtx{W}\|_F \in \left[n^{-1/2}, 1\right]$. 
    Next, observe that by (\ref{eq:sparseness}), when $n$ is sufficiently large, we have $t_2 / t_1 = (\theta_{\min}\sqrt{n})^{-1}\sqrt{\log n} > n^{-1/2}$.
    Therefore, when $x\in \left[n^{-1/2}, t_2/t_1 \right]$, $t_1 x \leq t_2$, i.e., $\widetilde{\varphi}(x) = t_2$; when $x\in \left(t_2/t_1, 1 \right]$, $t_1 x > t_2$, i.e., $\widetilde{\varphi}(x) = t_1 x$. As a result, we can construct $\varphi(\cdot)$ as: 
    \begin{align*}
    \varphi(x)=
        \begin{cases}
        \sqrt{n} t_2 x \quad &\text{for}~ 0\leq x\leq n^{-1/2};
        \\
         t_2 \quad &\text{for}~  n^{-1/2} < x\leq t_2 / t_1;
        \\
        t_1 x \quad &\text{for}~ t_2/t_1 < x\leq 1;
        \\
        t_1 \quad &\text{for}~ x > 1.
        \end{cases}
    \end{align*}
    Obviously, $\varphi(x)$ is continuous and non-decreasing in $\mathbb{R}_+$, with $\varphi(0)=0$ and $\varphi(x)/x$ being non-increasing in $\mathbb{R}_+$. By (\ref{eq:A4}) and $0\leq \widetilde{\varphi}(x) \leq \varphi (x)$, we have with probability $1 - O(n^{-4})$,
    \begin{equation*}
        \|(\mtx{A} - \mtx{M})_{i \cdot} \mtx{W}\|_2 
        \leq \Delta^* \|\mtx{W}\|_{2 \rightarrow \infty} 
        \varphi\left( \frac{\|\mtx{W}\|_F}{\sqrt{n}\|\mtx{W}\|_{2 \rightarrow \infty}} \right).
    \end{equation*}
    By the definition of $t_1$ and (\ref{eq:sparseness}), we have
    $t_1 \leq (\log n)^{-5/2}$.
    Furthermore, since $\varphi(x)\leq t_1$, we have $\varphi(\gamma) \leq (\log n)^{-5/2}$.
\end{enumerate}

After verifying (A1)---(A4), we obtain the bound \eqref{eq:rowwise_first_order}. 
Note that based on the definition of $\varphi(x)$, we have $\kappa(\kappa + \varphi(1)) \leq C(c_0)$ and $\gamma + \varphi(\gamma) \leq \frac{C(c_0)}{\sqrt{\log n}}$.
Also, by Lemma \ref{lem:population_eigen_vec}, $\|\mtx{U}^*_1\|_{2 \rightarrow \infty} \leq  \frac{C(c_0)}{\sqrt{n}}$. Since $\|\mtx{M}\|_{2\rightarrow \infty} \leq c_0^{-2} \theta_{\min}^2 \sqrt{n}$ and $\Delta^* \geq C(c_0)  \theta_{\min}^2  n$, we have $\|\mtx{M} \|_{2 \rightarrow \infty}/\Delta^* \leq \frac{C(c_0)}{\sqrt{n}}$.
Finally, we obtain that with probability $1-O(n^{-3})$,
\begin{align*}
    \left\|\mtx{U}_1\mtx{Q} - \mtx{A}\mtx{U}^*_1(\mtx{\Lambda}^*_1)^{-1}\right\|_{2 \rightarrow \infty} &\lesssim \kappa (\kappa + \varphi(1))(\gamma + \varphi(\gamma))\left\|\mtx{U}^*_1\right\|_{2 \rightarrow \infty} + \gamma \left. \|\mtx{M}\|_{2 \rightarrow \infty}\middle/\Delta^* \right. \\
    & \lesssim \frac{1}{\sqrt{ n\log n}},
\end{align*}
where $\lesssim$ only hides a constant depending on $c_0$ in Assumption \ref{ass:DCSBM}.

\paragraph{Proof of (b)}

Based on the fact $\mtx{U}^*_1 = \mtx{M}\mtx{U}^*_1(\mtx{\Lambda}^*_1)^{-1}$, we have
\[
\left\|\mtx{U}^*_1 - \mtx{A}\mtx{U}^*_1 \left(\mtx{\Lambda}^*_1 \right)^{-1}\right\|_{2 \rightarrow \infty} = \left\|(\mtx{M}- \mtx{A})\mtx{U}^*_1 \left(\mtx{\Lambda}^*_1 \right)^{-1}\right\|_{2 \rightarrow \infty}.
\]
By Lemma \ref{lem:population_eigen_vec} and assumption (\ref{eq:sparseness}), applying (\ref{eq:A4_original}) with $\mtx{W} = \mtx{U}^*_1$ yields that with probability $1-O\left(n^{-3}\right)$, 
\begin{align*}
    \left\|(\mtx{A}- \mtx{M})\mtx{U}^*_1(\mtx{\Lambda}^*_1)^{-1}\right\|_{2 \rightarrow \infty} 
    &\leq \left( \underset{1\leq i\leq n}{\max} \left\|(\mtx{A} - \mtx{M})_{i \cdot} \mtx{U}^*_1 \right\|_2  \right) \cdot \left\| \left(\mtx{\Lambda}^*_1 \right)^{-1} \right\| \\
    &\leq C(c_0) \max\left(\theta_{\min}\left\|\mtx{U}^*_1\right\|_F\sqrt{\log n}, \text{~} \left\|\mtx{U}^*_1\right\|_{2 \rightarrow \infty} (\log n)  \right) \cdot \left(\theta_{\min}^2  n \right)^{-1}
    \\
    & \leq C(c_0) \max\left( \left(\theta_{\min} \sqrt{n}\right)^{-1}\left\|\mtx{U}^*_1\right\|_F \sqrt{\frac{\log n}{n}}, \text{~} \left(\theta_{\min} \sqrt{n}\right)^{-2} \left\|\mtx{U}^*_1\right\|_{2 \rightarrow \infty} (\log n) \right)
    \\
    &\lesssim \frac{1}{\sqrt{n \log^5 n}},
\end{align*}
where $\lesssim$ only hides a constant depending on $c_0$ in Assumption \ref{ass:DCSBM}.

With (a) and (b), we have 
\[
\left\|\mtx{U}_1\mtx{Q}^\top - \mtx{U}_1^* \right\|_{2 \rightarrow \infty} \lesssim \frac{1}{\sqrt{ n\log n}} \quad \text{for some~} \mtx{Q} \in \mathcal{O}_{K_1}.
\]
For eigenvalues in group (iii), we similarly have
\[
\left\|\mtx{U}_2\mtx{Q}^\top - \mtx{U}_2^* \right\|_{2 \rightarrow \infty} \lesssim \frac{1}{\sqrt{ n\log n}} \quad \text{for some~} \mtx{Q} \in \mathcal{O}_{K_2}. 
\]
Combining these results yields the second statement in this lemma: With probability $1-O(n^{-3})$, we have
\begin{equation*}
    \left\|\mtx{U}\mtx{Q}^\top - \mtx{U}^* \right\|_{2 \rightarrow \infty} \lesssim \frac{1}{\sqrt{ n\log n}} \quad \text{for some } \mtx{Q} \in \mathcal{O}_{K-1}.
\end{equation*}

To show the first statement, we apply Lemma \ref{thm:row-wise-Abbe} to group (i) with $s=0$ and $r=1$. Note that by Lemma \ref{lem:population_eigen}, we have
\[
\Delta^* = \min\{\lambda_1^* , \lambda_1^* - \lambda_2^* \} \geq C(c_0) \theta_{\min}^2  n
\]
and $\kappa \leq C(c_0)$. Following similar procedures, we can select $\mtx{u}_1$ such that
\begin{enumerate}[label=(\alph*)]
    \item $\left\|\mtx{u}_1 - \left. \mtx{A}\mtx{u}_1^* \middle/\lambda_1^* \right. \right\|_{ \infty} \leq \frac{C(c_0)}{\sqrt{n \log n}}$.
    \item $\left\|\mtx{u}_1^* - \left. \mtx{A}\mtx{u}_1^* \middle/\lambda_1^* \right. \right\|_{ \infty} \leq \frac{C(c_0)}{\sqrt{n \log n}}$.
\end{enumerate}
Therefore, with probability $1-O(n^{-3})$, we have
\begin{equation*}
    \left\|\mtx{u}_1 - \mtx{u}_1^* \right\|_{ \infty}  \leq \frac{C(c_0)}{\sqrt{n \log n}},
\end{equation*}
which proves the first statement.
\end{proof}

\newpage

\section{ADDITIONAL EXPERIMENTS}
\label{sec:additional_exp}

\setcounter{figure}{0}
\renewcommand\thefigure{\thesection.\arabic{figure}}   

This section continues from Section \ref{sec:experiment_synthetic}, presenting additional results of synthetic simulations with larger values of $K$. The generating mechanism of the weighted DCSBM and experiment setup follow from Section \ref{sec:experiment_synthetic}.
We consider the following two settings under the Poisson DCSBM:
\begin{itemize}
    \item \textbf{Setting 1: } $\vct{n}_{all} = (20, 60, 20, 60, 20, 60, 20, 60, 20, 60)$; $K=7,8,9,10$
    \item \textbf{Setting 2: } $\vct{n}_{all} = (30, 60, 90, 30, 60, 90, 30, 60, 90)$; $K=7,8,9$
\end{itemize}
The threshold used in SVPS is $2.02$.
The plots of the accuracy rates of the compared methods are shown in Figure \ref{fig:additional_exp_1}, with figure captions indicating the choice of $(\rho,r)$. As we can see, SVPS still achieves higher accuracy rate, especially when using SCORE for spectral clustering, the performance gap is very significant.

\begin{figure*}[t!]
\begin{subfigure}{.33\textwidth}
  \centering
  \includegraphics[width = 1.0\linewidth, height=0.7 \linewidth]{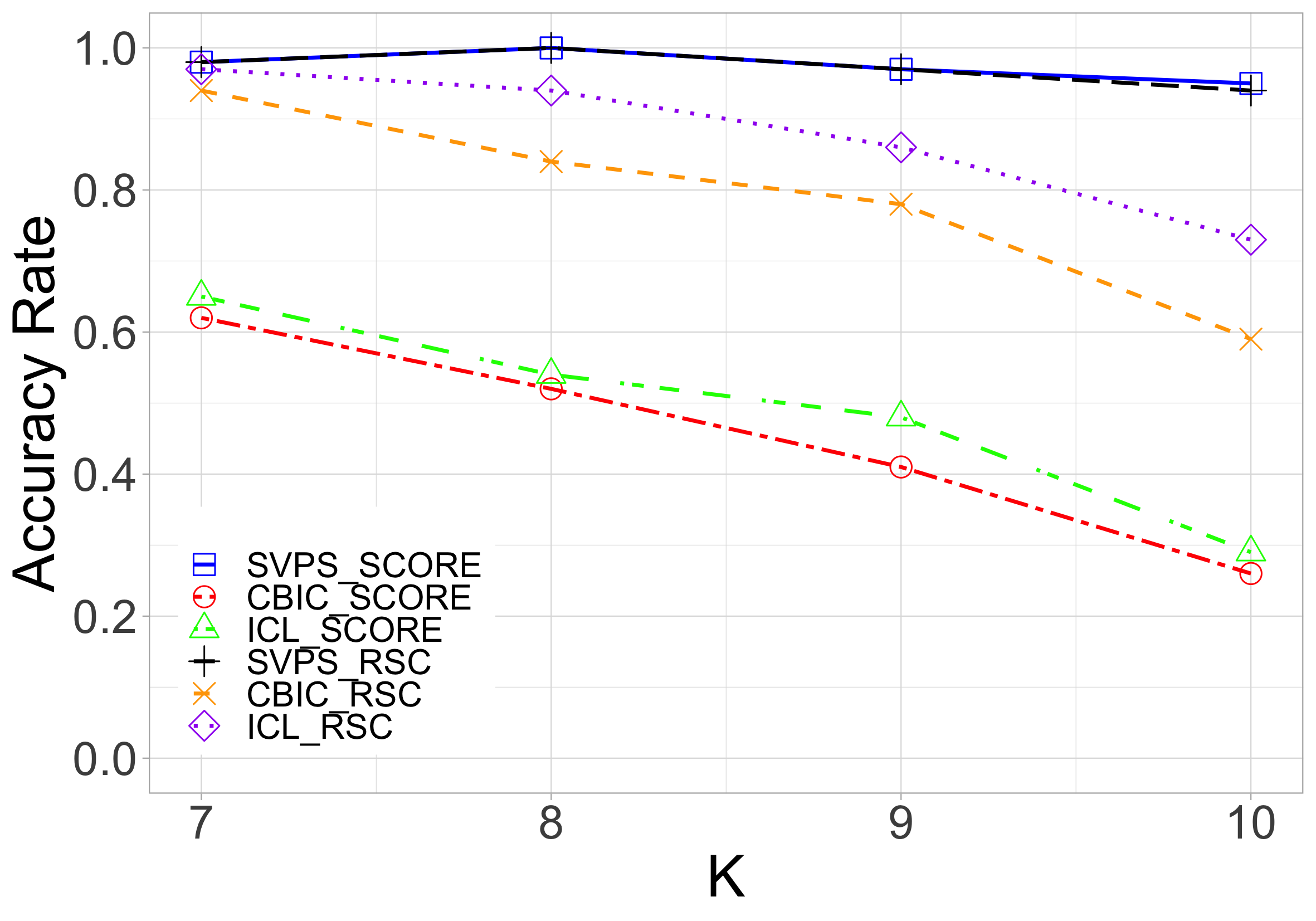}
  \caption{$\rho = 0.2$, $r=4$}
\end{subfigure}%
\begin{subfigure}{.33\textwidth}
  \centering
  \includegraphics[width = 1.0\linewidth, height=0.7 \linewidth]{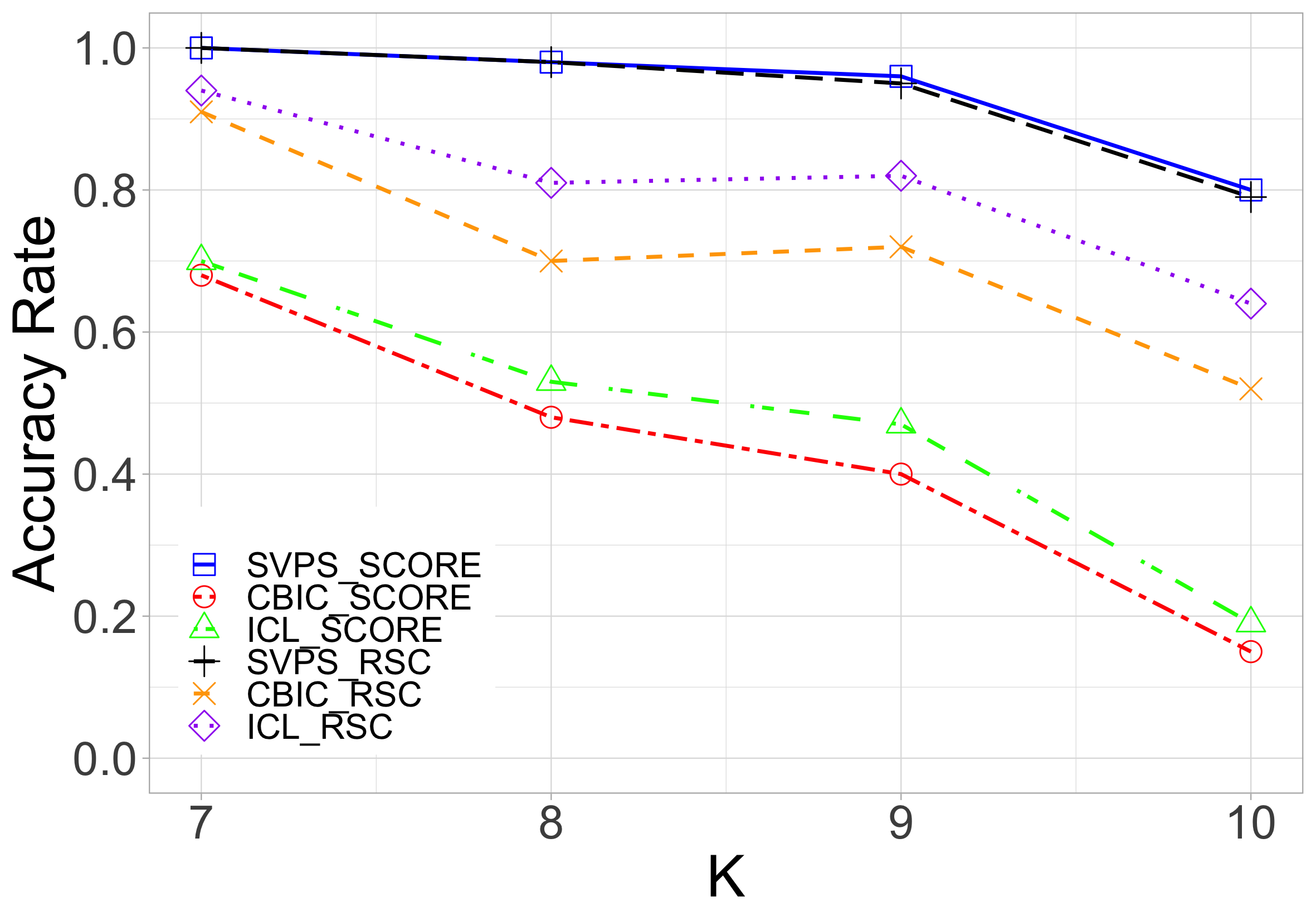}
  \caption{$\rho = 0.3$, $r=3$}
\end{subfigure}
\begin{subfigure}{.33\textwidth}
  \centering
  \includegraphics[width = 1.0\linewidth, height=0.7\linewidth]{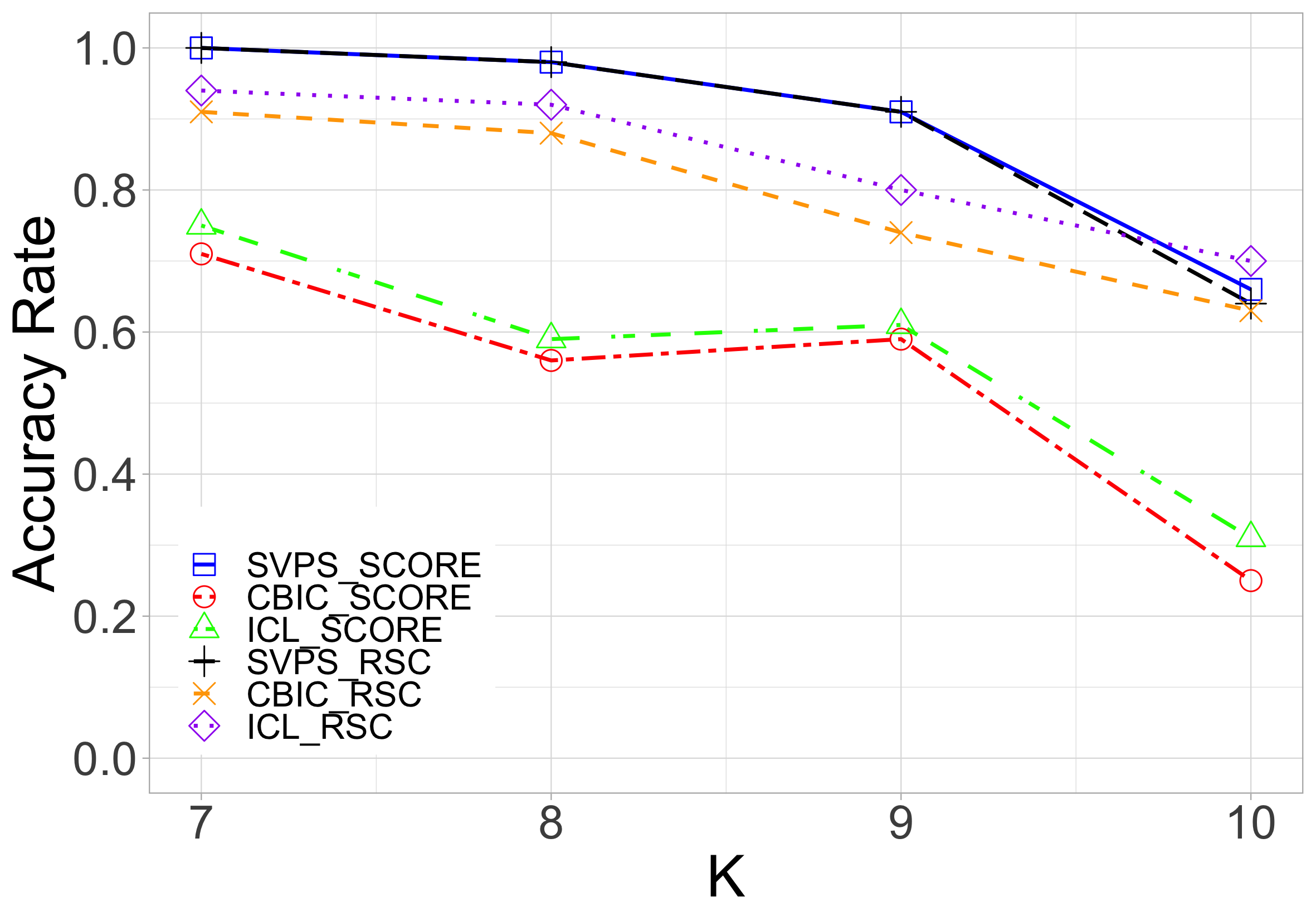}
  \caption{$\rho = 0.6$, $r=2$}
\end{subfigure}

\begin{subfigure}{.33\textwidth}
  \centering
  \includegraphics[width = 1.0\linewidth, height=0.7 \linewidth]{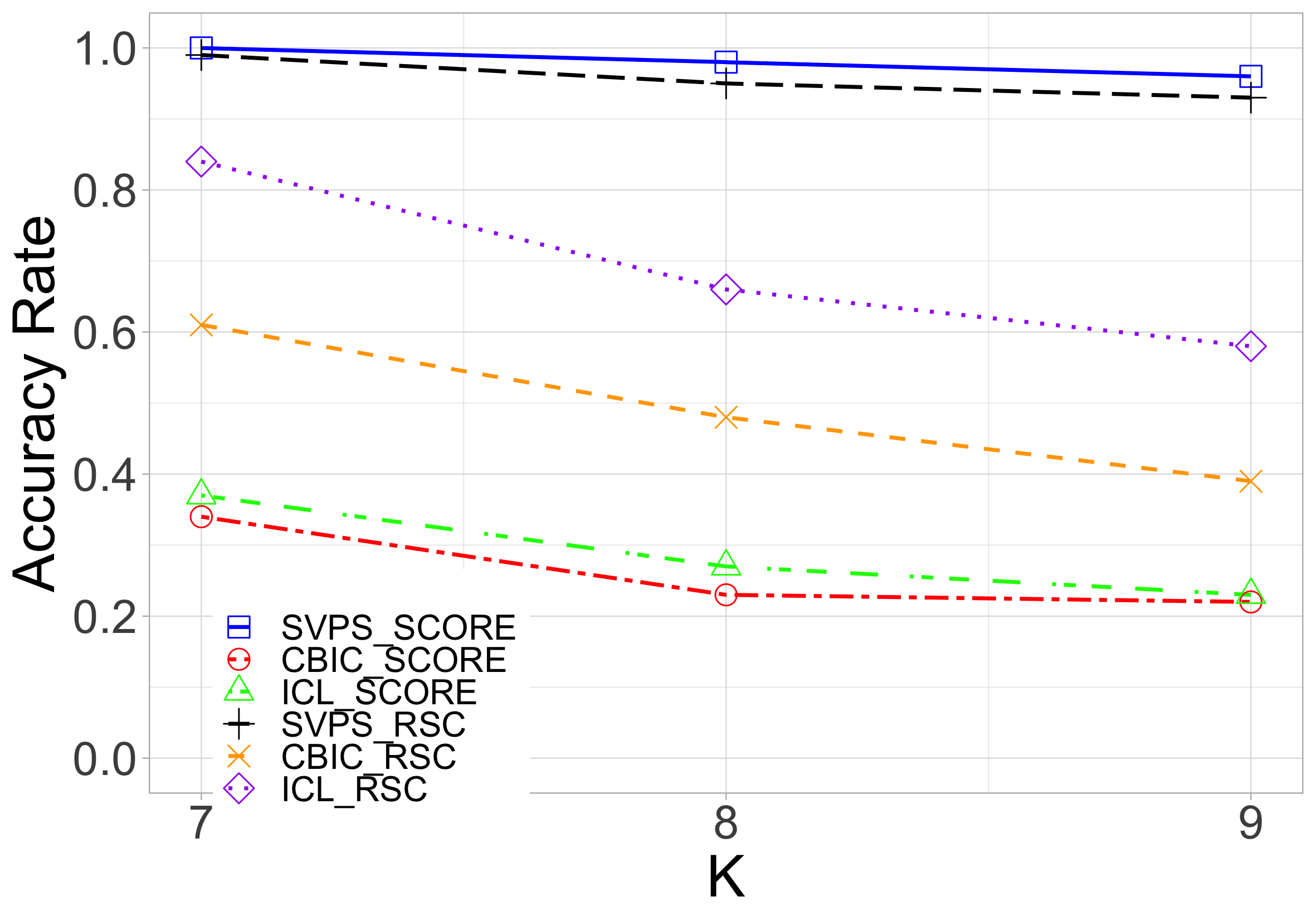}
  \caption{$\rho = 0.09$, $r=4$}
\end{subfigure}%
\begin{subfigure}{.33\textwidth}
  \centering
  \includegraphics[width = 1.0\linewidth, height=0.7 \linewidth]{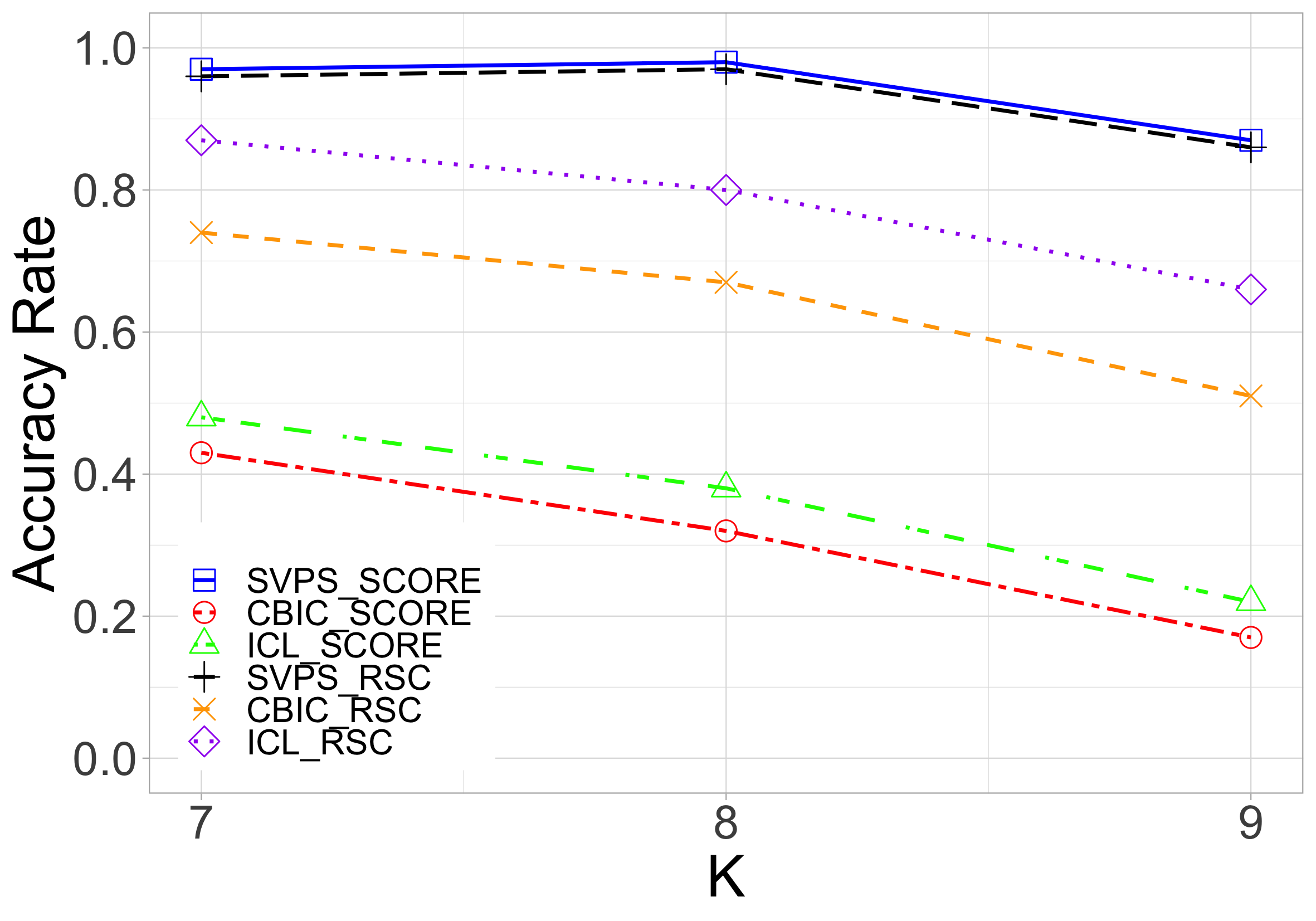}
  \caption{$\rho = 0.15$, $r=3$}
\end{subfigure}
\begin{subfigure}{.33\textwidth}
  \centering
  \includegraphics[width = 1.0\linewidth, height=0.7\linewidth]{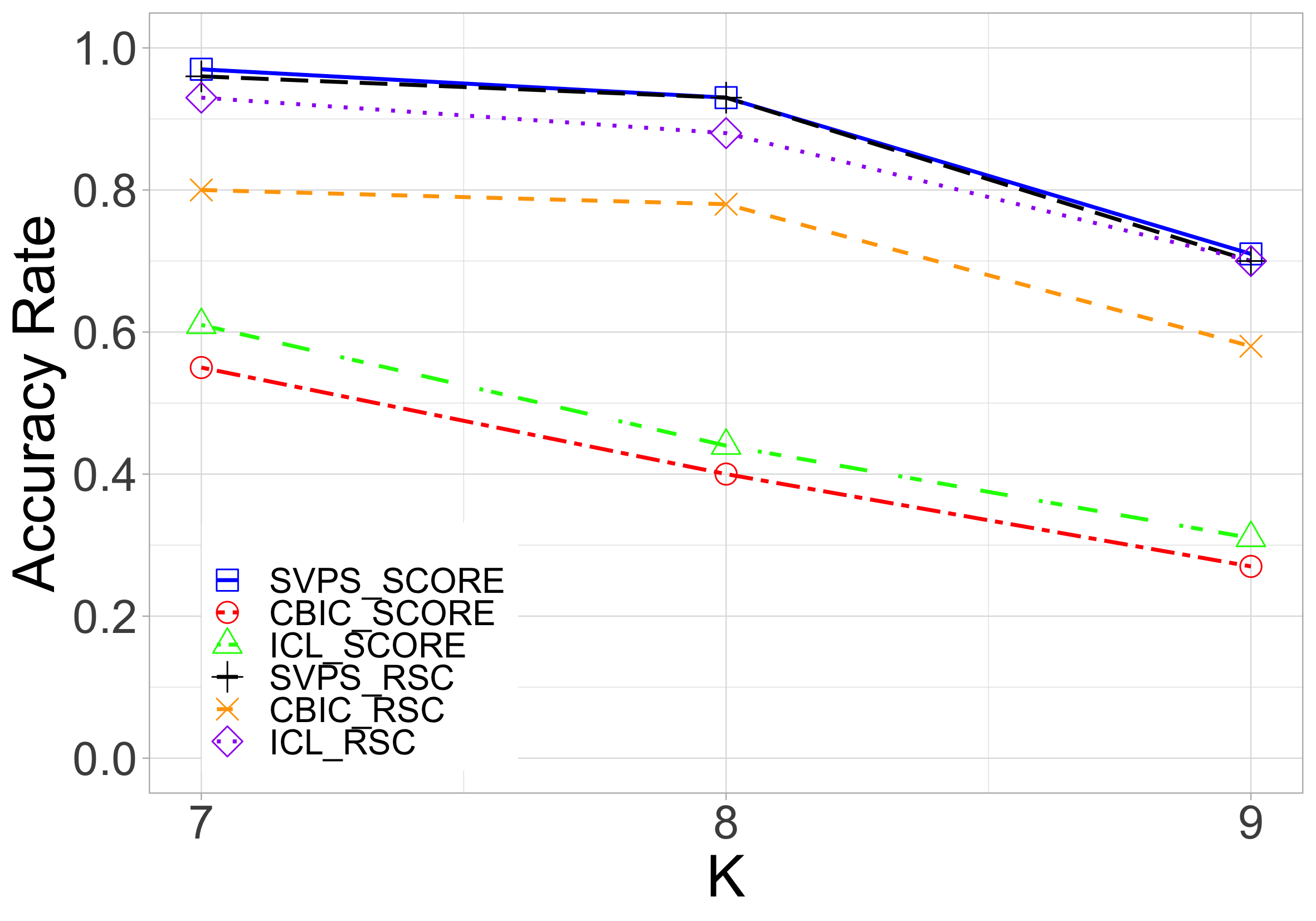}
  \caption{$\rho = 0.3$, $r=2$}
\end{subfigure}

\caption{Accuracy rate plots of experiments in Appendix \ref{sec:additional_exp}. The top and bottom rows correspond to Setting 1 and 2, respectively.}
\label{fig:additional_exp_1}
\end{figure*}

\vfill

\end{document}